\documentclass[11pt]{article}
\usepackage{graphicx} 
\usepackage{titlesec}

\usepackage{graphicx}
\usepackage{subcaption}

\usepackage{amsmath}
\usepackage{amssymb}
\usepackage{amsthm}
\usepackage{anysize}

\usepackage{bbm}

\usepackage{natbib}
\usepackage{pdflscape}

\usepackage{graphicx}
\usepackage{lscape}

\usepackage{multicol}
\usepackage{multirow}
\usepackage{scalefnt}
\usepackage{color}
\usepackage[dvips]{epsfig}
\usepackage{comment}

\usepackage{amsmath}
\usepackage{chngcntr}  

\theoremstyle{plain}
\newtheorem{thm}{Theorem}[section]
\newtheorem{cor}[thm]{Corollary}
\newtheorem{lem}[thm]{Lemma}
\newtheorem{prop}[thm]{Proposition}
\theoremstyle{definition}
\newtheorem{defn}[thm]{Definition}
\theoremstyle{definition}

\theoremstyle{remark}

\marginsize{30mm}{30mm}{30mm}{30mm}

\newcommand{\bp}{\mathbb P}

\newcommand{\R}{\mathbb R}
\newcommand{\E}{\mathbb E}
\newcommand{\F}{\mathcal F}

\newcommand{\XIM}{X^{\scriptscriptstyle \mathcal{I}}}
\newcommand{\XA}{X^{\scriptscriptstyle \mathcal{A}}}
\newcommand{\SIM}{S^{\scriptscriptstyle \mathcal{I}}}
\newcommand{\SA}{S^{\scriptscriptstyle \mathcal{A}}}

\title{
The skew Brownian motion should not be used as a risk-neutral returns process: a well-posed skew-normal alternative\footnote{This paper was presented at the 50th AMASES Conference, Rome, Italy. The authors would like to thank Paolo Pigato and Damiano Rossello for the useful comments, and particularly Sara Mazzonetto for valuable discussions and pointing out to important literature.}\\
}

\author{Michele Bufalo\footnote{\texttt{email: michele.bufalo@uniba.it}} \: \: \: Lorenzo Torricelli\footnote{\texttt{email: lorenzo.torricelli2@unibo.it}}}
\date{}

\begin{document}

\maketitle

\begin{abstract}
 Return models for risk-neutral financial valuation based on skew Brownian motions (SBMs) have been introduced about twenty years ago, and have recently enjoying growing popularity. Unfortunately, the story behind their development is one of mistakes and erroneous interpretations, beginning from the foundational misrepresentations that the prevalent financial model is based on the  It\^o-McKean SBM -- which, in fact, it is not. Besides, and more seriously, the clarification of \cite{rossello2012arbitrage} that price models with a local time in their returns, such as the SBM, are arbitrageable 
 has been, by and large, ignored.  
 
 In this paper, we try to clear the field from the confusions and misconceptions lurking in the standing option pricing literature on SBM,  by exposing all the errors we could trace in the treatment so far.
 Recognizing however the potential of the SBM skew-normal marginals for risk-neutral valuation, as a positive contribution, we  reformulate the putative SBM call pricing formula and show that, even if the SBM return model admits arbitrage,  its option pricing formula does not. The correct  Markovian SDE with skew-normal marginals is then identified, its strong well-posedness shown, and by exploiting the availability of closed formulae, an asymptotic analysis of the implied volatility surface is offered. 
 En route to our conclusions, we obtain a novel normal/skew-normal stochastic dominance property of independent interest. 

  \end{abstract}

\noindent \textbf{Keywords}: Option pricing; skew Brownian motion; skew-normal distribution; SDEs well-posedness;
local volatility; implied volatility.

\section{Introduction}

 Of the many limitations of the Black-Scholes option pricing model, returns symmetry is arguably at the top of the list. All of the now classic alternative models (e.g. exponential L\'evy models, stochastic volatility, jump diffusion)  address this issue by radically changing the Brownian noise underlying the returns. In principle it does however also makes sense to use a driver more similar to a Brownian motion, but whose probabilities of positive or negative excursions away from zero are not symmetric, but depend on a skew parameter.  The best-known process of such kind is certainly the \cite{ito1963brownian} skew Brownian motion (SBM). Although the full mathematical definition of an SBM is intrinsically operatorial -- or so it was  before \cite{harrison1981skew}  characterized it as a local time equation solution-- its heuristics are pretty clear and attractive for financial modeling. An SBM is a stochastic process that diffuses as a reflected Brownian, or its negative, on its excursions away from zero and at the time of the next visit of the origin, the sign of the next excursion is determined by an independent Bernoulli trial. 
 
 The story of the use of skew processes in finance is certainly a troubled one. As far as these authors know, it starts in \cite{corns2007skew}, and, with the hindsight of twenty years of research, looks puzzling right from the beginning. Despite what one might think, and even in contrast to what the authors themselves claim, the model proposed there is not based on the classic Ito-McKean skew Brownian motion (henceforth IMSBM). The actual chosen driver is rather a weighted sum of a Brownian motion and an independent reflecting Brownian motion; the law of this process is well-known to be skew-normal,  as discussed prominently in the monograph \cite{azzalini2013skew}, so that the name Azzalini skew-Brownian motion (ASBM)  seems appropriate. In contrast, that of an IMSBM is a mixture of half-normal distributions.  Although some connections in terms of static mixing can be established between the two (as discussed in \cite{corns2007skew}), the fact remains that we are facing two genuinely distinct processes, not just two different representations of the same process. This is the first --but certainly not the last-- conceptual  misunderstanding surrounding the SBM literature in finance.      

  Five years later, \cite{rossello2012arbitrage} showed that option pricing models based on a geometric  SBM are arbitrageable, in the  sense of the absence of an equivalent martingale measure (EMM) for the process.  The author explicitly addresses the IMSBM, although it is fair to say that  his argument applies equally well to show arbitrage in the ASBM model. A known necessary condition for the existence of an E(L)MM for a semimartingale model is the so-called ``structure condition'', that is, absolute continuity of the finite variation part with respect to the quadratic variation of the local martingale part in the Doob decomposition. 
  This property was known in various versions at least since \cite{schweizer1992martingale} and \cite{delbaen1995existence}. The heart of Rossello's argument is the  remark that if in the 
  Doob's decomposition appear both a local time and a Brownian motion, the structure condition fails. As it happens, such is the case for the IMSBM. 

 
 Despite this established result, a large and growing body of option pricing literature  based on the ASBM has emerged, arguably inspired by  \cite{zhu2018new}, an article that has drawn considerable attention. In such work the authors claim to have (a) fixed a mistake in the option pricing formula of \cite{corns2007skew}, (b) shown the existence of  an EMM for the GSBM. As it happens, the error in the formula of \cite{corns2007skew} was genuine, but unrelated and independent from the deeper problem of the violation of the arbitrage property. Therefore,  even though Zhu and He correctly address (a), in light of the line of reasoning of  Rossello, there is nothing that can be done about (b), even though the authors maintain otherwise.  This unfortunate circumstance contributed to the wrong impression that the ASBM model is indeed legitimate. To make the story even more intricate, \cite{zhu2018new} committ a new mistake in the option pricing formula, one that was not present in \cite{corns2007skew}. We will explain this error in Appendix B.

A similar situation  has recently drawn attention, for another arbitrageable model the geometric reflected Brownian motion (RBM),  proposed by \cite{veestraeten2008valuing}. That this model also does not satisfy the structure condition 
has been shown in \cite{buckner2024arbitrage}.   Moreover, the authors observe that the European pricing formula are themselves arbitrageable.
 As we shall see, this is where a line can be drawn between the RBM and the SBM models, in that the pricing formulae based on the ASBM turn out to be viable instead. 
 Similar violations of no-arbitrage for reflected models have been observed by \cite{melnikov2021modifications} when analyzing a reflected version of the Bachelier model.

 That the ASBM option pricing model admits arbitrage is not to say that SBM cannot be \emph{tout court} used in finance. On the contrary, its application is notable and of interest in factor modeling non-traded risk factors such as in interest rates (\cite{tian2018skew}) or in e.g. portfolio theory  (\cite{bufalo2022forecasting}) when statistical returns are all that matters. Also in the field of Markovian option pricing local times and SBMs can naturally appear, when singular diffusions are chosen as risk drivers  of local volatility models. For example \cite{gairat2017density} show that an SBM with a two-valued drift arises as a solution of an SDE where returns are volatility-normalized, and study its modeling performance in \cite{gairat2024extreme}.

 Be that as it may, the geometric ASBM ``option pricing model'' has become increasingly popular in recent times, and we are effectively already at a point where a wrong model crystallized into a wrong research strain. At the time we write, there appear to be at least 40 published papers citing \cite{zhu2018new} as a reference, in the vast majority of which it appears as a core one. 
 
 The present work is motivated by the recognition that matters regarding SBM models in finance need to be finally clarified. In the first part of this paper we augment the standing critique of the skew-Brownian motion geometric models in finance by  making three important new points. Firstly, we observe that the model that the literature generally discusses is not based on the IMSBM, but rather on the ASBM. That being recognized, we offer a variation of the proof of Rossello (originally devised for IMSBM returns) tailored for the ASBM, the process on which the model of \cite{corns2007skew} is actually based. Secondly, connecting to the arbitrage hierarchy in \cite{fontana2015weak}, we clarify that the profit opportunity in skew-models is not just riskless, but also increasing. Thirdly, we explicitly show one such arbitrage opportunity, which should ultimately lift all doubts about the status of these models. 
 
 In the second part, as a positive contribution, we instead seek some ways to salvage the vanilla option valuation formula \emph{inspired} by the ASBM marginals of skew-normal type. After, all one of the benefits of using a skew Brownian motion (if that was possible) for modeling asset returns  would be that its marginal laws --hence option prices-- are explicitly known, which, in turn, makes it possible to unlock the analytic structure of the all-important implied volatility surface (see e.g. \cite{rop:10}).

 In fact, it turns out that this can be done, as it is shown by appealing to the now-classic theory of no-arbitrage  characterization in terms of the vanilla payoffs. Works such as \cite{carr2005note}, \cite{dav+hob:07}, \cite{lowther2008fitting}, and \cite{roper2009relationship}, suggest that it is possible to directly connect the pricing function to the \cite{kellerer1972markov} Theorem, which establishes that a time family of laws increasing in convex order has a supporting strong Markovian martingale. The  papers above provide various equivalent checklists for a theoretical call function in order to be free of arbitrage.  By satisfying all the items on one such list, we prove that the skew-normal option pricing formula admits a supporting Markovian martingale process, and therefore that it has no arbitrage. 

 In light of this fact, we proceed to establish that the local volatility SDE with skew-normal marginals (the skew-normal local volatility model, SNLV) is (weakly) well-posed and has marginals matching the ASBM option pricing model. Our main finding is that  the skew-normal local volatility surface is singular at the origin. If, on the one hand, this poses no problem (in our setting) for strong existence and uniqueness, on the other, it  links to literature on the relationship between singular local volatility and the phenomenon of the implied volatility skew explosion, that is, the observed fact that at short maturity the slope of the ATM implied volatility becomes very steep. This fact was first observed by \cite{pig:19} and implemented in \cite{friz2020step}, (see also \cite{gairat2024extreme}), who use a step-discontinuous local volatility function on the whole space manage to generate a power law explosive skew of order $-1/2$. More recently, \cite{torricelli2025parametric} have noticed that refined asymptotics of the order $h-1/2$, $h \in (0,1/2)$ can also be obtained once a diverging in zero local volatility surface is considered. At any rate, the SNLV serves as an example that the assumptions behind \cite{pig:19} original intuition can be weakened, and in order to generate a diverging ATM implied volatility slope, a discontinuity at time zero is sufficient, as opposed to one over the whole time domain.
As a further  contribution, on our way to the proof of the no-arbitrage properties we show some (as far as we know) novel  properties of the extended skew-normal family. In particular, we identify a stochastic dominance property between some normal and extended-skew normal classes that could be potentially of interest outside the scope of this research.

In Section \ref{sec:twoSBM} we discuss the two Brownian motion appearing in finance literature and make the point that the one typically used is not Ito-McKean's. In Section \ref{sec:arbSBM} we thoroughly re-discuss why the GSBM is arbitrageable in the sense of the presence of an increasing profit, and  we  also show one such arbitrage strategy. 
Section  \ref{sec:ZhuHeOP}  introduces from scratch the skew normal option pricing formula, and shows its no-arbitrage property. In Section \ref{sec:localvol} we compute the local volatility from the skew-normal marginals and show that the resulting SDE is strongly well-posed and maches the option prices. In Section \ref{sec:implvol}  some first properties of the SNLV implied volatility surface are studied, and in particular the impact of the discontinuity of the SNLV model coefficient on the ATM skew. Conclusions are drawn in Section \ref{sec:conclusions}. In Appendix A some apparently new properties of the skew-normal distributions are shown, while in Appendix B we explicitly address the errors in \cite{zhu2018new}, leading to the incorrect conclusion that the geometric ASBM model is free of arbitrage, and point out the additional error in their option pricing formula.

\section{A tale of two skew-Brownian motions} \label{sec:twoSBM}

The skew Brownian motion was introduced by K. It\^o and H.P. McKean (\cite{ito1963brownian}, Section 17) as part of a larger treatise on certain processes associated to second order differential operators related to that of the Brownian motion. The original definition of  It\^o and McKean is based on the following path-wise construction. Let $B=(B_t)_{t \geq 0}$ a standard Brownian motion and  $\mathcal  B=\{t>0 | \, B_t=0\}$ the random set of its zeroes. We then write $\mathcal B^c= \bigcup_{i=1}^\infty \mathcal B_i$ with $\mathcal B_i$, $i =1,2, \ldots$ being ordered, disjoint open random intervals called  the \emph{excursions} from zero of $B$. Letting $e_i$, $i=1,2, \ldots$ a sequence of i.i.d. Bernoulli random variables (r.v.) with values in $\{-1, 1\}$ and parameter $\alpha \in (0,1)$, the It\^o-McKean Brownian motion $\XIM_t=(\XIM_t)_{t \geq 0}$ is pathwise constructed as the following process 
\begin{equation}\label{eq:oldIMSBM}
    \XIM_t = \left\{ \begin{array}{cc}
         e_i |B_t|,& t \in \mathcal B_i, \\
        0,   &  t \in \mathcal B.
    \end{array} \right.
\end{equation}

The signs of the excursions of  $B$ have been known since P. L\'evy's work to be Bernoulli i.i.d.s woth sign value and parameter $1/2$. Therefore, if $\alpha=1/2$ then $\XIM$ is itself a standard Brownian motion.
Heuristically, the idea behind the It\^o-McKean skew-Brownian motion (IMSBM) is thus that of taking a reflected Brownian motion and stipulating that each time it hits zero,  the motion either reflects back (continuing its trajectory) or crosses the boundary (switching its sign), each with probabilities $\alpha$ and $1-\alpha$.

The theory of the IMSBM was notably expanded  by \cite{walsh1978diffusion}, \cite{harrison1981skew}, and \cite{le2006one}. In particular \cite{harrison1981skew} provide a key representation of the IMSBM as the unique strong solution of the equation 
\begin{equation}\label{eq:LocSBM}
    \XIM_t=W_t+ (2 \alpha-1) L^{\XIM}_t, \qquad X_0=0, \qquad \alpha \in (0,1),
\end{equation}
for some Brownian Motion $W=(W_t)_{t \geq 0}$,
 started at zero, where, for a continuous local martingale $Y$,  the process   $L^Y=(L^Y_t)_{t \geq 0}$ given by
\begin{equation}\label{eq:ltime}
    L^Y_t=\lim_{\epsilon \rightarrow 0}\frac{1}{2 \epsilon}\int_0^t I_{\{|Y_u|<\epsilon\}} d \langle Y \rangle _u
\end{equation}
is the \emph{symmetric local time} of $Y$ at 0 (\cite{revuz2013continuous}, VI, Corollary 1.9). 
 The symbol $\langle \, \cdot \, \rangle$ indicates throughout the semimartingale quadratic variation. The process $\XIM$ has the desired excursion property 
 \begin{equation}\label{eq:skewexc}
     \mathbb P(\XIM_t >0)=\alpha, \qquad t>0.
 \end{equation}
Representation \eqref{eq:IMSBM} is clearly much more amenable for mathematical purposes than the original definition \eqref{eq:oldIMSBM}. The laws  of $\XIM_t$ were  determined in \cite{walsh1978diffusion} to be absolutely continuous with  densities $p_t$ of the following half-normal mixture form 
\begin{equation}\label{eq:walshdens}
p_t(x) = 
\begin{cases}
\alpha\dfrac{2}{\sqrt{2\pi t}}\,
e^{-x^2/2t}
& \text{if } x > 0, \\[14pt]
(1-\alpha)\dfrac{2}{\sqrt{2\pi t}}\,
e^{-x^2/2t}
& \text{if } x < 0.
\end{cases}
\end{equation}
For financial applications the attractiveness of the IMBSM is apparent: the distributional skew of returns is essentially embedded in \eqref{eq:skewexc}. The original thought inspiring \cite{corns2007skew} was that of transferring \eqref{eq:skewexc}--\eqref{eq:walshdens} to the risk-neutral world, i.e. under a pricing measure $\mathbb Q$, in order to obtain skewed option prices as well. Unfortunately, as we shall see, this is not possible.

Before starting the discussion of why this is so, we begin by making an even more basic remark: despite what the authors claim, the process driving the returns (whether physical or risk-neutral) in \cite{corns2007skew}, the one which also caught up in the literature, is \emph{not} an IMSBM. 

On a filtered probability space $(\Omega, \mathbb P, (\mathcal F_t)_{t \geq 0}, \mathcal F_\infty)$ satisfying the usual conditions,  \cite{corns2007skew} propose the following  asset price process $S=(S_t)_{t\geq 0}$ under $\bp$ as the following exponential 
\begin{equation}\label{eq:GSBM}
    \SA_t=S_0\exp\left( \mu t + \sigma \XA_t \right), \qquad  \mu \in \R, \, S_0, \, \sigma ,  t > 0,
\end{equation}
with $\XA=(\XA_t)_{t \geq 0}$ following a weighted normlalized sum of a standard Brownian motion and an independent reflected Brownian motion, i.e. for independent one-dimensional standard Brownian motions $W_1=(W_{1,t})_{t \geq 0}$, $W_2=(W_{2,t})_{t\geq 0 } $, then  
\begin{equation}\label{eq:IMSBM}
\XA_t= \sqrt{1-\delta^2} W_{1,t}+ \delta|W_{2,t}|, \qquad W_{1,0}=W_{2,0}=0, \quad \delta \in (-1,1),
\end{equation}
We hasten to add that the laws of $\XA_t$ are well-known, and were thoroughly investigated in A. Azzalini's work (e.g. \cite{azzalini2005skew}, \cite{azzalini2013skew}), as a particular representation the \emph{skew-normal distribution}, which we will properly introduce in Section \ref{sec:ZhuHeOP}. For now it suffices to say that the probability densities $q_t$ of $\XA_t$ are given by
\begin{equation}\label{eq:AzzSBMdens}
     q_t(x)=\frac{2}{\sqrt{t}}\phi\left(\frac{x}{\sqrt{t}} \right)\Phi\left(\frac{\delta}{\sqrt{1-\delta^2}} \frac{x}{\sqrt{t}}\right), \qquad x \in \mathbb R, \delta \in (-1,1)
\end{equation}
with $\phi$, $\Phi$ respectively the standard Gaussian density and cumulative distribution functions.

In Proposition 2.1 of \cite{corns2007skew}, it is claimed that $\XA$ \emph{``is distributed as an It\^o-McKean skew Brownian motion''} , which is clearly impossible quite simply because the marginals densities \eqref{eq:walshdens} and \eqref{eq:skewexc} are different.  In order to try and understand  how this confusion has possibly arisen, one might want to look into the discussion   in  \cite{corns2007skew} right before Proposition 2.1. There the authors try to connect the concept of skew Brownian motions with a mechanism of  generation of asymmetric random variables from elliptical families, by claiming that  \emph{``In the construction of Itô–McKean skew Brownian motion, no reference is made to how the
probability weighting $\alpha$ is made. Any formulation of $\alpha$ is consistent with It\^o–McKean skew
Brownian motion provided that  $0 \leq \alpha \leq 1$''} implying that $\alpha$ will be soon changed to a function, and then later that \emph{``any form of probability weighting is consistent with It\^o-McKean as long  as $0 \leq  \alpha \leq 1$ and $\alpha=\alpha(v)$ can be formulated as the conditional distribution of''} [a random variable].
The role of the weighting function $\alpha$, in the authors' intentions, is precisely that of making the marginal law \eqref{eq:AzzSBMdens} emerge.

However, in contrast to the claim, by looking at \eqref{eq:oldIMSBM} one sees that It\^o and McKean exactly specify how  the probability weighting $\alpha$ is made. It is made using the constant function $\alpha(v)=\alpha \in (0,1)$,  corresponding to a pure Bernoulli event, i.e. no non-trivial probability weighting at all. This is the one and only formulation consistent with the process introduced in \cite{ito1963brownian}. The authors seem instead to believe that taking ``mixtures''  of the IMSBM laws over their normalizing constant, which from a distributional perspective would make perfect sense, would not alter the process, whereas in fact it does. Put it differently, since the solution of \eqref{eq:LocSBM} is strongly unique --and hence unique in law-- changing $\alpha$ to be anything other than a constant would result in a process that no longer satisfies  \eqref{eq:IMSBM} and thus, by \cite{harrison1981skew} characterization, one that  no longer is an IMSBM.

The mathematical error in the proof of Proposition 2.1 of \cite{corns2007skew} is the following. After applying Tanaka's Lemma it is correctly proven that, almost surely\begin{equation}\label{eq:wrong}\XA_t=\tilde W_t + \delta L^{W_2}_t, \qquad \delta \in (-1,1), \end{equation}where $\tilde W$ is a Brownian motion correlated with $W_2$ and attained used L\'evy's characterization Theorem. Furthermore, it is correctly observed that $L_t^{\XIM}   \stackrel{d}{=}L_t^{W_2}$,   but then  it is concluded that \eqref{eq:wrong} is \emph{``equivalent''}  to that of \eqref{eq:LocSBM} when $\delta=2 \alpha-1$. The expression used lacks precise mathematical meaning, but if the claimed \emph{equivalence} is taken to mean that the two equations have same solution (whatever the probabilistic sense),  this is incorrect. Indeed, equality in law of the summands, which does hold true for the right hand sides of  \eqref{eq:wrong} and \eqref{eq:LocSBM}, is not sufficient for equality in distribution of the sum: their joint law must also be the same. As a matter of fact, with the hindsight provided by the discussion above, the pairs $(\tilde W_t, L^{W_2}_t)$ in \eqref{eq:wrong} and $(W_t, L^{\XIM}_t)$ in \eqref{eq:LocSBM} 
have different distributions.

Having clarified that $\XA$ and $\XIM$ are different processes, the intuition of \cite{corns2007skew} that $\XA$ produces skewed returns is still correct, and a form  of the excursion property \eqref{eq:skewexc} holds for \eqref{eq:IMSBM} as well. Indeed it can be calculated (using the Owen $T$ function, \cite{azzalini2013skew}) that
\begin{equation}\mathbb P(\XA_t >0)=\frac{2}{\sqrt{t}} \int_0^\infty \phi\left( \frac{x}{\sqrt{t}}\right)\Phi\left(\frac{\delta}{\sqrt{1-\delta^2} }\frac{x}{\sqrt{t}} \right)dx =\frac{1}{2}+ \arctan\left(\frac{\delta}{\sqrt{1-\delta^2}} \right) \pi^{-1}.\end{equation}
Therefore returns can be positively or negatively skewed by acting on $\delta \in (-1,1)$, and $\XA$ can then equally well be used (in principle) for financial modeling.  We will refer to $\XA$ as the \emph{Azzalini skew Brownian motion} (ASBM), a name already proposed by \cite{corns2007skew}. 
 A crucial difference between the two models, that does affect arbitrage considerations is that the IMGSBM is Markovian with respect to its own filtration, while the AGSBM is not, as it needs two separate state variables to have its dynamic evolution correctly specified.

We will thus distinguish the financial models treated: the specification  \eqref{eq:GSBM}--\eqref{eq:LocSBM} will be called the \emph{It\^o-McKean geometric skew-Brownian motion} (IMGSBM) model, and  \eqref{eq:GSBM}--\eqref{eq:IMSBM},  the  Azzalini geometric skew-Brownian motion (AGSBM)  model, and will be indicated respectively by $\SIM=(\SIM_t)_{t \geq 0}$ and $\SA=(\SA_t)_{t \geq 0}$.  We collectively refer to the two classes as geometric skew Brownian motion (GSBM) models. Except for the next section the focus will be on Corns and Satchell suggestion $\SA$, which, to be sure, is the specification of much the literature we consulted.  Incidentally, a possible exception is the reference \cite{rossello2012arbitrage}, that is very important for our goals, and which we will need to partly re-discuss in light of the point just made. In any case, clarifying that there are two distinct models involved is on our opinion of great importance to avert possible future ambiguities.


\section{Arbitrage in  geometric skew Brownian motion models}\label{sec:arbSBM}

 The problems of SBMs model in the context of risk neutral pricing theory were noticed already by  \cite{rossello2012arbitrage} not long after \cite{corns2007skew} was published.  Rossello proved the absence of an EMM for  $\SIM$, going through the route of the violation of the necessary structure condition of \cite{schweizer1992martingale} and \cite{delbaen1995existence} as an effect of the presence of a local time in the dynamics of $\XIM$. Since we are here mostly interested to the AGSMB model $\SA$, in the following we first articulate Rossello's remarks to show the violation of the NFLVR arbitrage in such model. Secondly, in light of recent advances in the theory of arbitrage, we show how the logic in \cite{rossello2012arbitrage} is indeed  far more consequential than the author probably thought it was, meaning that even weaker forms of no-arbitrage are violated. 



We recall in the following the \emph{structure condition} for semimartingales. For a pair of one-dimensional semimartingales $A=(A_t)_{t \geq 0}$ $B=(B_t)_{t \geq 0}$ we say that $A$ is \emph{predictably absolutely continuous} with respect to $B$, and write the usual expression $ dA \ll d B$, if there exists a predictable processes $\alpha=(\alpha_t)_{t \geq 0}$ such that $A$ can be expressed as the stochastic integral \begin{equation}\label{eq:structre}A_t=\int_0^t \alpha_u d B_u.\end{equation} Predictability of the integrand is a slightly stronger requirement than path-wise absolute continuity of the measures $d A$ and $d B$.
\begin{defn}\label{def_struct}
Let $S=(S_t)_{t \geq 0}$, be a continuous semimartingale with Doob decomposition  $S=S_0+F+M$, for some random variable $S_0$ and with $F=(F_t)_{t \geq 0}$ a semimartingale of finite variation and $M=(M_t)_{t \geq 0}$  a local martingale. The process $S$ is said to satisfy the \emph{structure condition} if $ d F  \ll  d \langle  M \rangle$.   \end{defn}

This definition first appeared in \cite{delbaen1995existence} and \cite{schweizer1992martingale}, in a slightly stronger form in the second reference. The structure condition  is a  \emph{necessary} condition for an EMM to exist. While exploring the hierarchy of arbitrage conditions,  it was later found out by  \cite{fontana2015weak},  that a violation of the structure condition is equivalent to the presence of an \emph{increasing} arbitrage. 

We recall that on $(\Omega, \mathbb P, (\mathcal F_t)_{t \geq 0},\mathcal \F_\infty)$ an  \emph{admissible trading strategy}  in the semimartingale asset price $S$ with Doob decomposition as in \eqref{def_struct}, is a predictable process  $H=(H_t)_{t \geq 0}$, square-integrable with respect to 
$\langle M \rangle$, absolutely integrable with respect to $F$, and such that the gains process $\int_0^t H_u d S_u$ is   a.s. bounded from below by a negative constant. It is of relevance here  the class of trading strategies that produce a riskless \emph{increasing} gain.

\begin{defn} Let $S=(S_t)_{t \geq 0}$ be a semimartingale on a filtered space $(\Omega, \mathcal F, \mathbb P, (\mathcal F_t)_{t \geq 0})$ representing a financial asset.  We say that an admissible trading strategy $H$ is an \emph{increasing profit} if $G=(G_t)_{t \geq 0}$ with $G_t=\int_0^t H_u d S_u$ is  an almost-surely non decreasing process such that for some $T>0$ it holds that $\mathbb P\left( G_T>0 \right)>0.$
If such $H$ exists then $S$ has the \emph{increasing profit} $H$. Otherwise we say that the \emph{no increasing profit} condition (NIP) holds for $S$. 
\end{defn}
The NIP condition, that is, the absence of an increasing arbitrage opportunity, is among the weakest arbitrage conditions. Besides the extreme financial severity of one such kind of arbitrage, in technical terms the concept of \emph{weakness} relates to the fact that proving NIP amounts to a check on the semimartingale structure of the asset, 
whereas classic no-arbitrage conditions typically involve the study of the full filtration.
It is a key contribution of \cite{fontana2015weak},  that the structure condition is equivalent to NIP. Combining this fact with the contribution of Rossello we are led to the following theorem. 

\begin{thm}\label{thm:noarbthm}
Let $(\Omega,  \mathbb P, (\mathcal F_t)_{t \geq 0}, \mathcal F_\infty)$ a filtered probability space carrying three standard Brownian motions $W, W_1, W_2$, with $W_1$ and $W_2$ mutually independent, and let $\SIM$ and $\SA$ be defined respectively by \eqref{eq:GSBM}--\eqref{eq:LocSBM} and  \eqref{eq:GSBM}--\eqref{eq:IMSBM} with parameters $\sigma_\mathcal I, \mu_\mathcal I, \delta_\mathcal I: =2 \alpha-1 $ and $\sigma_\mathcal A, \mu_\mathcal A, \delta_\mathcal A$ respectively. Assume further $\delta_\mathcal I, \delta_\mathcal A \neq 0.$ 
Both models violate the NIP property, and thus admit an increasing arbitrage.
\end{thm}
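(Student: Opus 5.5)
By the equivalence in \cite{fontana2015weak} between NIP and the structure condition of Definition \ref{def_struct}, it suffices to show that neither $\SIM$ nor $\SA$ satisfies $dF \ll d\langle M\rangle$. The plan is therefore to compute the Doob decomposition of each price process explicitly via It\^o--Tanaka, and then to exhibit a set of times of positive $d F$-measure but zero $d\langle M\rangle$-measure.

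\emph{Step 1: Doob decompositions.} For $\SA$ apply Tanaka's formula to $|W_{2}|$, which gives $|W_{2,t}| = \int_0^t \mathrm{sgn}(W_{2,u})\,dW_{2,u} + L^{W_2}_t$. Hence
\[
\XA_t = \sqrt{1-\delta_\mathcal A^2}\,W_{1,t} + \delta_\mathcal A \int_0^t \mathrm{sgn}(W_{2,u})\,dW_{2,u} + \delta_\mathcal A L^{W_2}_t ,
\]
whose martingale part $\tilde W$ is a Brownian motion with $\langle \tilde W\rangle_t = t$. Applying It\^o's formula to $\SA_t = S_0 \exp(\mu_\mathcal A t + \sigma_\mathcal A \XA_t)$ then gives
\[
d\SA_t = \SA_t\bigl(\mu_\mathcal A + \tfrac12\sigma_\mathcal A^2\bigr)dt + \sigma_\mathcal A\delta_\mathcal A \SA_t\, dL^{W_2}_t + \sigma_\mathcal A \SA_t\, d\tilde W_t .
\]
Thus $dF_t = \SA_t(\mu_\mathcal A+\tfrac12\sigma_\mathcal A^2)\,dt + \sigma_\mathcal A\delta_\mathcal A\SA_t\,dL^{W_2}_t$ and $d\langle M\rangle_t = \sigma_\mathcal A^2(\SA_t)^2\,dt$. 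The same computation applied to \eqref{eq:LocSBM} yields the analogous decomposition for $\SIM$, with $\delta_\mathcal I L^{\XIM}$ in place of $\delta_\mathcal A L^{W_2}$ and $W$ as the martingale driver.

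\emph{Step 2: mutual singularity.} The local time $L^{W_2}$ (resp.\ $L^{\XIM}$) increases only on the zero set $Z=\{u: W_{2,u}=0\}$ (resp.\ $\{u : \XIM_u = 0\}$). By Fubini, $\mathbb E\int_0^\infty I_{\{W_{2,u}=0\}}\,du = 0$, so $Z$ has Lebesgue measure zero a.s. Moreover $L^{W_2}_T>0$ a.s.\ for every $T>0$. Since $\SA>0$, the measure $d\langle M\rangle$ is equivalent to Lebesgue measure. Therefore the restriction of $dF$ to $Z$ is $\sigma_\mathcal A\delta_\mathcal A\SA\,dL^{W_2}$, which is a.s.\ nonzero precisely because $\delta_\mathcal A\neq0$ and $\sigma_\mathcal A>0$, while $d\langle M\rangle(Z)=0$. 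If $F = \int \alpha\, d\langle M\rangle$ held for some predictable $\alpha$, we would get $dF(Z)=0$, a contradiction. So the structure condition fails, and by \cite{fontana2015weak} NIP fails for both models.

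\emph{Step 3: explicit increasing profit (optional).} To make the statement concrete, one can take $H_t = \mathrm{sgn}(\delta_\mathcal A)\, I_{Z}(t)/\SA_t$, localized if needed to keep it bounded. Its gains are $\int_0^t H\,d\SA = |\delta_\mathcal A|\sigma_\mathcal A L^{W_2}_t$. Here the $dt$-part and the martingale part vanish because $Z$ is Lebesgue-null: the stochastic integral has quadratic variation $\int I_Z\,du = 0$. These gains are nondecreasing and positive at any $T>0$.

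\emph{Main obstacle.} The delicate point is predictability of $H$ in Step 3 with respect to the filtration. The indicator $I_{\{W_{2,t}=0\}}$ is adapted only when $W_2$ is observable in $(\mathcal F_t)$. This holds in the given space, which carries $W_2$. For the Fontana route it also holds, since the argument only requires the failure of the structure condition with respect to $(\mathcal F_t)$, and Step 2 shows that failure directly.
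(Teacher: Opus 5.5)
Your proposal is correct and follows the paper's proof essentially step by step. Tanaka's formula for $|W_2|$ and It\^o's formula give the Doob decomposition; $dL^{W_2}$ is carried by the Lebesgue-null zero set of $W_2$ while $d\langle M\rangle$ is equivalent to Lebesgue measure, so the structure condition fails and NIP fails by \cite{fontana2015weak}.

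The only differences are that you treat $\SIM$ by the same computation rather than citing \cite{rossello2012arbitrage}, and that your optional Step 3 reproduces the explicit increasing profit the paper gives separately in Proposition \ref{prop:arbitrage}(i).
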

\begin{proof}
    The violation of the structure condition for $\SIM$ was directly addressed in \cite{rossello2012arbitrage}, 
    Proposition 2. The same claim can be verified for $\SA$, in a similar fashion. We remove the subscript from parameters for ease of notation. 
Applying Tanaka's Lemma to $|W_2|$, we see that $\SA$ satisfies the following SDE
\begin{align}
    \SA_t &=S_0\exp\left( \mu t + \sigma \tilde W_t+\delta\sigma L^{W_2}_t \right)
\end{align} where \begin{equation}\label{eq:wtilda}\tilde W_t=\sqrt{1-\delta^2} W_{1,t} +\delta\int_0^t \text{sgn}(W_{2,u})d W_{2,u}\end{equation} is a Brownian motion by L\'evy's Theorem, since $\langle \tilde W \rangle_t=t$. Applying It\^o's Lemma now leads to
\begin{equation}\label{eq:ito} 
\SA_t=S_0+(\mu+\sigma^2/2)  \int_0^t   \SA_u d u +\sigma  \int_0^t \SA_u  d \tilde W_{u} + \sigma \delta \int_0^t \SA_ud L^{W_2}_u.
\end{equation}
Now write $\SA=F+M+S_0$ the It\^o-Doob decomposition of $\SA$ with $F=(F_t)_{t\geq 0}$  the finite variation part and $M=(M_t)_{t \geq 0}$ the local martingale part. We have 
\begin{equation}
    F_t=(\mu+\sigma^2/2)  \int_0^t   \SA_u d u +\sigma \delta \int_0^t \SA_u d L^{W_2}_u, \qquad \langle M \rangle_t=\sigma^2  \int_0^t (\SA_u)^2  d u  .
\end{equation}
  
    Let $\mathcal Z(\omega)=\{t \geq 0 \, | \, W_{2,t}(\omega)=0\}$.  It follows by \cite{protter2012stochastic}, Theorem IV.69, that supp\, $ dL^{W_2}=\mathcal Z$, and thus for all $T>0$ it holds that 
    \begin{equation}
    \int_0^T
I_{\mathcal Z}dF_u
=
\sigma\delta
\int_0^T
I_{\mathcal Z}
\SA_u dL_u^{W_2}
=
\sigma\delta
\int_0^T
\SA_u dL_u^{W_2}.
\end{equation}
implying $dF(\mathcal Z \cap [0,T])\neq 0$ a.s. But now since $d \langle M \rangle$ is absolutely continuous with respect to the Lebesgue measure, and Leb$(\mathcal Z )=0$ a.s., it also holds $d \langle M \rangle(\mathcal Z \cap [0,T])=0$ a.s., so that
    $dF \not \ll  d \langle  M \rangle$ and the structure condition fails. By
\cite{fontana2015weak}, Theorem 3.1 and the ensuing discussion, the NIP condition fails as well.
   \end{proof}
We can already conclude that arbitrage violations are very serious in GSBM models. In order to clear the field of all possible remaining doubts, we now show the failure of NIP directly, that is, by explicitly exhibiting an increasing arbitrage. The example is constructed in a way similar to \cite{fontana2015weak}, Example 7.1, and \cite{buckner2024arbitrage} in the case of a reflecting model. In case of a reflection the arbitrage is apparent: buy (sell) if and when the price hits the minimum (maximum), and since there is only one way it could go next, sell (buy back) at a profit immediately afterwards. In the case of an SBM   the intuition is more nuanced. However, theoretically, what really is at play is not so much the presence of a reflection, but rather the complementary nature of Brownian and local time measures in the SBM dynamics, which can be exploited by one-touch strategies. These produce zero gains on a Brownian path, but are instead carried by the local time support. 

\begin{prop}\label{prop:arbitrage} 
In the setup of Theorem \ref{thm:noarbthm}, the following holds:
\begin{itemize} 
\item[(i)]
 the process $H^{\mathcal A}=(H^{\mathcal A}_t)_{t \geq 0}$ given by $H^{\mathcal A}_t= \frac{ \mbox{{\upshape sgn}}(\delta_\mathcal A)}{\SA_t} I_{\{t \in \mathcal Z_\mathcal A\}}$, and $\mathcal Z_\mathcal{A}$  the random set  $\mathcal Z_\mathcal A(\omega)=\{t >0 \, | \, W_{2,t}(\omega)=0 \}$, is an increasing profit for $\SA$;
\item[(ii)] the process $H
^{\mathcal{I}}=(H^{\mathcal{I}}_t)_{t \geq 0}$  given by $H^{\mathcal{I}}_t=  \mbox{{\upshape sgn}}(\delta_\mathcal I)e^{-\mu t-\log S_0}I_{\{t \in \mathcal Z_{\mathcal{I}}\}}$, and  $\mathcal Z_{\mathcal{I}}(\omega)=\{t >0 \, | \, \log(\SIM_t(\omega)/S_0e^{\mu t})=0 \}$, is an increasing profit for $\SIM$.
\end{itemize}
\end{prop}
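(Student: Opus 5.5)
We verify directly that each integrand is admissible and that its gains process is nondecreasing and eventually positive with positive probability. The computation rests on the Itô--Doob decompositions already obtained: \eqref{eq:ito} for $\SA$, and its analogue for $\SIM$ derived from \eqref{eq:LocSBM}. The mechanism is that the indicator of the zero set annihilates every $du$ and $d\tilde W$ contribution, so only the local-time part of the drift survives.

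\emph{Case (i).} First, $H^{\mathcal A}$ is predictable. The set $\{(t,\omega): W_{2,t}(\omega)=0\}$ is predictable because $W_2$ is continuous and adapted, and $1/\SA$ is continuous and adapted. Next we insert $H^{\mathcal A}$ into \eqref{eq:ito}:
\[
G_t=\int_0^t H^{\mathcal A}_u d\SA_u = \mbox{sgn}(\delta)\Big[(\mu+\sigma^2/2)\int_0^t I_{\mathcal Z_\mathcal A}du+\sigma\int_0^t I_{\mathcal Z_\mathcal A}d\tilde W_u+\sigma\delta\int_0^t I_{\mathcal Z_\mathcal A}dL^{W_2}_u\Big].
\]
The first term vanishes because Leb$(\mathcal Z_\mathcal A)=0$ a.s. The stochastic integral has quadratic variation $\sigma^2\int_0^t I_{\mathcal Z_\mathcal A}du=0$, so it is a continuous local martingale with zero bracket and hence identically zero. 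In the same step we check integrability: $\int (H^{\mathcal A})^2 d\langle M\rangle=\sigma^2\int I_{\mathcal Z_\mathcal A}du=0$, and $\int |H^{\mathcal A}|\,|dF|\le (|\mu|+\sigma^2/2)\cdot 0+\sigma|\delta| L^{W_2}_t<\infty$. For the third term, we use that the support of $dL^{W_2}$ is contained in $\mathcal Z_\mathcal A$ (cited in the proof of Theorem \ref{thm:noarbthm}). The one point $t=0$, excluded from $\mathcal Z_\mathcal A$, carries no mass since $L^{W_2}$ is continuous. This gives $G_t=\sigma|\delta|L^{W_2}_t$, which is continuous, nondecreasing and nonnegative, so admissibility holds with lower bound $0$. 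Finally, $L^{W_2}_T>0$ a.s. for every $T>0$, because $L^{W_2}_T\stackrel{d}{=}|W_{2,T}|$ by Lévy. Therefore $\mathbb P(G_T>0)=1$.

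\emph{Case (ii).} The same scheme applies once we have the semimartingale decomposition of $\SIM$. Set $Y_t:=\log(\SIM_t/S_0e^{\mu t})=\sigma\XIM_t$, and note that $\mathcal Z_{\mathcal I}$ is the zero set of $\XIM$. By Itô's formula and \eqref{eq:LocSBM},
\[
d\SIM_t=\SIM_t\big[(\mu+\sigma^2/2)dt+\sigma dW_t+\sigma\delta\, dL^{\XIM}_t\big].
\]
On $\mathcal Z_\mathcal I$ we have $\SIM_t=S_0e^{\mu t}$, so $H^{\mathcal I}_t\SIM_t=\mbox{sgn}(\delta)I_{\mathcal Z_\mathcal I}$. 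The prescribed deterministic normalization $e^{-\mu t-\log S_0}$ was chosen exactly for this, and it makes predictability immediate. For the Lebesgue and Brownian terms we need Leb$(\mathcal Z_\mathcal I)=0$ a.s. This follows from the occupation formula, since $\langle \XIM\rangle_t=t$: $\int_0^t I_{\{\XIM_u=0\}}du=\int_{\R} I_{\{0\}}(a)L^a_t\,da=0$. Alternatively it follows from the absolute continuity of the marginals \eqref{eq:walshdens} together with Fubini. With this, both terms vanish as in case (i), and $G_t=\sigma|\delta|L^{\XIM}_t$ because $dL^{\XIM}$ is supported on the zero set of $\XIM$ (\cite{revuz2013continuous}).

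The step needing most care is the final positivity claim for $L^{\XIM}_T$. I would obtain it from \eqref{eq:LocSBM} as follows. If $L^{\XIM}_T=0$ on an event of positive probability, then on that event $\XIM=W$ on $[0,T]$. But $W$ starts at $0$ and accumulates zeros immediately, so its local time is positive for every $T>0$ a.s., and on the event it equals $L^{\XIM}$. This is a contradiction. A more direct route is the known law of $L^{\XIM}_T$, a scaled half-normal. The symmetric versus right local time convention does not affect either the support or the positivity argument.
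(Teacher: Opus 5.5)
Your proof is correct and follows essentially the same route as the paper: you substitute $H$ into the It\^o--Doob decomposition, kill the $du$ and Brownian terms because the zero set is Lebesgue-null, and reduce the gains to $\sigma|\delta|$ times the local time at zero, whose positivity comes from L\'evy's theorem. The differences are minor---you dispose of the $d\tilde W$ term via its vanishing bracket rather than the It\^o isometry, handle the point $t=0$ explicitly, and obtain $L^{\XIM}_T>0$ a.s.\ by a pathwise comparison with $W$ (giving the stronger $\mathbb P(G_T>0)=1$) instead of citing $L^{\XIM}_t\stackrel{d}{=}L^{W}_t$.
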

\begin{proof}
\emph{(i)}. Since $(\SA, W_{2})$ is continuous and $\mathcal F$-adapted, it is also predictable;  
therefore $H^\mathcal A$ is predictable since it is obtained by applying a measurable function (an inverse function multiplied by the indicator function of the level set of a measurable function) to a predictable process.
Using \eqref{eq:ito} the gains process from $H^\mathcal A$ is given by
\begin{align}\label{eq:gains}
    G^\mathcal A_t:=\int_0^t H^\mathcal A_u d \SA_u= \text{sgn}(\delta_\mathcal A)   & \Bigg((\mu_\mathcal A + \sigma_\mathcal A^2/2)  \int_0^t   I_{\mathcal Z_\mathcal A} d u +\sigma_\mathcal A  \int_0^t I_{\mathcal Z_\mathcal A}  d \tilde W_{u} + \sigma_\mathcal A \delta_\mathcal A \int_0^t I_{\mathcal Z_\mathcal A} d L^{W_2}_u \Bigg)
\end{align}
and since, 
as already observed, by \cite{protter2012stochastic}, Theorem IV.69,  supp $d L_t^{W_2}=\mathcal \{t>0 \, | \, W_{2,t}=0\} = \mathcal Z_\mathcal A$. Since  Leb$(\mathcal Z_\mathcal A)=0$ a.s. the first pathwise Riemann integral in \eqref{eq:gains} equals zero, and so does the stochastic integral in $d \tilde W$, because if $D$ is any square-integrable process supported in a zero Lebesgue measure set a.s., then by the It\^o isometry $\E[(\int_0^tD_u d W_u)^2 ]=\E[\int_0^tD_u^2 d u]=0$ entailing $\int_0^tD_u d W_u=0$.  
Therefore  \eqref{eq:gains} reduces to
\begin{align}\label{eq:hloc}
    G^\mathcal A_t=   \sigma_\mathcal A   |\delta_\mathcal A|L^{W_2}_t.
\end{align}
 The integrability conditions on $H^\mathcal A$  are thus in particular satisfied and the strategy is admissible.
 
Now, the prefactor in \eqref{eq:hloc} is positive, and the local time is well-known to be increasing: one way of seeing this is to observe that  for all $\epsilon>0$ the expressions on the right hand of $\eqref{eq:ltime}$ are increasing, and monotonicity is preserved by the pointwise limit. Finally  by a theorem of P. L\'evy (\cite{bjork2019pedestrian}, Proposition 3.5), $L^{W_2}_t$ has half-normal distribution. Since such law  is absolutely continuous, it follows $\bp(
    G^\mathcal A_t>0)>0$, for all $t>0$, in particular showing that $H$ is an increasing profit.
 \emph{(ii)}. The strategy $H^{\mathcal{I}}$ is obviously adapted, and since from \eqref{eq:GSBM}--\eqref{eq:LocSBM} it holds that $\mathcal Z=\{t \geq 0 \, | \,\XIM_t=0 \}=\{t \geq 0 \, | \,\SIM_t=S_0 e^{\mu t}\}$,  we have, similar to \eqref{eq:gains}, that
\begin{align}\label{eq:gains2}
    G^{\mathcal{I}}_t&:=\int_0^t H^{\mathcal{I}}_u d \SIM_u\nonumber \\ &  = \text{sgn}(\delta_\mathcal I)  \Bigg((\mu_\mathcal I + \sigma_\mathcal I^2/2)  \int_0^t   I_{\mathcal Z_{\mathcal{I}}} d u +\sigma_\mathcal I  \int_0^t   I_{\mathcal Z_{\mathcal{I}}} d  W_{u} + \sigma_\mathcal I \delta_\mathcal I \int_0^t I_{\mathcal Z_\mathcal{I}} d L^{\XIM}_u \Bigg).
\end{align}
The first two integrals vanish again, and again the third is taken on the support of $d L^{\XIM}$ and the strategy is admissible. Now it suffices to observe that $L^{\XIM}_t \stackrel{d}{=}L^{W}_t$ (e.g. \cite{rossello2012arbitrage}, p. 51) so that $G^{\mathcal{I}}_t$ matches in distribution \eqref{eq:hloc}, and the claim follows using the same arguments as in \emph{(i)}.
\end{proof}


Comparing  points \emph{(i)} and \emph{(ii)} of Proposition \ref{prop:arbitrage}   the difference between the arbitrage strategies for $\SA$ and $\SIM$ is that the former requires market information beyond the asset price,  whereas the latter can be implemented by observing $S$ alone. This is an effect of the different Markov properties of the two processes.

In the next section we shall show that even though the AGSBM model violates no-arbitrage, a continuous Markovian martingale with same marginals as $\SA$ exists, and it is a classic one-dimensional Brownian integral: financially, a local volatility model.



\section{The skew-normal option pricing formula}\label{sec:ZhuHeOP} 


In this section we first calculate from scratch the call payoff expectation of the skew-normal marginal laws, which overcomes some new errors  present in the equation of \cite{zhu2018new} formula correcting that of \cite{corns2007skew} (see  Appendix B). We then proceed to show that such a formula does not indeed admit arbitrage.

A no-arbitrage  call option price $C(T ,K, S, r)$ with strike $K$, maturity $T$ written on an asset of current price $S$ and for a prevailing risk-free rate on the market, is a function $C :D \subseteq \R_+ \times \R \times \R \times [0,\infty) \rightarrow \R_+ $, with $D$ a connected subset of $\R_+$, such that
\begin{enumerate}
    \item[A1.] $C(\cdot, K, S, r)$ is continuous and  increasing for all $K$, $S$, $r$;
    \item[A2.] $C(T, \cdot, S, r)$ is convex for all $T,S, r$;
    \item[A3.] $\lim_{K \rightarrow \infty} C(T,K,S,r)=0$ for all $T,S,r$;
    \item [A4.] $\lim_{T \rightarrow 0}C(T,K,S,r)=(S-K)^+$ for all $K, S$;
    \item[A5.]  $(S-Ke^{-r T})^+ \leq C(T,K,S, r) \leq  S$ for all $T, K, S$.
\end{enumerate}
 This sufficient, and close to necessary, characterization is taken from  \cite{rop:10}, extended to the case of possibly positive rates. Several equivalent definitions exist, in slightly different set-ups: see \cite{lowther2008fitting},
 \cite{dav+hob:07}, \cite{carr2005note}. Any of these sets of assumptions imply that $S_T$ is increasing in the convex order, and hence  by \cite{kellerer1972markov} that a  process  $S=(S_t)_{t \geq 0}$ having law $S_t$, and such that its discounted process $\tilde S=(e^{-r t} S_t)_{t \geq 0}$ is a martingale exists.

As in recent option pricing literature (\cite{azz+tor:23}, \cite{car+tor:21}),
 we turn the modeling paradigm around. Instead of deducing the option prices from a model, we start from a set of no option prices, try to ascertain whether
they are free of arbitrage, and then construct a martingale supporting them.

   %

For the rest of the paper we will only discuss the AGSBM model and thus for notational convenience drop the superscript $\mathcal A$ from the asset price dynamics.

Let $\Phi$ and $\phi$ be respectively the standard normal cumulative distribution function (CDF) and probability distribution function (PDF). Introduce two-parameter extended-skew normal as the absolutely continuous distribution with PDF (\cite{azzalini2013skew}, Section 2.2)
given by \begin{equation}\label{eq:azzskew2} \phi(x; \alpha, \tau)=\frac{\phi(x)\Phi\left(\tau\sqrt{1+\alpha^2} + \alpha x\right)}{\Phi(\tau)}, \qquad x, \alpha, \tau \in \mathbb R, \end{equation}
and denote the corresponding CDF by $\Phi(x; \alpha, \tau)$. The classic one-parameter   skew normal is obtained as the $\tau=0$ case of \eqref{eq:azzskew2}, that is $\phi(x; \alpha):=\phi(x;\alpha, 0)= 2 \phi(x)\Phi(\alpha x)$, and the corresponding CDF is thus $\Phi(x;\alpha):=\Phi(x; \alpha,0)$.  The two-parameter and classic skew normal distributions are  denoted respectively  SN$(\alpha)$, SN$(\alpha,\tau)$. For $\alpha=0$ we have the normal N$(0,1)$ distribution. Scale-location shifts of the form $\sigma X+ \mu$, with $X \sim$ SN$(\alpha, \tau)$, SN$(\alpha)$ have distributions in the extended families, respectively, SN$(\alpha, \tau, \mu, \sigma)$ and SN$(\alpha, \mu, \sigma)$.  A number of properties of the class SN$(\alpha, \tau, \sigma, \mu)$, some of which were previuosly not available to the best of our knowledge, are found in Appendix A.

We define the \emph{skew-normal call function} to be the  expectation of the call option payoff calculated on the family on $T$ of the marginals  of a AGSBM, which, as noted, are in the SN class. The price distribution is obtained by exponentiating a scale-location family $(X_t)_{t \geq 0}$ of skew normal random variables plus the risk-free rate and a location change ensuring that the resulting random variables have mean equal to the forward price $S_0 e^{rt}$.
In other words, for $S_0, r, t>0$, $\delta \in (-1,1)$  under some measure $\mathbb Q$ on the space $(\Omega, \mathcal F, \mathcal F_{t \geq 0}, \mathbb Q)$ we consider the following random variables
\begin{align}\label{eq:SNvars}
    S_t&=S_0 e^{r t + X_t}, \nonumber \\ X_t &\sim \mbox{SN}\left(\frac{\delta}{\sqrt{1-\delta^2}}, \sigma \sqrt{t},-\mu^*(t,  \sigma, \delta) \right), \nonumber \\
      \mu^*(t, \sigma, \delta)&=\sigma^2 t/2 +\log\left( 2 \Phi( \delta \sigma  \sqrt{t}  )\right).
\end{align}
so that $X_t$ has PDF $p(t, \cdot)$ given as
\begin{equation}\label{eq:Xtpdf}
    p(t,x)=\frac{1}{\sigma \sqrt{t}}\phi\left(\frac{x+\mu^*(t)}{\sigma \sqrt{t}}; \frac{\delta}{\sqrt{1-\delta^2}} \right).
\end{equation}
With hindsight, $\mathbb Q$ will be a martingale measure for the process supporting the random variables above. Denoting still $\mathbb E[\cdot]$ the  $\mathbb Q$-expectation, that $\mathbb E[e^{-r t}S_t]=S_0$ for all $t \geq 0$, follows directly from the moment generating function in \eqref{eq:snmgf}. 



We begin by defining the set of putative option prices, as the discounted call option payoff expectations in the laws $(S_t)_{t \geq 0}$.

\begin{prop}\label{prop:rightcall}
    Let $(S_t)_{t \geq 0}$ be the family of random variables in \eqref{eq:SNvars}, and let $C(T,K,S_0,r):=\E[e^{- r T}(S_T-K)^+ ]$. We have that 
\begin{align}\label{eq:call_m0}
C(T,K,S_0,r)& =S_0 \left(1-\Phi\left(b_+\left(\log(K/S_0),\sigma \sqrt{T}, \delta, r T\right); \frac{\delta}{\sqrt{1-\delta^2}}, \delta\sigma\sqrt{T}   \right)\right) \nonumber \\ & \phantom{xxxxxxxxxxxx} -Ke^{-r T}\left(1-\Phi\left(b_-\left(\log(K/S_0),\sigma \sqrt{T}, \delta, r T\right);\frac{\delta}{\sqrt{1-\delta^2}} \right) \right)
\end{align}
with
\begin{align}\label{eq:b}
b_\pm( x, y, d, \rho)=\frac{x+\log\left(2  \Phi(d\, y) \right)- \rho \mp y^2/2 }{y }, \qquad x \in \R, y >0, r\geq 0, \delta \in (-1,1).
 \end{align}

\end{prop}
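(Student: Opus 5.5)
Write $\alpha:=\delta/\sqrt{1-\delta^2}$, $s:=\sigma\sqrt T$ and $k:=\log(K/S_0)$. By \eqref{eq:SNvars}--\eqref{eq:Xtpdf} we have $X_T=sZ-\mu^*$ with $Z\sim$ SN$(\alpha)$ and $\mu^*=s^2/2+\log(2\Phi(\delta s))$. We split the payoff as usual:
\begin{equation*}
C=e^{-rT}\E[S_T I_{\{S_T>K\}}]-Ke^{-rT}\,\mathbb Q(S_T>K).
\end{equation*}
The event $\{S_T>K\}$ is $\{rT+sZ-\mu^*>k\}$, that is, $\{Z>(k-rT+\mu^*)/s\}$. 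Expanding $\mu^*$, this threshold is
\begin{equation*}
\frac{k+\log(2\Phi(\delta s))-rT+s^2/2}{s}=b_-(k,s,\delta,rT).
\end{equation*}
Hence the second term is $Ke^{-rT}\bigl(1-\Phi(b_-;\alpha)\bigr)$ immediately, using only the CDF of SN$(\alpha)$.

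For the first term we change measure, in the spirit of the Black--Scholes share measure. We write
\begin{equation*}
e^{-rT}\E[S_TI_{\{\cdot\}}]=S_0e^{-\mu^*}\int_{b_-}^\infty e^{sz}\,2\phi(z)\Phi(\alpha z)\,dz.
\end{equation*}
Completing the square gives $e^{sz}\phi(z)=e^{s^2/2}\phi(z-s)$. Substituting $w=z-s$ yields
\begin{equation*}
S_0e^{-\mu^*}e^{s^2/2}\,2\int_{b_--s}^\infty \phi(w)\Phi(\alpha w+\alpha s)\,dw .
\end{equation*}
Note that $b_--s=b_+(k,s,\delta,rT)$, since the $\mp y^2/2$ terms differ by exactly $s^2$ after division by $y=s$. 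The key identification is that $\alpha s=\tau\sqrt{1+\alpha^2}$ with $\tau=\delta s$, because $\sqrt{1+\alpha^2}=1/\sqrt{1-\delta^2}$ gives $\alpha/\sqrt{1+\alpha^2}=\delta$. So the integrand equals $\Phi(\delta s)\,\phi(w;\alpha,\delta s)$ by \eqref{eq:azzskew2}. The first term thus becomes
\begin{equation*}
S_0e^{-\mu^*+s^2/2}\,2\Phi(\delta s)\bigl(1-\Phi(b_+;\alpha,\delta s)\bigr).
\end{equation*}
Finally, $e^{-\mu^*+s^2/2}=1/(2\Phi(\delta s))$, so the prefactor collapses to $S_0$ and we obtain \eqref{eq:call_m0}.

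The main obstacle, such as it is, is the bookkeeping in the change of variables. One must recognise the tilted density as an \emph{extended} skew normal with the specific $\tau=\delta\sigma\sqrt T$, rather than a classic skew normal, and confirm that the normalising constant $\Phi(\tau)$ cancels exactly against the $2\Phi(\delta s)$ in $e^{\mu^*}$. This cancellation is the same computation as the martingale check $\E[e^{-rT}S_T]=S_0$ via the moment generating function \eqref{eq:snmgf}, so I would verify the two in parallel. Integrability is not an issue, since $e^{sz}\phi(z)$ is integrable.
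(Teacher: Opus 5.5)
Your proof is correct and follows essentially the paper's route: split the payoff into the asset-or-nothing and digital parts, and recognise the share-measure density as extended skew-normal with $\tau=\delta\sigma\sqrt{T}$. The paper obtains that identification by citing its exponential-tilting result, Proposition \ref{prop:esscher}, with $\theta=1$ and $\tau=0$. You instead carry out the completing-the-square computation inline, which is just that proposition's proof in this special case.
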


\begin{proof}
The calculation is classic and straightforward. Using \eqref{eq:Xtpdf}
have that 
\begin{align}
&\mathbb E[e^{-rT}(S_T-K)^+]=
S_0\E[e^{X_T}1_{\{X_T\geq \log(K/S_0)-r T  \} }] -K e^{-rT} \mathbb Q(X_T\geq \log(K/S_0)-r T ) \nonumber \\
=&S_0\int_{ \log(K/S_0)-r T } ^\infty \frac{e^{x}}{\sigma \sqrt{T}}\phi\left(\frac{x+\mu^*(T)}{\sigma \sqrt{T}}; \frac{\delta}{\sqrt{1-\delta^2}} \right)dx  \nonumber \\ &\phantom{xxxxxxxxxxxxxxxxxxxx}
-K e^{-r T} \int_{ \log(K/S_0)-r T } ^\infty \phi\left(\frac{x+\mu^*(T)}{\sigma \sqrt{T}}; \frac{\delta}{\sqrt{1-\delta^2}} \right)\frac{dx}{\sigma{\sqrt{T}}}\nonumber \\
=&S_0\int_{  \log(K/S_0)-r T} ^\infty \phi\left( \frac{x - \sigma^2 T +\mu^*(T)}{\sigma \sqrt{T}}; \frac{\delta}{\sqrt{1-\delta^2}}, \delta \sigma \sqrt{T} \right) \frac{dx}{\sigma \sqrt{T}} 
\nonumber \\ &\phantom{xxxxxxxxxxxxxxxxx}-K e^{-r T} \left( 1- \Phi\left(b_-(\log(K/S_0), \sigma \sqrt{T}, \delta,r \, T); \frac{\delta}{\sqrt{1-\delta^2}} \right) \right). \label{eq:lastline}
\end{align}
In the last line, exploiting $\mathbb E[e^{X_t}]=1$, we applied Proposition \ref{prop:esscher} to the first integral term, under the specification $\theta=1$, $\tau=0$, $\sigma \sqrt{T}$ in place of $\sigma$, and $\alpha=\delta/\sqrt{1-\delta^2}$, so that  $\alpha/\sqrt{1+\alpha^2}= \delta$. The further substitution $y= (x - \sigma^2 T +\mu^*(T))/\sigma \sqrt{T}$ in \eqref{eq:lastline}, now yields \eqref{eq:call_m0}--\eqref{eq:b}.

\end{proof}


This probabilistic expression is classic treatment. In particular we see that the emergence of the extended-skew normal law in the first summand as the familiar interpretation of dynamically changing measure from the risk-neutral one to the \emph{share measure}, i.e. the measure $\mathbb S$ on $(\Omega,\mathcal F_t, \mathcal F)$ whose martingale density is given by

\begin{equation}\label{eq:shareM}
    \frac{d \mathbb S}{d \mathbb Q}\Bigg |_{\mathcal F_t}= \frac{e^{-r t}S_t}{S_0}
\end{equation}
Since changing of measure using exponential martingale dynamics such as \eqref{eq:shareM} impacts the underlying  distribution tilting, in the skew-normal world Proposition \ref{prop:esscher} is the tool we need. 
As remarked in the Appendix,  when $\delta \neq 0$, this change of probability places the share measure outside the class of the risk-neutral one -- turning it from skew-normal to extended skew normal -- which is unlike the Black-Scholes case ($\delta=0$) when both expressions are normal, although with different expectations.

The argument  of Proposition \ref{prop:rightcall} can be repeated to establish that the put option  putative value given by $P(T,K,S_0,r):=\E[e^{- r T}(K-S_T)^+ ]$ satisfies
\begin{align}\label{eq:put_m0}
P(T,K,S_0,r)
&=
K e^{-rT}
\Phi\left(
b_-\left(\log(K/S_0),\sigma\sqrt{T},\delta,rT\right);
\frac{\delta}{\sqrt{1-\delta^2}}
\right)
\nonumber\\
&\quad
-
S_0
\Phi\left(
b_+\left(\log(K/S_0),\sigma\sqrt{T},\delta,rT\right);
\frac{\delta}{\sqrt{1-\delta^2}},
\delta\sigma\sqrt{T}
\right).
\end{align}
Note that the difference of \eqref{eq:call_m0} and \eqref{eq:put_m0} establishes, at least formally, a put-call parity relation, that will be fully justified once martingale dynamics for with marginals $(S_t)_{t \geq 0}$ are shown to exists.

As explained, in order to show that $C$ is free of arbitrage, and that a supporting martingale for the option price exists, we must prove properties A1 to A5. We make two separate mathematical statements: one for A2--A5, and one for A1. As part of the latter we explicitly exhibit the maturity derivative of $C$, which is instrumental to determination of the local volatility coefficient later on.

\begin{prop}\label{prop:noarb12}
The skew-normal call price function $C(K,T,S_0,r)$ satisfies  properties A2 to A5. 
\end{prop}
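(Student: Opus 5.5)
The key point is that $C$ is defined as an expectation, $C(T,K,S_0,r)=\E[e^{-rT}(S_T-K)^+]$, with $S_T=S_0e^{rT+X_T}$ a positive, integrable random variable satisfying $\E[e^{-rT}S_T]=S_0$. The last fact follows from the moment generating function \eqref{eq:snmgf} and the choice of $\mu^*$ in \eqref{eq:SNvars}. Properties A2, A3 and A5 are static, fixed-$T$ properties, and I would derive them directly from this representation rather than from the closed form \eqref{eq:call_m0}. Property A4 is the only one that needs a limit in $T$. For it I would use the explicit formula, or equivalently convergence in law of $S_T$.

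For A2, I would note that $K\mapsto (s-K)^+$ is convex for every fixed $s$. Convexity is preserved under expectation, so $K\mapsto C(T,K,S_0,r)$ is convex. One could also check $\partial_K^2 C=e^{-rT}p_{S_T}(K)\ge 0$ by differentiating \eqref{eq:call_m0}, but this is unnecessary. For A3, I would use $0\le (S_T-K)^+\le S_T$ together with $(S_T-K)^+\to0$ pointwise as $K\to\infty$. Dominated convergence, with dominating function $S_T\in L^1$, then gives $C\to 0$. For A5, the upper bound is immediate from $(S_T-K)^+\le S_T$ and $\E[e^{-rT}S_T]=S_0$. For the lower bound, $(s-K)^+$ is convex and nonnegative, so Jensen's inequality gives
\[
C\ge e^{-rT}\bigl(\E[S_T]-K\bigr)^+=(S_0-Ke^{-rT})^+ ,
\]
and $C\ge0$ holds trivially.

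For A4, I would show that $X_T\to0$ in probability as $T\to0$. By \eqref{eq:Xtpdf}, $X_T=\sigma\sqrt{T}\,Z-\mu^*(T)$ with $Z\sim$ SN$(\delta/\sqrt{1-\delta^2})$ fixed in law. Since $\mu^*(T)=\sigma^2T/2+\log(2\Phi(\delta\sigma\sqrt T))\to\log(2\Phi(0))=0$, we get $S_T\to S_0$ in law. It remains to upgrade this to convergence of expectations. The family $(S_T)_{T\le1}$ is uniformly integrable: its second moments $\E[S_T^2]$ are bounded, since they are given explicitly through the skew-normal mgf evaluated at $2\sigma\sqrt{T}$. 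The payoff $(s-K)^+$ is continuous with linear growth, so $\E[(S_T-K)^+]\to(S_0-K)^+$, and $e^{-rT}\to1$ finishes the argument. As a cross-check, one could pass to the limit in \eqref{eq:call_m0}: for $K\ne S_0$ one has $b_\pm\to\pm\infty\cdot\mathrm{sgn}(\log(K/S_0))$, and the CDFs converge to $0$ or $1$ because the extended skew-normal parameter $\delta\sigma\sqrt T\to0$. The case $K=S_0$ gives value $0$ by continuity.

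I expect the main obstacle to be this A4 step. The difficulty is the uniform integrability, or the handling of the $T$-dependent extended parameter $\tau=\delta\sigma\sqrt T$ if one argues through the closed form. The remaining parts are one-line consequences of the expectation representation.
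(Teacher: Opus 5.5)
Your proof is correct. For A2, A3 and A5 it differs from the paper's only in the tool used. The paper uses the Breeden--Litzenberger identity $e^{rT}\partial_{KK}C=f_T(K)\ge 0$, the limit $K\to\infty$ taken directly in \eqref{eq:call_m0}, and put--call parity $C=S_0-Ke^{-rT}+P$ with $P\ge 0$. You use convexity of the payoff under expectation, dominated convergence, and Jensen. Both rest on the expectation representation and on $\E[e^{-rT}S_T]=S_0$.

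The genuine difference is A4. The paper stays with the closed form. For $K\neq S_0$, both $b_+$ and $b_-$ diverge with the sign of $\log(K/S_0)$. At the money, it expands $\log(2\Phi(\delta\sigma\sqrt T))$ to get $b_\pm(0,\sigma\sqrt T,\delta,rT)\to\delta\sqrt{2/\pi}$, as in \eqref{eq:logtaylor}. It then uses dominated convergence in the moving parameter $\tau=\delta\sigma\sqrt T\to 0$ to show that the asset-or-nothing and digital terms share the limit $1-\Phi(\delta\sqrt{2/\pi};\delta/\sqrt{1-\delta^2})$, so their difference vanishes.

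Your route is $X_T=\sigma\sqrt T Z-\mu^*(T)\to 0$, plus uniform integrability from second moments bounded via \eqref{eq:snmgf}. It treats all strikes at once and needs neither the ATM case split nor any control of the extended skew-normal CDF in $\tau$. It also bypasses the paper's domination $\phi(x;\alpha,\tau)<2\phi(x)$, which fails when $\tau<0$ (i.e.\ $\delta<0$); the constant $1/\Phi(-|\delta|\sigma\sqrt{T_0})$ for $T\le T_0$ repairs it.

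What the paper's computation buys in return is the explicit ATM limit $\delta\sqrt{2/\pi}$ of $b_\pm$. That limit is reused for $v_L(0,0)$ in \eqref{eq:vl00} and for the short-maturity ATM level and skew in Proposition~\ref{prop:skew}.

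If you keep your closed-form cross-check, the claim that the case $K=S_0$ follows ``by continuity'' needs a reason. The cleanest one is that $|C(T,K)-C(T,K')|\le |K-K'|$ uniformly in $T$, so pointwise convergence for $K\neq S_0$ transfers to $K=S_0$.
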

\begin{proof}
    The properties are obvious and valid in general once a clear probabilistic representation of prices through non-negative random variables is provided as a starting point. Call $f_t$ the law of $S_t$ in \eqref{eq:SNvars}. The \cite{breeden1978prices} remark leads to $e^{r T}\partial_{KK} C(S_0,K,T,r)=f_T(K) \geq 0$ implying A2. For A3, the property follows immediately taking the limit on $K$ in \eqref{eq:call_m0}. 
    Regarding A4, the O/ITM cases  are an immediate consequence of the fact that for $K<S_0$ we have $b_\pm(\log(K/ S_0), \sigma \sqrt{ T},\delta, r T) \rightarrow -\infty$, while if $K>S_0$ $b_\pm(\log(K/ S_0), \sigma \sqrt{ T},\delta, r T) \rightarrow \infty$. For $K=S_0$ observe, by a Taylor expansion, that as $T \rightarrow 0$ it holds \begin{equation}
        2 \Phi( \delta \sigma \sqrt{T})) = 1 +\sqrt{\frac{2 }{ \pi}}\delta \sigma\sqrt{T} +O(T).
    \end{equation}
Whence, using $\log (1+x) \sim x$, it follows
\begin{align}\label{eq:logtaylor}
b_\pm(0, \sigma \sqrt{T},\delta, r)=&\frac{\log(2 \Phi(\delta \sigma \sqrt{T}))-(r \pm \sigma^2/2)T }{\sigma \sqrt{T}}  = \frac{\log(1+ \sqrt{\frac{2 }{ \pi}}\delta \sigma\sqrt{T}+ O(\sqrt{T})  )}{\sigma\sqrt{T}}  + O(\sqrt{T})\nonumber \\ &\rightarrow \sqrt{\frac{2 }{ \pi}}\delta, \qquad T \rightarrow 0.
\end{align}
Now for the second term in \eqref{eq:call_m0} we have, that, 
by continuity,  as $T \rightarrow 0$ 
\begin{equation}
S_0e^{-r T}\left(1-\Phi\left(b_-\left(0,\sigma \sqrt{T}, \delta, r\right);\frac{\delta}{\sqrt{1-\delta^2}} \right) \right) \rightarrow S_0 \left(1-\Phi\left( \delta \sqrt{\frac{2}{\pi}};\frac{\delta}{\sqrt{1-\delta^2}} \right) \right).
\end{equation} So for the ATM call price to tend to the intrinsic value 0 as time to maturity vanishes  it is sufficient to show that  
\begin{equation}\label{eq:firtTlimit}
\Phi\left(b_+\left(0,\sigma \sqrt{T}, \delta, r\right);\frac{\delta}{\sqrt{1-\delta^2}}, \delta \sigma  \sqrt{T}  \right) \rightarrow  \Phi\left( \delta \sqrt{\frac{2}{\pi}};\frac{\delta}{\sqrt{1-\delta^2}} \right), \qquad T \rightarrow 0.
\end{equation}
which requires taking the limit on $T$ inside an integral. As pointwise it holds  $\phi(x; \alpha, \tau) \rightarrow \phi(x;\alpha)$ as $\tau \rightarrow 0$, so for \eqref{eq:firtTlimit} to hold dominated convergence must apply. But this is clearly the case since $\phi(x; \alpha, \tau) < 2 \phi(x)$ , for all $\alpha, \tau, x$, and A4 also follows  for the ATM case.

    Finally, the upper bound of A5 is obvious. Being $S_t$ a non negative random variable and $(x-k)^+$, $x,k >0$ a non-negative function, its expectation must be non negative, so one only requires to show $C(T,K,S_0,r) \geq S_0-Ke^{-r T}$. But with $ P(T,K,S_0,r)$  in \eqref{eq:put_m0}, which is also positive for the same reasons as $C$, we have 
    \begin{equation} C(T,K,S_0,r)= S_0-K e^{-r T}+ P(T,K,S_0,r)
    >S_0-K e^{-r T}\end{equation}
showing A5.
\end{proof}

 Proving A1 is slightly more challenging. In order to do so the properties of the  extended SN law uncovered in the Appendix  come  into play.

\begin{thm}\label{thm:noarb3}
Let $C$ given by \eqref{eq:call_m0} be the skew-normal putative call price function. Then 
\begin{align}\label{eq:CpartialT2}
& \quad \partial_T C(T,K,S_0,r)=rKe^{-rT}\biggl(1-\Phi\biggl(b_-(\log(K/S_0,\sigma \sqrt{T}, \delta, r T)
;\frac{\delta}{\sqrt{1-\delta^2}}\biggr)\biggr) \nonumber \\ &+  S_0 \frac{\sigma}{2\sqrt{T}}  \left[ \phi\left(b_+(\log(K/S_0),\sigma \sqrt{T}, \delta, r T);\frac{\delta}{\sqrt{1-\delta^2}},  \sigma \sqrt{T}\delta\right) - \right. \nonumber \\ &\phantom{xxxxxxxxxxxxxxxxxxx}\left.\delta\frac{\phi(\sigma \sqrt{T}\delta)}{\Phi(\sigma \sqrt{T}\delta)}\Delta \left(b_+(\log(K/S_0),\sigma \sqrt{T}, \delta, r T), \frac{\delta}{\sqrt{1-\delta^2}}, \sigma \sqrt{T}\delta \right)\right],
 \end{align}
with $\Delta$ defined in \eqref{eq:Delta}. Furthermore,  $\partial_T C\geq 0$ on its whole domain. 
As a consequence, $C(\cdot, K, S_0,r)$ 
is increasing in $T$ for all $K, S_0 >0, r \geq 0$.  
\end{thm}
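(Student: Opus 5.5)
The plan is to differentiate \eqref{eq:call_m0} term by term in $T$, use the standard ``Breeden-type'' cancellation to remove the terms in which $b_\pm$ is differentiated, and then prove that what remains is non-negative using the properties of the extended skew-normal family collected in Appendix A. Throughout we write $x=\log(K/S_0)$, $y=\sigma\sqrt T$, $\alpha=\delta/\sqrt{1-\delta^2}$, $\tau=\delta y$, and $b_\pm=b_\pm(x,y,\delta,rT)$.

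\textbf{Step 1: differentiation.} The first term of \eqref{eq:call_m0} depends on $T$ in two ways: through its argument $b_+$ and through the shape parameter $\tau=\delta\sigma\sqrt T$ of the extended skew-normal CDF. Differentiating in $\tau$ and using $d\tau/dT=\delta\sigma/(2\sqrt T)$ gives the correction $\delta\frac{\phi(\tau)}{\Phi(\tau)}\Delta(b_+,\alpha,\tau)$. Here I take $\Delta$ in \eqref{eq:Delta} to be the appropriately normalised $\tau$-derivative of $\Phi(\cdot;\alpha,\tau)$. It comes from differentiating both the numerator $\Phi(\tau\sqrt{1+\alpha^2}+\alpha u)$ and the normaliser $\Phi(\tau)$ in \eqref{eq:azzskew2}; note that $\sqrt{1+\alpha^2}\,\delta=\alpha$. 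The second term depends on $T$ through $e^{-rT}$, which yields $rKe^{-rT}(1-\Phi(b_-;\alpha))$, and through $b_-$.

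\textbf{Step 2: cancellation.} I will verify the identity
\[
S_0\,\phi(b_+;\alpha,\tau)=Ke^{-rT}\phi(b_-;\alpha).
\]
It follows from the Esscher tilting of Proposition \ref{prop:esscher}, equivalently from writing out both densities with $b_+=b_--y$ and using the definition of $\mu^*$. Combined with $b_- - b_+ = y$, hence $\partial_T b_- - \partial_T b_+ = \sigma/(2\sqrt T)$, the two $\partial_T b_\pm$ contributions collapse into the single term $S_0\frac{\sigma}{2\sqrt T}\phi(b_+;\alpha,\tau)$. This is exactly the structure of \eqref{eq:CpartialT2}.

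\textbf{Step 3: sign of $\partial_T C$ (main obstacle).} The term $rKe^{-rT}(1-\Phi(b_-;\alpha))$ is non-negative, so it suffices to show
\[
\phi(b;\alpha,\tau)\ \ge\ \delta\,\frac{\phi(\tau)}{\Phi(\tau)}\,\Delta(b,\alpha,\tau)\qquad\text{for all } b\in\R,\ \text{with } \tau=\delta y .
\]
This is the only non-routine step.

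First, I would rewrite $\Delta$ as the difference of two CDFs: the extended skew-normal $\Phi(b;\alpha,\tau)$ and the CDF of a normal variable. The normal piece comes from the identity $\partial_\tau\Phi(\tau\sqrt{1+\alpha^2}+\alpha u)\propto\phi(\cdot)$ and completion of the square. The sign of $\delta\Delta$ is then governed by the normal/extended-skew-normal stochastic dominance established in Appendix A. I expect this to show that $\delta\Delta\le 0$ on part of the line, where the inequality is trivial, and gives a controlled sign elsewhere.

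On the remaining region, I would bound $\Delta$ pointwise by a multiple of the density $\phi(b;\alpha,\tau)$. The tool is a Mills-ratio type estimate: $\phi(\tau)/\Phi(\tau)$ times a normal tail should be dominated by a Gaussian density, which is where $\Phi(\tau)$ in the denominator of \eqref{eq:azzskew2} cancels.

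If a clean pointwise bound fails, the fallback is a probabilistic argument. One shows that the laws of $e^{-rt}S_t$ increase in convex order by representing them as normalised $\exp(\sigma(\sqrt{1-\delta^2}W_{1,t}+\delta|W_{2,t}|))$ and conditioning on $|W_{2,t}|$. Monotonicity of $C$ in $T$ then follows from Jensen's inequality, and the sign of the derivative follows without the pointwise inequality.

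Finally, since $C(\cdot,K,S_0,r)$ is continuous (indeed differentiable) on $(0,\infty)$ with $\partial_TC\ge0$, it is increasing. This gives A1 and completes the statement.
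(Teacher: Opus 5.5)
Your Steps 1 and 2 match the paper's computation, so the formula for $\partial_T C$ is fine. The $\tau$-derivative identity is Lemma \ref{lem:skewazzdertau}, which you should derive rather than adopt as the definition of $\Delta$. The cancellation $S_0\phi(b_+;\alpha,\tau)=Ke^{-rT}\phi(b_-;\alpha)$ together with $b_--b_+=\sigma\sqrt T$ is exactly the paper's. The gap is Step 3: it carries the entire positivity claim, and your proposal does not prove it.

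\textbf{The sign split and the missing bound.} Your sign-splitting plan misreads the dominance lemma. Lemma \ref{lem:dom} gives $\mathrm{sgn}\,\Delta(x,\alpha,\tau)=\mathrm{sgn}\,\alpha=\mathrm{sgn}\,\delta$ for \emph{every} $x$, since first-order dominance orders the two CDFs globally. Hence $\delta\frac{\phi(\tau)}{\Phi(\tau)}\Delta\ge 0$ on the whole line, and there is no region where the inequality is trivial. Everything rests on the pointwise bound, which you only say ``should'' hold.

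The missing step is to write $\Delta$ and $\phi(\cdot;\alpha,\tau)$ over the same kernel. By \eqref{eq:Delta}, and by differentiating \eqref{eq:Fintrepresentation} in $x$,
\[ \Delta=\frac{\alpha}{\Phi(\tau)}\int_{-\tau}^{\infty}\phi\bigl(\sqrt{1+\alpha^2}\,x-\alpha u\bigr)m(u)\phi(u)\,du,\qquad \phi(x;\alpha,\tau)=\frac{\sqrt{1+\alpha^2}}{\Phi(\tau)}\int_{-\tau}^{\infty}\phi\bigl(\sqrt{1+\alpha^2}\,x-\alpha u\bigr)\phi(u)\,du, \]
where $m(u)=\Phi(-u)/\phi(u)$ is the Mills ratio. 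Since $m$ is decreasing, $m(u)\le m(-\tau)=\Phi(\tau)/\phi(\tau)$ for $u\ge-\tau$. This gives
\[ \frac{\phi(\tau)}{\Phi(\tau)}|\Delta|\le\frac{|\alpha|}{\sqrt{1+\alpha^2}}\,\phi(x;\alpha,\tau)=|\delta|\,\phi(x;\alpha,\tau), \]
which is Lemma \ref{lem:boundbydelta}. Consequently the bracket in \eqref{eq:CpartialT2} is at least $(1-\delta^2)\phi(x;\alpha,\tau)\ge 0$ everywhere. Your phrase ``$\phi(\tau)/\Phi(\tau)$ times a normal tail is dominated by a Gaussian density'' is exactly the inequality $\frac{\phi(\tau)}{\Phi(\tau)}\Phi(-u)\le\phi(u)$ on $u\ge-\tau$. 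Without the integral representation and the monotonicity of $m$, though, it is a hope, not an argument.

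\textbf{The fallback.} The fallback does not work as described. The normalised $\exp\bigl(\sigma(\sqrt{1-\delta^2}W_{1,t}+\delta|W_{2,t}|)\bigr)$ is not a martingale in its natural filtration; this is the whole point of Section \ref{sec:arbSBM} and Appendix B. So there is no coupling $Y_s=\E[Y_t\mid\mathcal F_s]$ for Jensen to act on. Conditioning on $|W_{2,t}|$ only handles the independent lognormal factor. You would still need
\[ \frac{e^{\sigma\delta|W_{2,s}|}}{\E\bigl[e^{\sigma\delta|W_{2,s}|}\bigr]}\le_{\mathrm{cx}}\frac{e^{\sigma\delta|W_{2,t}|}}{\E\bigl[e^{\sigma\delta|W_{2,t}|}\bigr]},\qquad s<t, \]
and Jensen cannot supply this, because $\E[e^{c|W_{2,t}|}\mid\mathcal F_s]$ is not proportional to $e^{c|W_{2,s}|}$. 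That convex-order statement is essentially the degenerate case $|\delta|=1$ of the very monotonicity you are trying to prove. The fallback therefore moves the difficulty rather than resolving it.
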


\begin{proof}
Let $x(T)=b_+(\log(K/S_0,\sigma \sqrt{T}, \delta, r T)$, so that $b_-(\log(K/S_0),\sigma \sqrt{T}, \delta, r T)= x +\sigma \sqrt{T}$. We introduce the variable change
\begin{equation}\label{eq:subs}
    \alpha=\frac{\delta}{\sqrt{1-\delta^2}}, \qquad \tau=\delta \sigma \sqrt{T},
\end{equation}
using which we can rewrite \eqref{eq:call_m0}, with abuse of notation,  as follows
\begin{equation}\label{call_m0_2}
C(x, T, r)=S_0(1-\Phi(x;\alpha,\tau))-Ke^{-rT}(1-\Phi( x+\sigma \sqrt{T};\alpha)).
\end{equation}
By differentiating the function $C$ in \eqref{call_m0_2} with respect to $T$, this leads to
\begin{align}
\partial_T C(x,T,r)&=-S_0\bigl( \partial_x \Phi(x;\alpha,\tau) \partial_T x+\partial_{\tau} \Phi(x;\alpha,\tau)\partial_T \tau \bigr)+
rKe^{-rT}(1-\Phi(x+\sigma\sqrt{T};\alpha))\nonumber \\ &+Ke^{-rT}\partial_x \Phi(x+\sigma\sqrt{T};\alpha)\partial_T(x+\sigma\sqrt{T});
\end{align}
where
\begin{align}
\partial_T(x+\sigma\sqrt{T})=\partial_T x+\frac{\sigma}{2\sqrt{T}}, \quad \partial_T \tau=\frac{\delta \sigma }{2\sqrt{T}}, \quad \partial_x \Phi(x;\alpha,\tau)=\phi(x;\alpha,\tau).
\end{align}
Using Lemma \ref{lem:skewazzdertau} 
 we thus obtain
\begin{align}\label{eq:stop}
\partial_T C(x, T,r)&=-S_0\biggl[\phi(x;\alpha,\tau)\partial_Tx +\frac{\phi(\tau)}{\Phi(\tau)}\Delta(x;\alpha,\tau) \frac{\delta \sigma }{2\sqrt{T}}\biggr]+
rKe^{-rT}(1-\Phi(x+\sigma\sqrt{T};\alpha))\nonumber \\ &+Ke^{-rT}\phi(x+\sigma\sqrt{T};\alpha)\biggl(\partial_T x+\frac{\sigma}{2\sqrt{T}}\biggr).
\end{align}
Now recalling $x=b_+(\log(K/S_0), \sigma \sqrt{T}, \delta, r T)$ it is easy to check that  we have the Vega-type relation
\begin{equation}
    S_0\phi(x)=2\Phi(\tau)K e^{-r T} \phi(x+\sigma \sqrt{T})
\end{equation}
and besides, recalling \eqref{eq:subs} it holds
\begin{equation}
\Phi(\alpha(x+\sigma \sqrt{T}))=\Phi(\alpha x +\alpha \sigma \sqrt{T})=\Phi(\alpha x + \sqrt{1+\alpha^2}\tau),
\end{equation}
that in turn leads to
\begin{equation}\label{eq:rel_01}
    S_0 \phi(x,\alpha, \tau)=K e^{-r T}\phi(x+\sigma \sqrt{T}; \alpha).
\end{equation}
Applying \eqref{eq:rel_01}, the term $\partial_T x(Ke^{-rT}\phi(x+\sigma\sqrt{T};\alpha)-S_0\phi(x;\alpha,\tau))$ in \eqref{eq:stop} cancels, and we further have
\begin{align}
\partial_T C(x, T, r)&=-S_0\frac{\phi(\tau)}{\Phi(\tau)}\Delta(x,\alpha,\tau)\frac{\delta \sigma }{2\sqrt{T}} +
rKe^{-rT}(1-\Phi(x+\sigma\sqrt{T};\alpha))+S_0\phi(x;\alpha,\tau)\frac{ \sigma}{2\sqrt{T}} \nonumber \\ &= S_0\frac{\sigma}{2 \sqrt{T}}\left(\phi(x; \alpha,\tau)-\delta \frac{\phi(\tau)}{\Phi(\tau)}\Delta(x,\alpha, \tau)\right)+ 
rKe^{-rT}(1-\Phi(x+\sigma\sqrt{T};\alpha)).
\end{align}
After unwinding the change of variables, this is exactly \eqref{eq:CpartialT2}.
Finally,  observing that  $\delta=\alpha/\sqrt{1+\alpha^2}$ and applying Lemma \ref{lem:boundbydelta}, we obtain 
\begin{equation}
    S_0\frac{\sigma}{2 \sqrt{T}}\left(\phi(x; \alpha,\tau)-\delta \frac{\phi(\tau)}{\Phi(\tau)}\Delta(x,\alpha, \tau)\right) \geq S_0\frac{\sigma}{2 \sqrt{T}}\phi(x; \alpha,\tau) \left(1-\delta^2\right) \geq 0
\end{equation}
and since further the first line in \eqref{eq:CpartialT2} is obviously positive, the positivity of $C_T$ is established. The last claim is clear.

\end{proof}

 From Proposition \ref{prop:noarb12} and Theorem \ref{thm:noarb3} we conclude that A1-A5 all hold, and we can finally state the no-arbitrage property of the skew-normal call price function, in the sense of the existence of the existence of a probability space and a martingale whose discounted values supports the prices $C(T,K,S_0, r)$.

\begin{cor} There exists a  process $S=(S_t)_{t \geq 0}$, with $S_t$ distributed as \eqref{eq:SNvars}, and such that $\tilde S=(e^{-rt}S_t)_{t \geq 0}$ is a martingale, satisfying \eqref{eq:call_m0} and \eqref{eq:put_m0}. \end{cor}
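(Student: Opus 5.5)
The plan is to apply Kellerer's theorem to the discounted marginals, using the call-function characterization A1--A5 to supply the convex-order hypothesis. By Proposition \ref{prop:noarb12} and Theorem \ref{thm:noarb3}, the function $C$ in \eqref{eq:call_m0} satisfies A1--A5. Set $\tilde S_t := e^{-rt}S_t$ with $S_t$ as in \eqref{eq:SNvars}, and let $\mu_t$ be its law. The first thing to check is that $\mu_t$ has finite mean $S_0$ for every $t$. This follows from the moment generating function \eqref{eq:snmgf}, and it is exactly how $\mu^*$ was calibrated. In particular $t\mapsto \mu_t$ is a family of integrable laws on $\R_+$ with constant barycentre $S_0$.

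Next I show that $(\mu_t)_{t\ge0}$ is increasing in convex order. Two integrable laws with equal means are in convex order if and only if their call functions $K\mapsto \int (x-K)^+\,d\mu(x)$ are ordered pointwise. The discounted call function of $\mu_T$ is $\int (x-K)^+ d\mu_T(x) = e^{rT}C(T,Ke^{rT},S_0,r)$.

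When $r=0$ this is just $C(T,K,S_0,0)$, which is nondecreasing in $T$ by Theorem \ref{thm:noarb3}, so we are done. For $r>0$ I will take a route that avoids any issue with the moving strike. I compute $\partial_T$ of $e^{rT}C(T,Ke^{rT},S_0,r)$ directly from \eqref{eq:CpartialT2}. The only thing that changes is the strike argument: with $K$ replaced by $Ke^{rT}$, the term $\log(K/S_0)-rT$ inside $b_\pm$ becomes $\log(K/S_0)$. The $rKe^{-rT}(1-\Phi(\cdot))$ term is cancelled by the contributions from the $e^{rT}$ prefactor and from $\partial_K C$ times $rKe^{rT}$. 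These contributions are $rC$ and $rKe^{rT}\partial_K C$, and $\partial_K C = -e^{-rT}(1-\Phi(b_-;\alpha))$. What remains is $e^{rT}S_0\frac{\sigma}{2\sqrt T}[\phi(x;\alpha,\tau)-\delta\frac{\phi(\tau)}{\Phi(\tau)}\Delta]$, which is nonnegative by Lemma \ref{lem:boundbydelta} as in the proof of Theorem \ref{thm:noarb3}. Hence the undiscounted-strike call function increases in $T$, and $(\mu_t)$ is a peacock.

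The remaining hypotheses for Kellerer's theorem are as follows. Right-continuity (indeed weak continuity) of $t\mapsto\mu_t$ follows from continuity of the skew-normal density in $t>0$. At $t=0$, A4 gives $\mu_t\to\delta_{S_0}$ weakly, and together with the constant means this gives convergence of call functions. Kellerer's theorem then provides a filtered probability space carrying a Markov martingale $\tilde S$ with one-dimensional marginals $\mu_t$, and it may be taken strongly Markov in Lowther's version. Setting $S_t:=e^{rt}\tilde S_t$ gives the process claimed.

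Finally, \eqref{eq:call_m0} holds by construction, since $C$ was defined as $\E[e^{-rT}(S_T-K)^+]$ under the same marginal law. For \eqref{eq:put_m0}, the martingale property gives $\E[e^{-rT}S_T]=S_0$, so put--call parity $C-P=S_0-Ke^{-rT}$ now holds genuinely. The main obstacle is the convex-order step, namely the cancellation in the derivative with the forward-moving strike when $r>0$. If that bookkeeping fails, the fallback is to invoke the cited equivalence of A1--A5 with increasing convex order from \cite{rop:10}, which the paper already takes for granted.
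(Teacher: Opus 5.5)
Your route is the paper's: A1--A5, hence increasing convex order of the discounted marginals, hence Kellerer's theorem. Equations \eqref{eq:call_m0} and \eqref{eq:put_m0} then hold because they are expectations under the prescribed marginals. The mean computation, the Kellerer/Lowther step and the continuity remarks are fine.

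The step that fails as written is your $r>0$ convex-order computation. The call function of $\mu_T$, the law of $e^{-rT}S_T$, is $\int(x-K)^+\,d\mu_T(x)=\E[(e^{-rT}S_T-K)^+]=C(T,Ke^{rT},S_0,r)$. There is no factor $e^{rT}$ in front. Monotonicity of $e^{rT}g(T)$ does not imply monotonicity of $g(T)$, and the latter is what convex order requires. The cancellation is also misreported. Differentiating your function gives $e^{rT}\bigl(rC+\partial_TC+rKe^{rT}\partial_KC\bigr)$, with everything evaluated at strike $Ke^{rT}$. The term $rK(1-\Phi(b_-;\alpha))$ that \eqref{eq:CpartialT2} produces at that strike is cancelled by $rKe^{rT}\partial_KC$ alone, so $re^{rT}C$ survives. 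The peacock conclusion is true, but it does not follow from the computation you wrote.

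The repair is immediate. Without the prefactor you get $\frac{d}{dT}C(T,Ke^{rT},S_0,r)=\partial_TC+rKe^{rT}\partial_KC=S_0\frac{\sigma}{2\sqrt T}\bigl[\phi(x;\alpha,\tau)-\delta\frac{\phi(\tau)}{\Phi(\tau)}\Delta(x,\alpha,\tau)\bigr]$ with $x=b_+(\log(K/S_0),\sigma\sqrt T,\delta,0)$. This is nonnegative exactly as in the proof of Theorem \ref{thm:noarb3}.

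Simpler still: $e^{-rT}S_T=S_0e^{X_T}$, and the law of $X_T$ in \eqref{eq:SNvars} does not involve $r$. Equivalently, in \eqref{eq:b} one has $b_\pm(\log(Ke^{rT}/S_0),\sigma\sqrt T,\delta,rT)=b_\pm(\log(K/S_0),\sigma\sqrt T,\delta,0)$. So $C(T,Ke^{rT},S_0,r)=C(T,K,S_0,0)$, and Theorem \ref{thm:noarb3} at $r=0$ is already the convex-order statement.

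Your worry about the moving strike is justified, and here you are more careful than the paper's one-line appeal. For $r>0$, A1 is monotonicity in $T$ at a fixed strike. It is implied by monotonicity along the forward strike $Ke^{rT}$, i.e. by convex order, but it does not imply it in general. So your fallback of citing A1--A5 for positive rates would not close the argument on its own. What closes it is the $r$-independence of the discounted marginals.
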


In the next section we seek to associate to the SNLV option prices a positive continuous supporting strongly Markovian process.

\section{The skew-normal local volatility model}\label{sec:localvol}

Once a no-arbitrage call function (or put) is given, its local volatility can be formally computed. This rises two questions: first, whether the corresponding SDE is well-posed; second, whether the law of such a solution reproduces the prescribed call prices.

From this section on we assume that $r=0$. This substantially simplifies the formulae and it is of course not restrictive. Similar formulae, without changes in the interpretation of results, can be derived for positive risk-free rates, or alternatively $S$ can be understood in the following as a forward price model.

For a given no-arbitrage call option price function $C(T,K,S_0)$ let $c(\cdot,\cdot)$ be the call price process in moneyness coordinates,  given by the change of variable 
\begin{equation}\label{eq:cC} c(T, \log(K/S_0))=C(T,K,S_0), \qquad c(t,\kappa)=C(t,e^{\kappa},1)\end{equation}
Fix a filtered probability space $(\Omega, \mathcal F, \mathbb P)$ supporting a standard Brownian motion $W=(W_t)_{t \geq 0}$.  A \emph{local volatility model} is a solution, if it exists unique, of the SDE 
\begin{equation}\label{eq:SDE}
d S_t=S_t\sqrt{ 2 v_L(t, \log(S_t/S_0))} d W_t
\end{equation}
with
\begin{equation}\label{eq:v}
v_L(t,\kappa) =\frac{\partial c}{\partial t } (t, \kappa)\left(\frac{\partial^2 c}{\partial \kappa^2 } (t, \kappa)-\frac{\partial c}{\partial \kappa} (t, \kappa)\right)^{-1}, \quad \kappa \in \R, \quad t >0.
\end{equation}
The function $v_L$ is the \emph{local half-variance}. If \begin{equation}p(t,\kappa)=e^{-\kappa}\left(\frac{\partial^2 c}{\partial \kappa^2}(t,\kappa)-\frac{\partial c}{\partial \kappa}(t,\kappa)\right) \end{equation}
it is easy to show that $p \in C^{1,2}((0,\infty) \times \mathbb R))$ solves the Fokker Planck equation
\begin{equation}\label{eq:pFP}
\frac{ \partial p}{\partial t}(t, \kappa)= 
 \frac{ \partial^2 }{\partial \kappa^2}(v_L(t,\kappa)  p(t, \kappa))+\frac{ \partial }{\partial \kappa}(v_L(t,\kappa)  p(t, \kappa)), \qquad p(0,\kappa)=\delta_0(d \kappa).
\end{equation}
Also $p$ is a probability density since clearly
\begin{equation}\label{eq:eq:}
 p(t,x)= S_0e^x\frac{\partial^2 C}{\partial K^2}(t,S_0e^x,S_0),  
 \end{equation}
and as well-known $\partial^2 C/\partial K^2$
is the market-implied risk-neutral density
(\cite{breeden1978prices}).
Furthermore letting
 \begin{equation}
    \label{eq:p}p^*(t,\kappa)=e^{\kappa }p(t,\kappa)=\left(\frac{\partial^2 c}{\partial \kappa^2}(t,\kappa)-\frac{\partial c}{\partial \kappa}(t,\kappa) \right)\end{equation}
we have that the denominator $p^*$ of the local variance is the market-price density in the share measure. Of course that the probability laws $p$ and $p^*$ coincide with the law of $\log (S_t/S_0)$ respectively in the risk-neutral and share measures, with $S$ now a solution to \eqref{eq:SDE}--\eqref{eq:vSN}, should be made precise, as there is no a priori reason for which this should holds. For example, a second solution solution $\bar p$ to \eqref{eq:pFP} may exist, and the law of a (weak or strong) solution for \eqref{eq:SDE}--\eqref{eq:vSN}, even if it exists unique, may be $\bar p$ instead of  $p$. An argument must be made. For details on all the above see \cite{dupire1994pricing}, \cite{berestycki2002asymptotics}, \cite{carmona2009local}, \cite{torricelli2025parametric}.  

In this setting, we take $C$ to be the skew normal no arbitrage call option price, and proceed to structure a local volatility model with skew normal option prices/marginals.  Since both $p^*$ and $\partial c/\partial t$ are known from the previous sections, it is easy to compute the skew-normal half variance. Answering the two questions above (well-posedness of the SDE and whether it matches option prices) is easier we could show that we are in the standard setup of bounded and uniformly elliptic coefficients. Establishing this  in turns requires the identification of the time zero asymptotics of $v_L$. We have the following result, uncovering a useful probabilistic representation of the SDE coefficients.

\begin{prop}\label{prop:extvl} Let $C(T, K, S_0)$ be the zero-interest rate skew-normal no-arbitrage option pricing formula with underlying laws $(S_t)_{t \geq 0}$ distributed as \eqref{eq:SNvars}. Then the skew-normal local half-variance is given by
\begin{align}\label{eq:vSN}
     &v_L(t,\kappa)=  
\frac{\sigma^2}{2}\left[
1-\delta  R\left(b_+(\kappa,\sigma\sqrt{t},\delta,0), 
\frac{\delta}{\sqrt{1-\delta^2}},
\delta\sigma\sqrt{t}\right)\right]
\end{align}
with
\begin{equation}\label{eq:R}
R(x,\alpha, \tau):=
\,\frac{\phi(\tau)}
{\Phi(\tau)} \frac{\Delta\left(
x,\alpha,\tau \right)}{\phi(x,\alpha,\tau)}=\frac{\alpha}{\sqrt{1+\alpha^2}}\frac{1}
{m(-\tau)}\mathbb E\left[
{m(U_{x,\tau})}\right], \quad x, \alpha,  \textcolor{blue}{\tau}\in \mathbb R,
\end{equation}
with $m$ the Mills ratio and $U_{x,\tau}$ as in Lemma \ref{lem:boundbydelta}.

The function $v_L$ can be extended to $[0,\infty) \times \mathbb R$ by letting
\begin{equation}\label{eq:lim_T0}
v_L(0,\kappa)
=
\begin{cases}
\displaystyle{\sigma^2\left(1-\delta^2\right)/2},& \kappa\delta<0,\\
\sigma^2/2,& \kappa\delta>0,
\end{cases}
\end{equation}
 and 
\begin{equation}\label{eq:vl00}
v_L(0,0)
=
\frac{\sigma^2}{2}
\left[
1-
\delta
R\left(
\delta\sqrt{\frac{2}{\pi}},
\frac{\delta}{\sqrt{1-\delta^2}},0
\right)
\right].
\end{equation}

This extension is  differentiable outside $(0,0)$ which is a discontinuity point for $v_L$.
Furthermore $v_L$ is bounded with
\begin{equation}\label{eq:vlbounds}
\frac{\sigma^2}{2}(1-\delta^2) \leq v_L(t, \kappa) \leq \frac{\sigma^2}{2} ,
\end{equation}
for all $t\geq 0$, $\kappa \in \mathbb R$.
\end{prop}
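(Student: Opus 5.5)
We plan to obtain \eqref{eq:vSN} by direct substitution into \eqref{eq:v}, using the $r=0$ version of \eqref{eq:CpartialT2} for the numerator and the Breeden--Litzenberger identity for the denominator. In moneyness coordinates, $p^*(t,\kappa)$ is the density of $\log(S_t/S_0)$ under the share measure. By the Esscher tilt (Proposition \ref{prop:esscher}, as used in Proposition \ref{prop:rightcall}), this law is extended skew-normal. Concretely, $\partial_\kappa c(t,\kappa)=-\big(1-\Phi(b_-;\alpha)\big)$ in the $C(t,e^\kappa,1)$ convention, and differentiating once more gives
\[
\partial^2_{\kappa\kappa}c-\partial_\kappa c=\frac{1}{\sigma\sqrt t}\,\phi\big(b_+;\alpha,\tau\big),
\]
where we use \eqref{eq:rel_01} to pass from $e^{\kappa}\phi(b_-;\alpha)$ to $\phi(b_+;\alpha,\tau)$. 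Here $\alpha=\delta/\sqrt{1-\delta^2}$ and $\tau=\delta\sigma\sqrt t$. With $r=0$, Theorem \ref{thm:noarb3} gives
\[
\partial_t c=\frac{\sigma}{2\sqrt t}\Big[\phi(b_+;\alpha,\tau)-\delta\,\frac{\phi(\tau)}{\Phi(\tau)}\,\Delta(b_+,\alpha,\tau)\Big].
\]
Dividing the two expressions yields \eqref{eq:vSN} with the first expression for $R$. The second expression for $R$, in terms of the Mills ratio and $U_{x,\tau}$, is exactly the representation behind Lemma \ref{lem:boundbydelta}. We will simply quote it, checking that the prefactor is $\alpha/\sqrt{1+\alpha^2}=\delta$.

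The bounds \eqref{eq:vlbounds} then follow quickly. The lower bound comes from Lemma \ref{lem:boundbydelta} in the form $\delta\,\frac{\phi(\tau)}{\Phi(\tau)}\,\Delta\le \delta^2\phi(x;\alpha,\tau)$, which was already used in Theorem \ref{thm:noarb3}; it gives $\delta R\le\delta^2$. For the upper bound we need $\delta R\ge0$. This follows because the Mills-ratio representation shows that $R$ has the sign of $\alpha$, which is also the sign of $\delta$.

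For the extension to $t=0$, we let $t\to0$ with $\kappa$ fixed, so that $\tau\to0$. If $\kappa\neq0$, then $b_+(\kappa,\sigma\sqrt t,\delta,0)\sim\kappa/(\sigma\sqrt t)\to\pm\infty$ with the sign of $\kappa$. We therefore need the limits of $R(x,\alpha,0)$ as $x\to\pm\infty$, and we will compute them from the explicit $\Delta$ in \eqref{eq:Delta}. The expected behaviour is that in the direction where the skewing factor $\Phi(\alpha x)\to1$, we get $\Delta/\phi\to0$, so $R\to0$ and $v_L\to\sigma^2/2$. This is the case $\kappa\delta>0$. In the other direction, Mills-ratio asymptotics should give $R\to\delta$, so $v_L\to\sigma^2(1-\delta^2)/2$. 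This is consistent with the bounds being attained, and the Mills representation of $R$ should make both limits transparent through dominated convergence in $U_{x,\tau}$. At $\kappa=0$, \eqref{eq:logtaylor} gives $b_+\to\delta\sqrt{2/\pi}$, and continuity of $R$ in $(x,\tau)$ yields \eqref{eq:vl00}.

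Differentiability away from $(0,0)$ holds for $t>0$ by smoothness of all ingredients. Near $t=0$ with $\kappa\ne0$, we write $v_L$ as a smooth function of $(\sqrt t,\kappa)$ composed with $R$ evaluated at $x\to\pm\infty$, and show that the derivatives decay (Gaussian tails in $1/\sqrt t$ kill the powers of $t^{-1/2}$). The discontinuity at $(0,0)$ holds because the three limiting values differ when $\delta\neq0$; along the rays $\kappa=c\sqrt t$ one gets a continuum of limits $R(c/\sigma+\delta\sqrt{2/\pi},\alpha,0)$. The main obstacle is the precise tail asymptotics of $R(x,\alpha,\tau)$ as $x\to\pm\infty$, uniformly for small $\tau$, including the derivative control needed for differentiability at $t=0$. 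I would attack this first through the Mills-ratio representation from Lemma \ref{lem:boundbydelta}, using $m(u)\sim1/u$.
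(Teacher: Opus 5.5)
Your derivation of \eqref{eq:vSN} and of the bounds \eqref{eq:vlbounds} is the paper's argument. You divide the $r=0$ case of \eqref{eq:CpartialT2} by the share-measure density $\phi(b_+;\alpha,\tau)/(\sigma\sqrt t)$, and then Lemma \ref{lem:boundbydelta} gives $0\le\delta R\le\delta^2$. There is a small slip: $\partial_\kappa c=-e^{\kappa}\bigl(1-\Phi(b_-;\alpha)\bigr)$, and without the factor $e^{\kappa}$ your next identity does not follow.

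For the limits at $t=0$, $\kappa\neq0$, you only announce the expected values. The tool you name, $m(u)\sim 1/u$, covers only the side $\kappa\delta>0$, where $U_{x,\tau}$ escapes to $+\infty$. On the side $\kappa\delta<0$ you need the idea the paper uses. The mean $\delta x$ of $U_{x,\tau}$ tends to $-\infty$ while the truncation point $-\tau=-\delta\sigma\sqrt t$ tends to $0$, so $U_{x,\tau}$ collapses in law onto $0$. Since $m$ is bounded and continuous on the support, $\mathbb E[m(U_{x,\tau})]\to m(0)$ and $R\to\delta\,\frac{\phi(0)}{\Phi(0)}\,m(0)=\delta$. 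Handled this way, $x$ and $\tau$ move jointly and no separate uniformity in $\tau$ is needed.

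The genuine gap is the differentiability step at $t=0$. The mechanism you propose is not there, and the conclusion it targets fails in the $t$-direction. $\Delta$ and $\phi(x;\alpha,\tau)$ individually have Gaussian tails, but these cancel in $R$. So $R$ approaches its limits only at the Mills-ratio rate $1/x$, with $x=b_+\approx\kappa/(\sigma\sqrt t)$. For $\kappa\delta>0$, $\mathbb E[m(U_{x,\tau})]\approx 1/(\delta x)$ gives $R\approx\sqrt{2/\pi}\,\sigma\sqrt t/\kappa$, that is,
\begin{equation*}
v_L(t,\kappa)=\frac{\sigma^2}{2}-\frac{\sigma^3\delta}{2\kappa}\sqrt{\frac{2}{\pi}}\,\sqrt t+O(t).
\end{equation*}
For $\kappa\delta<0$, the expansion $\mathbb E[U_{x,\tau}]+\tau\approx(1-\delta^2)/|\delta x|$ together with $m'(0)=-1$ shows that $R-\delta$ is likewise of exact order $\sqrt t$. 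Hence $\partial_t v_L(t,\kappa)\sim c(\kappa)\,t^{-1/2}$ with $c(\kappa)\neq0$ whenever $\delta\neq0$. The extension therefore has no $t$-derivative at $t=0$, and no decay estimate can fix this.

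What does hold is smoothness for $t>0$ and differentiability in $\kappa$ up to $t=0$ for $\kappa\neq0$. Indeed $\partial_xR=O(x^{-2})$, so by \eqref{eq:VSNkappa} $\partial_\kappa v_L=O(\sqrt t)\to0$, matching the zero $\kappa$-derivative of the piecewise constant $v_L(0,\cdot)$. The paper's proof dismisses this part as obvious and later only uses the $\kappa$-derivative bound of Proposition \ref{lem:llipschitz}, so you should restrict the claim accordingly. Your ray argument for the discontinuity at $(0,0)$, giving limits $R(c/\sigma+\delta\sqrt{2/\pi},\alpha,0)$ along $\kappa=c\sqrt t$, is correct and more informative than the paper's treatment.
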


\begin{proof}

 As was shown in Proposition \ref{prop:rightcall}, combining \eqref{eq:Xtpdf} and  Proposition \ref{prop:esscher} that under the share measure $X_t \sim$ SN$\left(\frac{\delta}{\sqrt{1-\delta^2}}, \delta \sigma \sqrt{t}, \sigma^2t-\mu^*({t}),\sigma \sqrt{t} \right)$ with PDF given by
\begin{equation}\label{eq:pstar}
p^*(t,\kappa)=\frac{1}{\sigma \sqrt{t}}\phi\left(b_+(\kappa, \sigma \sqrt{t},\delta,0); \frac{\delta}{\sqrt{1-\delta^2}}, \delta \sigma \sqrt{t} \right).
\end{equation} Clearly $\partial c(t,\kappa)/\partial t=\partial C(t,e^{\kappa},1)/\partial T$; therefore evaluating \eqref{eq:CpartialT2} in $K=e^{\kappa}$ and $r=0$, substituting it in \eqref{eq:v} together with \eqref{eq:pstar} and simplifying as necessary yields \eqref{eq:vSN}.

We now turn to showing \eqref{eq:lim_T0}.  Assume first $\kappa \neq 0$. We need to show that

\begin{equation}\label{eq:Rlimit}
 R\left(b_+(\kappa,\sigma\sqrt{T},\delta,0), 
\frac{\delta}{\sqrt{1-\delta^2}},
\delta\sigma\sqrt{T}\right)\rightarrow \left\{ \begin{array}{cc} \delta & \mbox{ if } \kappa \delta <0  \\  0 & \mbox{ if } \kappa \delta >0   \end{array} \right.
\end{equation}
when $T\rightarrow 0$. As already observed
$b_+(T,\kappa)  \sim \kappa/\sigma \sqrt{T}$ as $T \rightarrow 0$, so let , $s=\sigma \sqrt{T}$ $x(s)=\kappa/ s$, $\tau=\delta  s $ and the usual substitutions $\alpha=\delta/\sqrt{1-\delta^2}$, $\delta=\alpha/\sqrt{1+\alpha^2}$ then \eqref{eq:Rlimit} is implied by
\begin{equation}\label{eq:Rlimit2}
 R\left(x(s), \alpha,\delta s \right)\rightarrow \left\{ \begin{array}{cc} \delta & \mbox{ if } \kappa \delta <0  \\  0 & \mbox{ if } \kappa \delta >0   \end{array} \right.
\end{equation}
as $s \rightarrow 0$. Now if $k \delta <0$, i.e $x(s) \delta \rightarrow -\infty$ then $-\delta(s+x(s)) \rightarrow \infty$ whereas if $k \delta  >0$ then $x(s) \delta \rightarrow \infty$, and $-\delta(s+x(s)) \rightarrow -\infty$. 

Assume we are in the latter case.
Applying Lemma \ref{lem:boundbydelta} to $R$ to obtain the right hand side of \eqref{eq:R},
and a further substitution leads to
\begin{equation}\label{eq:Rtruncated}
    R(x(s), \alpha, \delta s)=\delta \frac{\phi(s \delta)}{\Phi(s \delta)}\frac{\int_{-\delta (s+ x(s))}^\infty \phi(u \sqrt{1+\alpha^2})m(u+\delta x(s))du }{\int_{-\delta(s+ x(s))}^\infty\phi(u \sqrt{1+\alpha^2})du}.
\end{equation}
It is easy to show that $\phi(z)/\Phi(z)$ is a continuous decreasing function, and so is $m$. Hence 
\begin{equation}
   \frac{\phi(s \delta)}{\Phi(s \delta)} \phi(u \sqrt{1+\alpha^2})m(u+\delta x(s))1_{\{u>\geq -\delta(s+x(s) \}} \leq \frac{\phi(0)}{\Phi(0)}\phi(u \sqrt{1+\alpha^2}) m(u) 
\end{equation}
But now for all $c>0$ it is $\int_c^\infty \phi(u \sqrt{1+\alpha^2}) m(u) < \infty$ because $m$ is bounded on $[c, \infty)$ and since $m(u) \sim \sqrt{2 \pi}e^{u^2/2}$ as $u \rightarrow -\infty$ then $m(u)\phi(u \sqrt{1+\alpha^2}) \sim e^{-x \alpha^2/2}$ and therefore we  also obtain that $\int_{-\infty}^c \phi(u \sqrt{1+\alpha^2}) m(u) < \infty$. In conclusion we can apply the dominated convergence theorem. Passing the limit on $s$ in the fraction \eqref{eq:Rtruncated} one sees that the denominator tends to 1 and the numerator to 0, proving the second line of \eqref{eq:Rlimit2}.


Conversely assume $\kappa \delta <0$, i.e. $\delta x(s) \rightarrow - \infty$. The asymptotic becomes a matter of probabilistic convergence. Let $u>0$ arbitrary;  using the truncated CDF and the usual asymptotics of $\Phi$ at $-\infty$  we see that 
\begin{align}\label{eq:0lawconv}
    &\mathbb P(U_{ x(s), \delta s}>u)=\frac{1-\Phi\left(\sqrt{1+\alpha^2}(u-\delta x(s))\right)}{1-\Phi\left(\sqrt{1+\alpha^2}(-\delta( s + x(s))) \right)}=  \frac{\Phi\left(\sqrt{1+\alpha^2}(-u+\delta x(s))\right)}{\Phi\left(\sqrt{1+\alpha^2}(\delta (s + x(s)) \right)} \nonumber \\ & \sim \frac{|\delta s + \delta x(s))|
    }{|-u +\delta x(s)|}\frac{\phi\left(\sqrt{1+\alpha^2}(-u+\delta x(s))\right)}{\phi(\sqrt{1+\alpha^2} \delta x(s))}\sim \frac{1}{\sqrt{2 \pi}}e^{\frac{1}{2}\left(-(1+\alpha^2)u +2 u \delta x(s) \right)}
\end{align}
implying $\mathbb P(U_{ x(s), \delta s} \geq u) \rightarrow 0$ if $u>0$ and $\mathbb P(U_{ x(s), \delta s} \geq 0) \rightarrow 1$, which establishes that  $U_{ x(s), \delta s}  \stackrel{d}{\rightarrow} 0$, as $s \rightarrow 0$.
Let now $s_0>0$ and $C \leq 0$ such that supp $U_{ x(s), \delta s} \subset [C,\infty) $ for all $s < s_0$ (one can take $C=0$  for any $s_0$ if $\delta >0$). The function $h(u)=m(u)\mathbf 1_{\{u \geq C\}}+m(C)\mathbf 1_{\{u <C\}}$ is a continuous bounded function, hence from the Portmanteau theorem we obtain, for all $s < s_0$, that \begin{equation} \mathbb E\left[m\left(U_{ x(s), \delta s}\right)\right]=\mathbb E\left[h\left(U_{ x(s), \delta s}\right)\right] \rightarrow m(0)\end{equation} as $ s \rightarrow 0$. In conclusion we have
\begin{equation}
    R\left(x(s), \alpha,\delta s \right)= \delta \frac{\phi(\delta s)}{\Phi(\delta s)}\mathbb E[m\left(U_{ x(s), \delta s}\right)] \rightarrow \delta \frac{\phi(0)}{\Phi(0)} m(0)=\delta
\end{equation}
and the first line of \eqref{eq:Rlimit2} is also proven.

The case $\kappa=0$ is a simple limit 
without indeterminacy, and the claim on extendibility is now obvious.

Equation \eqref{eq:vlbounds} follows immediately by using the right hand side of \eqref{eq:keybound} in \eqref{eq:vSN}--\eqref{eq:R}.
\end{proof}


The function $v_L$ is thus globally bounded and uniformly elliptic.  We shall show next that its spatial partial derivative diverges around the origin, uniformly in $x$, and in an integrable way.

\begin{prop}\label{lem:llipschitz}  For all $t \ge 0$, $\kappa \in \mathbb R$ it holds that 
\begin{equation}
    |    \partial_\kappa v_L(t, \kappa)| \leq \frac{\sigma \delta^2}{2\sqrt{1-\delta^2}}\frac{1}{\sqrt{t}}.
\end{equation}
\end{prop}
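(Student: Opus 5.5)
The plan is to reduce the estimate to a bound on the $x$-derivative of the function $R$ in \eqref{eq:R}, using the chain rule and the explicit dependence of $b_+$ on $\kappa$. With $r=0$ and $s=\sigma\sqrt t$, \eqref{eq:b} gives $b_+(\kappa,s,\delta,0)=(\kappa+\log(2\Phi(\delta s))-s^2/2)/s$. This is affine in $\kappa$ with $\partial_\kappa b_+=1/s$, while $\alpha=\delta/\sqrt{1-\delta^2}$ and $\tau=\delta s$ do not depend on $\kappa$. Differentiating \eqref{eq:vSN} therefore gives
\begin{equation*}
\partial_\kappa v_L(t,\kappa)=-\frac{\sigma^2\delta}{2}\,\frac{1}{\sigma\sqrt t}\,\partial_x R(x,\alpha,\tau)\Big|_{x=b_+}=-\frac{\sigma\delta}{2\sqrt t}\,\partial_x R(x,\alpha,\tau)\Big|_{x=b_+}.
\end{equation*}
The claim thus follows once we show the uniform bound $|\partial_x R(x,\alpha,\tau)|\le |\alpha|=|\delta|/\sqrt{1-\delta^2}$ for all $x,\tau\in\mathbb R$, since then $|\partial_\kappa v_L|\le \frac{\sigma|\delta|}{2\sqrt t}\frac{|\delta|}{\sqrt{1-\delta^2}}$. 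At $t=0$ the right-hand side is $+\infty$, so that case is trivial.

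To bound $\partial_x R$ I would use the probabilistic representation from Lemma \ref{lem:boundbydelta}, in the truncated form \eqref{eq:Rtruncated} written for general $\tau$:
\begin{equation*}
R=\delta\,\frac{\phi(\tau)}{\Phi(\tau)}\,\frac{N(x)}{D(x)},\qquad N(x)=\int_{L(x)}^\infty \phi(u\sqrt{1+\alpha^2})\,m(u+\delta x)\,du,\qquad D(x)=\int_{L(x)}^\infty \phi(u\sqrt{1+\alpha^2})\,du,
\end{equation*}
where $L(x)$ is affine in $x$ with slope $-\delta$. Differentiating $N/D$ produces two contributions:
\begin{itemize}
\item[(a)] an interior term $\delta\,\mathbb E[m'(U_{x,\tau}+\delta x)]$ (up to the shift convention), coming from the $x$-dependence of the integrand;
\item[(b)] boundary terms, which combine into $\delta\,\mathrm{hz}(L(x))\,\bigl(\mathbb E[m(\cdot)]-m(\text{value at }L)\bigr)$, where $\mathrm{hz}$ is the hazard of the truncated Gaussian at the cut.
\end{itemize}
For (a), the Mills ratio satisfies $m'(u)=u\,m(u)-1\in(-1,0)$, so this term has absolute value at most $|\delta|$. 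For (b), $m$ is decreasing and $U_{x,\tau}$ lives to the right of the cut, so the bracket has a fixed sign, and we can hope it partly cancels against (a).

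As a cross-check I would also compute $\partial_x R$ directly from $R=\frac{\phi(\tau)}{\Phi(\tau)}\Delta/\phi(x;\alpha,\tau)$. This uses $\partial_x\log\phi(x;\alpha,\tau)=-x+\alpha\,h(\alpha x+\sqrt{1+\alpha^2}\,\tau)$ with $h=\phi/\Phi$, together with the known properties $0<-h'<1$ and $h(z)+z>0$. The aim is the same bound $|\partial_xR|\le|\alpha|$, with the prefactor $\phi(\tau)/(\Phi(\tau)m(-\tau))=\phi(\tau)^2/\Phi(\tau)^2$-type quotients absorbed.

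The main obstacle is the last step: controlling the boundary/hazard term (b) so that the total derivative is bounded by exactly $|\alpha|$, uniformly in $\tau\in\mathbb R$. The factor $1/\sqrt{1-\delta^2}$ in the target suggests the bound arises from the $\sqrt{1+\alpha^2}$ scaling inside the truncated Gaussian, via $\delta\sqrt{1+\alpha^2}=\alpha$. If a direct bound fails, I would first try to write $\partial_xR$ as a covariance under the law of $U_{x,\tau}$ between $m(\cdot)$ and the score variable. I would then bound that covariance by $\sqrt{1+\alpha^2}$ times $\sup|m'|\le 1$, using the log-concavity of the truncated Gaussian (a Brascamp--Lieb type variance inequality).
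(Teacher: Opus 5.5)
\paragraph{Where the proof breaks.} Your reduction is the paper's: $\partial_\kappa v_L=-\frac{\sigma\delta}{2\sqrt t}\,\partial_xR|_{x=b_+}$, so everything rests on $|\partial_x R|\le|\alpha|$ uniformly in $x$ and $\tau$. The gap is in that bound.

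Your main route (interior term via $|m'|<1$, boundary term hoped to cancel) and your fallback (covariance plus a Brascamp--Lieb bound through $\sup|m'|\le 1$) both only control $\partial_x\mathbb E[m(U_{x,\tau})]$. At best they give a $\tau$-independent bound of order $|\delta|$.

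But $R=\delta\,\frac{\phi(\tau)}{\Phi(\tau)}\,\mathbb E[m(U_{x,\tau})]$, and the prefactor $\phi(\tau)/\Phi(\tau)=1/m(-\tau)$ is unbounded as $\tau\to-\infty$, where it grows like $|\tau|$. Since $\tau=\delta\sigma\sqrt t$, this is exactly the regime $\delta<0$, $t\to\infty$, which is the financially relevant case. There your estimate gives only $|\partial_x R|\lesssim\delta^2|\tau|$, hence $|\partial_\kappa v_L|=O(1)$ instead of $O(t^{-1/2})$, and the stated inequality is not reached.

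For $\delta>0$ one has $\tau\ge 0$ and $\phi(\tau)/\Phi(\tau)\le\sqrt{2/\pi}$, so there your fallback would suffice. Relatedly, your prefactor ``$\phi(\tau)/(\Phi(\tau)m(-\tau))$'' double-counts: the two expressions in \eqref{eq:R} coincide precisely because $\phi(\tau)/\Phi(\tau)=1/m(-\tau)$.

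\paragraph{The missing idea.} The covariance must be bounded by something proportional to $m(-\tau)$. That means using the \emph{size} of $m$ on the truncated support, not its Lipschitz constant.

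The paper keeps the truncation point fixed at $-\tau$, as in Lemma \ref{lem:boundbydelta}, so $x$ enters only through the mean $\alpha x/\sqrt{1+\alpha^2}$ of $U_{x,\tau}$. Then there are no boundary terms at all. Since $\partial_x\log\phi^x(u)=\alpha\sqrt{1+\alpha^2}\,(u-\mathbb E[U_{x,\tau}])$, one gets exactly $\partial_x\mathbb E[m(U_{x,\tau})]=\alpha\sqrt{1+\alpha^2}\,\mathrm{Cov}(U_{x,\tau},m(U_{x,\tau}))$.

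Now apply Cauchy--Schwarz with two bounds:
\begin{itemize}
\item $\mathrm{Var}(U_{x,\tau})\le 1/(1+\alpha^2)$, because truncation shrinks Gaussian variance;
\item $\mathrm{Var}(m(U_{x,\tau}))\le m(-\tau)^2$, because $0<m(U_{x,\tau})\le m(-\tau)$, as $m$ is decreasing and $U_{x,\tau}\ge-\tau$.
\end{itemize}
This gives $|\mathrm{Cov}|\le m(-\tau)/\sqrt{1+\alpha^2}$. The factor $m(-\tau)$ cancels the prefactor $\phi(\tau)/\Phi(\tau)$ exactly, so $|\partial_x R|\le|\alpha\delta|\le|\alpha|$ for all $x,\tau\in\mathbb R$.
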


\begin{proof}
 Since $\partial _\kappa b_+(\kappa, \sigma \sqrt{T}, \delta,0)=(\sigma \sqrt{t})^{-1}$ by directly differentiating \eqref{eq:vSN}, and with the usual variable change \eqref{eq:subs}, it holds that
\begin{equation}\label{eq:VSNkappa}
    \partial_\kappa v_L(t, \kappa)=\textcolor{blue}{-}\frac{\sigma \delta}{2 \sqrt{t}}\partial_x R(x, \alpha, \tau)\Big|_{x=b_+(\kappa, \sigma \sqrt{t}, \delta, 0)}.
\end{equation}
Hence it is enough to show that $| \partial_x R(x, \alpha, \tau)  | \leq|\alpha|$. To this end we use the probabilistic representation of Lemma \ref{lem:boundbydelta}. The PDF $\phi^x$ of $U_{x, \tau}$  is given by \begin{equation}
\phi^x(u)=\frac{\phi\left( u \sqrt{1+\alpha^2} -\alpha x\right)}{1-\Phi(-\sqrt{1+\alpha^2}\tau-\alpha x)}\mathbf1_{\{u \geq -\tau\}}=\sqrt{1+\alpha^2}\frac{\phi\left(u \sqrt{1+\alpha^2} -\alpha x\right)}{\Phi(\sqrt{1+\alpha^2}\tau+\alpha x)}\mathbf1_{\{u \geq -\tau\}}.
\end{equation} Fix $K >0$ and let us restrict to $|x| \leq K$. Differentiating yields 
\begin{align}\label{eq:partialxtrunc}
    \partial_x \phi^x(u)&=\alpha \mathbf \phi^x(u) \left(u \sqrt{1+\alpha^2} -\alpha x -\frac{\phi(\sqrt{1+\alpha^2}\tau+\alpha x)}{\Phi(\sqrt{1+\alpha^2}\tau+\alpha x)} \right) \nonumber \\  & <|\alpha| \phi^x(u)(u \sqrt{1+\alpha^2}+K |\alpha|).\end{align}
The second line above is integrable in the Lebesgue measure, so the differentiation can be passed under the expectation in \eqref{eq:keybound}. The crucial remark is now that 
\begin{equation}
\mathbb E[U_{x,\tau}]=\frac{\alpha x }{\sqrt{1+\alpha^2}} +\frac{1}{\sqrt{1+\alpha^2}}\frac{\phi(\sqrt{1+\alpha^2}\tau+\alpha x)}{\Phi(\sqrt{1+\alpha^2}\tau+\alpha x)}
\end{equation}
so that integrating explicitly and making use of the the first line of \eqref{eq:partialxtrunc} produces
\begin{align}
 \partial_x \mathbb E[m(U_{x, \tau})]&=\int_{-\tau}^\infty  \partial_x\phi^x(u)m(u) du \nonumber \\ &=\alpha \sqrt{1+\alpha^2} \int_{-\tau}^\infty\phi^x(u)u\, m(u) du-  \alpha \sqrt{1+\alpha^2}\E[U_{x,\tau}]\int_{-\tau}^\infty\phi^x(u) m(u) du \nonumber \\ &=\alpha \sqrt{1+\alpha^2}
 \mbox{Cov}\left(U_{x,\tau}, m(U_{x,\tau})\right)
\end{align}
for all $|x| \leq K.$ Using the Cauchy-Schwarz inequality one can bound this covariance uniformly in $x$ and $\tau$. Indeed, the Gaussian distribution is log-concave and hence by  \cite{mailhot1988some}, Corollary 4, Var$(U_{x,\tau})$ is increasing in $\tau$; moreover since $U_{x,\tau} \stackrel{d}{\rightarrow} Z \sim N(0, (1+\alpha^2)^{-1})$ we have  Var$(U_{x,\tau})<1/(1+\alpha^2)$ for all $|x| \leq K, \tau >0$. Also obviously Var$(m(U_{x, \tau})) <\mathbb E[m(U_{x,\tau})^2]<m(-\tau)^2$ since $m$ is monotone decreasing. 
Putting this together and applying the Cauchy-Schwarz inequality leads to
\begin{align}
    & |   \partial_x R(x, \alpha, \tau) |= \frac {\phi(\tau)}{\Phi(\tau)}\Big|\partial_x\mathbb E[m(U_{x,\tau})]\Big|=|\alpha| \sqrt{1+\alpha^2}\frac{\phi(\tau)}{\Phi(\tau)}\big| \mbox{Cov}\left(U_{x,\tau}, m(U_{x,\tau})\right)\Big| \nonumber \\ &<|\alpha| \sqrt{1+\alpha^2}\frac{\phi(\tau)}{\Phi(\tau)}\sqrt{\mbox{Var}(U_{x,\tau})\mbox{Var}\left(m(U_{x,\tau})\right) }<|\alpha|
     \end{align}
for all $\alpha, \tau \in \mathbb R$ and $|x|<K$. Since the right hand constant does not depend on $K$,  the claim is proved.
\end{proof}
We have now sufficient analytical evidence to conclude strong well-posedness of the SDE \eqref{eq:SDE} with skew-normal marginals. Boundedness, uniform ellipticity and sufficiently slow rate of explosion of the derivative puts the existence and uniqueness  problem in a rather standard setting. Use of the now-classic Ambrosio-Figalli-Trevisan \emph{superposition principle}, is seemingly key to make sure that the prices of LV model match those from the skew-normal call option formula.

\begin{thm}
 The SDE \eqref{eq:SDE} with $v_L$ given by \eqref{eq:vSN} admits a unique strong martingale  solution, which is also such that for all $t, K, S_0 >0$ it holds $\E[(S_t-K)^+]=C(t,K, S_0)$, with $C$ as in \eqref{eq:call_m0}.  We denominate the corresponding process $S=(S_t)_{t \geq 0}$  the skew-normal local volatility  {(\upshape SNLV)} option pricing model.
\end{thm}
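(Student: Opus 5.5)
We will pass to log-price coordinates. Setting $Y_t=\log(S_t/S_0)$, It\^o's formula turns \eqref{eq:SDE} into the one-dimensional SDE
\begin{equation*}
dY_t=-v_L(t,Y_t)\,dt+\sqrt{2v_L(t,Y_t)}\,dW_t,\qquad Y_0=0 .
\end{equation*}
Strong existence and uniqueness for $S$ are then equivalent to the same properties for $Y$, since $S=S_0e^{Y}$.

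By Proposition \ref{prop:extvl}, the diffusion coefficient $a(t,y)=\sqrt{2v_L(t,y)}$ is bounded and uniformly elliptic: it lies between $\sigma\sqrt{1-\delta^2}$ and $\sigma$. The drift $-v_L$ is bounded as well. Proposition \ref{lem:llipschitz} gives the Lipschitz bound $|\partial_y v_L(t,y)|\le c\,t^{-1/2}$. Because $v_L\ge \sigma^2(1-\delta^2)/2>0$, the square root is also Lipschitz in $y$, with constant $c'\,t^{-1/2}$.

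The modulus $t^{-1/2}$ is integrable near $0$. This makes pathwise uniqueness straightforward. Take two solutions $Y,Y'$ on the same space and apply It\^o's formula to $|Y_t-Y'_t|^2$, or use the Yamada--Watanabe local time argument. This yields
\begin{equation*}
\E|Y_t-Y'_t|^2\le \int_0^t \bigl(2c'\,u^{-1/2}+c'^2u^{-1}\bigr)\E|Y_u-Y'_u|^2\,du .
\end{equation*}
The quadratic term $u^{-1}$ is not integrable, so a plain Gr\"onwall argument fails here. I therefore plan to avoid it: in dimension one it suffices that $a$ be $1/2$-H\"older in space, uniformly on $[0,T]$. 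Away from $t=0$, on $[\epsilon,T]$, the coefficients are globally Lipschitz, which gives uniqueness there. Near $0$ I would use the Le Gall/Yamada--Watanabe criterion, which needs $|a(t,y)-a(t,y')|\le \rho(|y-y'|)$ with $\int_{0^+}\rho^{-2}=\infty$. This is available because $v_L$ is bounded: combining boundedness with the Lipschitz bound gives $|a(t,y)-a(t,y')|\le C\min(1,t^{-1/2}|y-y'|)$. The delicate point is that the $t^{-1/2}$ factor spoils time-uniformity. As a fallback I would invoke Veretennikov's theorem (or Zvonkin's transform), which gives strong existence and uniqueness for bounded measurable drift and a uniformly elliptic diffusion coefficient that is continuous in space (H\"older/Dini uniformly in time) on $[\epsilon,T]$. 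On $[0,\epsilon]$, the key point is that only spatial regularity is needed for each $t$ together with a time-integrable Lipschitz constant for $a$; I would check via the Le Gall local time argument that $\int_0^T c'^2u^{-1}\,d\langle Y-Y'\rangle$-type terms do not appear. In the local time approach only $\int_0^t |a(u,Y_u)-a(u,Y'_u)|^2/|Y_u-Y'_u|\,du$ matters, and this is bounded by $\int_0^t C\,u^{-1/2}\,du<\infty$ using $\min(1,x)^2\le x$. That yields $L^0(Y-Y')=0$, after which the drift difference is handled by Tanaka's formula and Gr\"onwall with the integrable kernel $u^{-1/2}$. Weak existence follows from bounded continuous-in-space elliptic coefficients (Krylov/Stroock--Varadhan). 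By Yamada--Watanabe, we then get a unique strong solution. Finally, $S$ is a true martingale: $\sqrt{2v_L}\le\sigma$ is bounded, so Novikov's condition holds for $S=S_0\mathcal E(\int a\,dW)$.

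It remains to show that the marginals match. Let $\mu_t$ be the law of $Y_t$ for the SDE solution, and let $p(t,\cdot)$ be the skew-normal density \eqref{eq:Xtpdf}, which solves the Fokker--Planck equation \eqref{eq:pFP}. Both $\mu_t$ and $p(t,\cdot)\,d\kappa$ are weak solutions of the same FPE with initial datum $\delta_0$ and bounded coefficients; in particular $\int_0^T\int(|v_L|+v_L)\,p\,d\kappa\,dt<\infty$. By the Ambrosio--Figalli--Trevisan superposition principle, $p$ is the family of one-dimensional marginals of some martingale-problem solution. Uniqueness in law from the previous step (strong uniqueness implies weak uniqueness) identifies that solution with the law of $Y$, so $\mu_t=p(t,\cdot)$. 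Consequently $\E[(S_t-K)^+]=\int(S_0e^\kappa-K)^+p(t,\kappa)\,d\kappa=C(t,K,S_0)$ by Proposition \ref{prop:rightcall} with $r=0$.

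I expect the main obstacle to be the uniqueness near $t=0$, where $v_L$ is discontinuous at $(0,0)$ and the Lipschitz constant blows up like $t^{-1/2}$. The local-time route above is what I would try first.
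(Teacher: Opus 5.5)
Your proposal is correct. Once the false starts (the $L^2$--Gr\"onwall attempt and the Veretennikov fallback) are set aside, it is essentially the paper's proof: log-coordinates, the estimate $|\sqrt{2v_L(t,x)}-\sqrt{2v_L(t,y)}|^2\le C t^{-1/2}|x-y|$ fed into Le Gall's local-time criterion with $\rho(x)=x$ and integrable $c(t)=Ct^{-1/2}$, Yamada--Watanabe, and the Ambrosio--Figalli--Trevisan superposition principle plus uniqueness in law to match the marginals. The one genuine difference is the true martingale property: you get it directly from Novikov (since $2v_L\le\sigma^2$), whereas the paper deduces it from $S$ being a positive local martingale with constant expectation, which relies on the marginal identification.
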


\begin{proof}

Consider the SDE  for $X_t=\log(S_t/S_0)$, which, by It\^o's Lemma is given by
\begin{equation}\label{eq:logsde}
    d X_t= -v_L(t, X_t ) dt + \sqrt{2 v_L(t, X_t)} d W_t, \qquad X_0=0. 
\end{equation}
Strong-well posedness of \eqref{eq:SDE}--\eqref{eq:vSN} is equivalent to that of 
\eqref{eq:logsde}.

Based on  \eqref{eq:vlbounds}, the coefficients of the SDE \eqref{eq:logsde} are bounded form above and away from zero, so 
 \eqref{eq:logsde}  possesses weak  solutions unique in law  by \cite{revuz2013continuous},  Corollary IX.1.14. 
Now  as a direct consequence of  Proposition \ref{lem:llipschitz}, for all $x,y \in \mathbb R$ and $t>0$ fixed, it follows the usual Lipschitz estimate \begin{equation}\label{eq:loclip}
   \left|\sqrt{2 v_L(t,x)}-\sqrt{2 v_L(t,y)}\right| ^2  < 2|v_L(t,x)-v_L(t,y)|  < 2 \sup_{\kappa \in \mathbb R} |\partial_\kappa v_L(t,\kappa)| \,|x-y| < \frac{ C}{\sqrt{t}}|x-y| 
\end{equation}
 with $C=\frac{\sigma \delta^2}{\sqrt{1-\delta^2}}$. 
 We can now apply \cite{le2006applications}, Theorem 1.3, in its time-inhomogeneous extension of Remark (c), page 22, since the necessary conditions (C)  are met using in the equations therein the following specifications: $\sigma(t,x)=\sqrt{2 v_L(t,x)},\rho(x)=x, a(x)=0$, $c(t)=C/\sqrt{t}$  and any $\delta >0$. This establishes pathwise uniqueness for any weak solution $(X,W)$ of \eqref{eq:logsde}.  Strong existence and uniqueness for the SDE \eqref{eq:logsde} then follow by the Yamada-Watanabe Theorem (\cite{revuz2013continuous}, Theorem IX.1.7).

Regarding the second statement, since $p(t,\kappa)$ in \eqref{eq:Xtpdf}, solves the Fokker-Planck equation \eqref{eq:pFP}. Now, as established $v_L$ is bounded, and therefore by \cite{figalli2008existence} Theorem 2.6, $p(t,\kappa) d \kappa$ is the law of $X_t$ under a probability measure $\nu$ which is the solution  to the martingale problem  associated with the PDE \eqref{eq:pFP}. Now by \cite{shreve1991brownian}, Proposition 5.4.6, there exists a   weak solution to \eqref{eq:SDE}--\eqref{eq:vSN}, say $\tilde X$,  on a possibly enlarged probability space $(\Omega', \nu')$ 
But then, since the solution \eqref{eq:SDE}--\eqref{eq:vSN} is strongly unique, it is also unique in law by the Yamada-Watanabe Theorem, and therefore \begin{equation}\mathcal L(X_t)=\mathcal L(\tilde X_t)=p(t,\kappa)d \kappa \end{equation} for all $t$. Thus ultimately $\E[(S_t-K)^+]=\E[(S_0e^{X_t}-K)^+]=C(t,K,S_0)$ follows.

That $S$ is a martingale and not a strict local martingale, follows from the fact that it is a supermatingale with constant expectation, the former property holding because $S$ is a  local martingale bounded from below.
\end{proof}

The local volatility surface and the $t$ sections are illustrated in Figure \ref{fig:localvol}.  We have obtained a quite simple SDE local-volatility formulation for a risk-neutral underlying $S$ with known marginals, of (geometric)  skew-normal type. In a similar vein, starting from the marginal laws of a jump-type process \cite{torricelli2025parametric}, determine a generalized-beta local volatility model, and show its strong well-posedness. An advantage of the idenitifiability  of the marginal distributions is  that the full implied volatility surface asymptotics are at hand, since those depend only on traded option prices value, and thus, ultimately, on the explicitly-known marginals of $S$. 

\begin{figure}[t]
\centering
\includegraphics[width=0.48\linewidth]{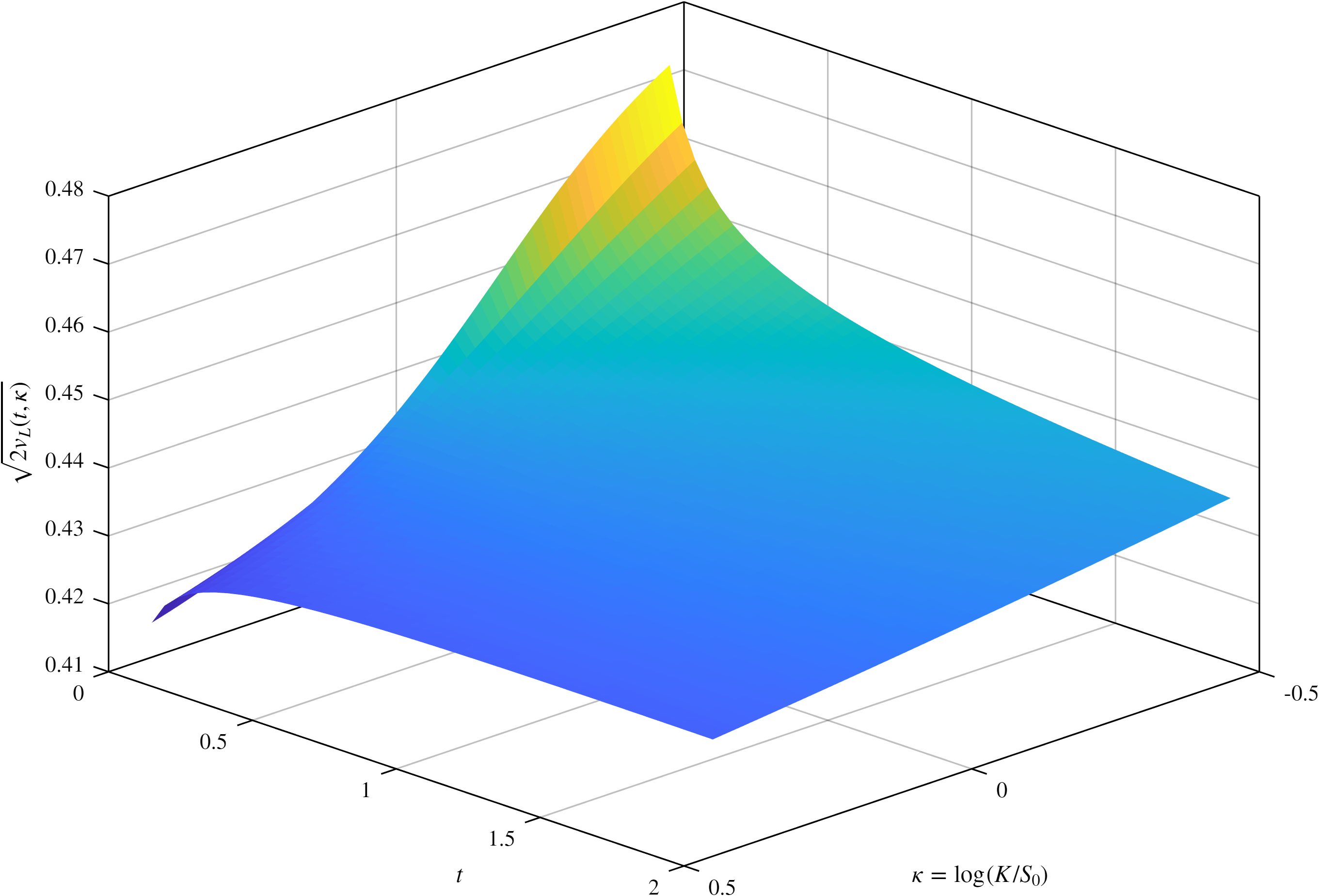}\hfill
\includegraphics[width=0.48\linewidth]{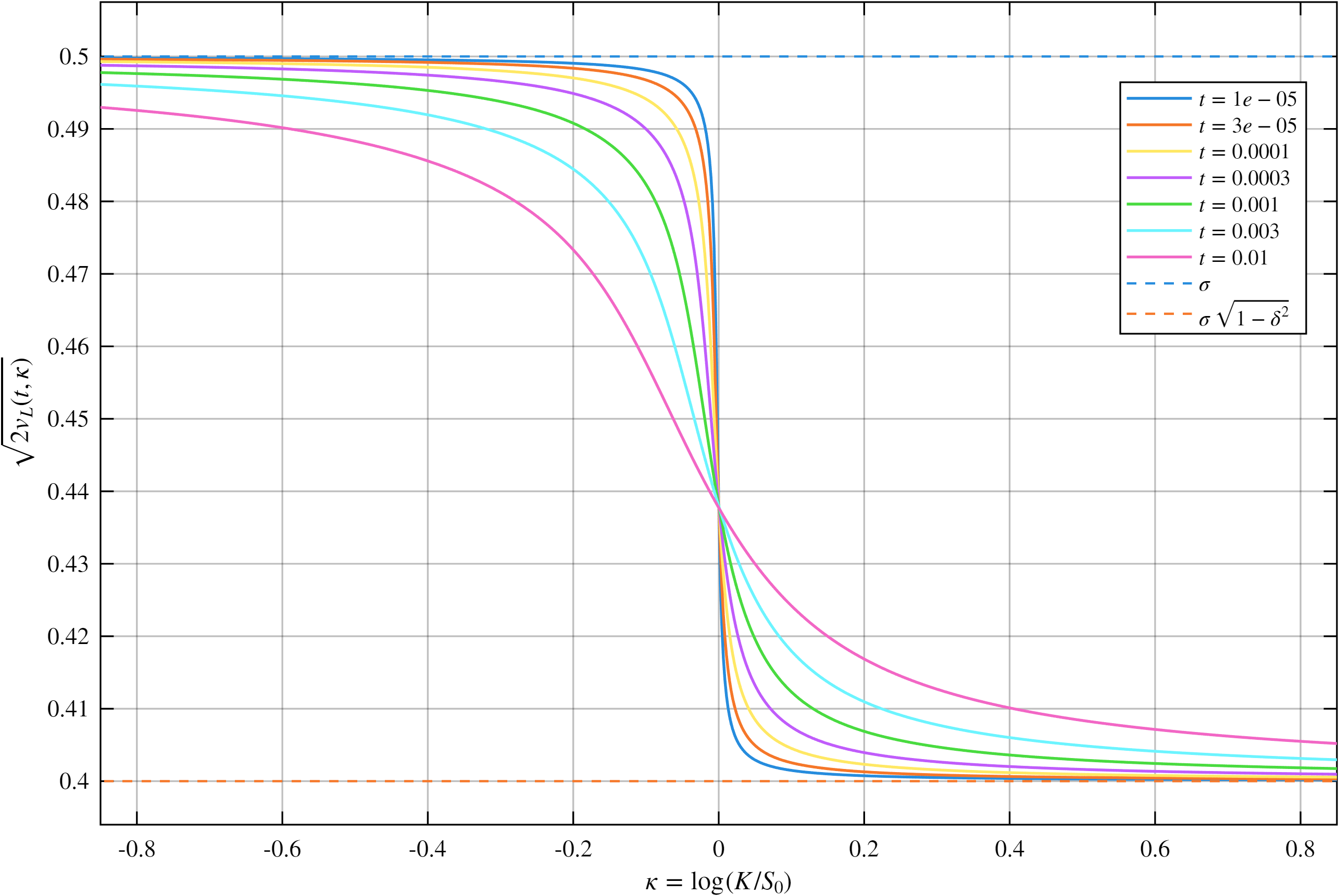}

\caption{\footnotesize
SNLV model local volatility surface. $S_0=100$, $\sigma=0.5$, $\delta=-0.6$, $r=0$.
 Left panel: local volatility surface.
Right panel: local volatility maturity slices.
}
\label{fig:localvol}
\end{figure}

\section{Implied  volatility}\label{sec:implvol}

An important element of analysis is to assess the ability of the SLNV model to reproduce the empirically observed implied volatility surface. For equity there is still general consensus --although some contrarian voices have recently arisen (\cite{guyon2022does})--  that the two prevailing features are a finite implied ATM level as time-to-maturity approaches zero, and a divergent ATM implied volatility skew  with approximately power law order of exponent approximately $-1/2$. In the following we briefly analyze where the SNLV model stands with respect to these two properties. In order to simplify the formulae, but by no means reducing generalities, we assume $r=0$.

The Black-Scholes (BS) call option price  with strike $K>0$, maturity $T>0$ and normal diffusion $\sigma>0$ is denoted by $C_{BS}(T,K,S_0, \sigma)$. The BS \emph{implied volatility} is then defined as the unique solution $\Sigma_{BS}(T,K;S_0)$ of
\begin{equation}\label{eq:impliedvol}C(T,K;S_0)=C_{BS}(T,K,S_0,\Sigma_{BS}(T,K,S_0)).
\end{equation}
Letting $\kappa=\log(K/S_0)$, when $C$ is given by \eqref{eq:call_m0} we observe the homogeneity properties
\begin{equation}\label{eq:ivolhom0}
 C(T,K,S_0)=S_0C(T,e^{\kappa},1),  \quad C_{BS}(T,K,S_0,\sigma_{BS})=S_0C_{BS}(T,e^{\kappa},1, \sigma_{BS})
\end{equation} and we have the following log-moneyness expression 
 \begin{equation}\label{eq:sdelta}
     \sigma_{BS}(T, \kappa):=\Sigma_{BS}(T, e^{\kappa},1).
 \end{equation}
We introduce further the implied volatility skew in log-moneyness as
 \begin{equation}\label{eq: classic classic skew}
     \mathcal S(T, \kappa)=\frac{\partial \sigma_{BS}}{\partial \kappa}(T, \kappa)
 \end{equation}
The implied volatility skew is a measure of the riskiness of change in spot, as maturity approaches, of the option price. The higher the slope, the faster the possible  OTM moves of the maturing option.

In the following proposition we provide the small-maturity limits of the ATM implied volatility and its skew in the SNLV model.



\begin{prop}\label{prop:skew}
Assume a zero risk-free rate is  paid on a market supporting the SNVL model. 
    Then, as $T \rightarrow 0$, we have that
    \begin{align}
     \sigma_{BS}(T,0) &\rightarrow 
        \sigma\sqrt{2\pi}
        \left[
            \phi\left(\delta \sqrt{\frac{2}{\pi}};\frac{\delta}{\sqrt{1-\delta^2}}\right)
            -
        \delta\sqrt{\frac{2}{\pi}}\Delta\left(\delta \sqrt{\frac{2}{\pi}},\frac{\delta}{\sqrt{1-\delta^2}},0\right)
        \right], \label{eq:ATMlvl} \\ 
       \mathcal S(T, 0)& \sim \sqrt{\frac{2 \pi}{T}}\left[-\frac{1}{2}+\Phi\left(\delta \sqrt{\frac{2}{\pi}}; \frac{\delta}{\sqrt{1-\delta^2}} \right)\right]. 
    \end{align}
    \end{prop}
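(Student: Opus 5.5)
The plan is to read both limits off the short-maturity behaviour of the explicit price $c(T,\kappa)=C(T,e^{\kappa},1)$, and to compare it with the Black--Scholes ATM expansions
\[
C_{BS}(T,1,1,\Sigma)=\Sigma\sqrt{T}\,\phi(0)+O\big((\Sigma\sqrt T)^3\big), \qquad \partial_\Sigma C_{BS}\big|_{\kappa=0}=\sqrt{T}\,\phi\big(\Sigma\sqrt T/2\big).
\]
Throughout, $r=0$ and we use the substitutions \eqref{eq:subs}: $\alpha=\delta/\sqrt{1-\delta^2}$ and $\tau=\delta\sigma\sqrt T$.

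\emph{ATM level.} First I will get the small-$T$ size of $c(T,0)$ by integrating the maturity derivative from Theorem \ref{thm:noarb3}. At $K=S_0$ and $r=0$ the first line of \eqref{eq:CpartialT2} vanishes, so
\[
\partial_T c(T,0)=\frac{\sigma}{2\sqrt T}\Big[\phi(x_T;\alpha,\tau_T)-\delta\tfrac{\phi(\tau_T)}{\Phi(\tau_T)}\Delta(x_T,\alpha,\tau_T)\Big], \qquad x_T=b_+(0,\sigma\sqrt T,\delta,0).
\]
By \eqref{eq:logtaylor}, $x_T\to\delta\sqrt{2/\pi}$, and $\tau_T\to 0$. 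Moreover $\phi(0)/\Phi(0)=\sqrt{2/\pi}$. Hence the bracket converges to
\[
L:=\phi\big(\delta\sqrt{2/\pi};\alpha\big)-\delta\sqrt{2/\pi}\,\Delta\big(\delta\sqrt{2/\pi},\alpha,0\big).
\]
Continuity of $\phi(\cdot;\alpha,\cdot)$ and of $\Delta$ in $(x,\tau)$ gives this limit, and the bracket is bounded near $T=0$ by the bound $\phi(x;\alpha,\tau)<2\phi(x)$ together with Lemma \ref{lem:boundbydelta}. Since $c(0,0)=0$ by A4, integrating gives $c(T,0)=\sigma L\sqrt T+o(\sqrt T)$.

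Next I invert the Black--Scholes expansion. Because $c(T,0)\to 0$ and $C_{BS}$ at the money is increasing in $\Sigma\sqrt T$, it follows that $\sigma_{BS}(T,0)\sqrt T\to0$. Then $c(T,0)=\sigma_{BS}(T,0)\sqrt T\,\phi(0)\,(1+o(1))$, and dividing by $\sqrt T$ gives $\sigma_{BS}(T,0)\to\sqrt{2\pi}\,\sigma L$, which is \eqref{eq:ATMlvl}. I also note that the limit is finite and strictly positive, since by Lemma \ref{lem:boundbydelta} the bracket is at least $(1-\delta^2)\phi(\cdot;\alpha)>0$.

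\emph{ATM skew.} Differentiating the defining identity $c(T,\kappa)=C_{BS}(T,e^\kappa,1,\sigma_{BS}(T,\kappa))$ in $\kappa$ gives
\[
\mathcal S(T,0)=\frac{\partial_\kappa c(T,0)-\partial_\kappa C_{BS}}{\partial_\Sigma C_{BS}}\Big|_{\kappa=0}.
\]
I then evaluate the three pieces.
\begin{itemize}
\item From \eqref{eq:call_m0}, or equivalently the Breeden--Litzenberger identity $\partial_K C=-\mathbb Q(S_T>K)$, we get $\partial_\kappa c(T,0)=-\big(1-\Phi(b_-(0,\sigma\sqrt T,\delta,0);\alpha)\big)$. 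This tends to $-1+\Phi(\delta\sqrt{2/\pi};\alpha)$ by \eqref{eq:logtaylor}.
\item On the Black--Scholes side, $\partial_\kappa C_{BS}=-\Phi(d_-)$ with $d_-=-\sigma_{BS}(T,0)\sqrt T/2\to0$, so this term tends to $-1/2$.
\item The vega is $\sqrt T\,\phi(d_+)\sim\sqrt{T/(2\pi)}$.
\end{itemize}
Combining these yields $\mathcal S(T,0)\sim\sqrt{2\pi/T}\,\big[\Phi(\delta\sqrt{2/\pi};\alpha)-\tfrac12\big]$, which is the claimed asymptotic.

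\emph{Main obstacle.} I expect the hardest step to be the level computation. It needs (a) the limit of the bracket in $\partial_T c$ as $\tau\to0$, which requires joint continuity of $\Delta(x,\alpha,\tau)$ in $(x,\tau)$. I would get this by dominated convergence in the probabilistic representation of Lemma \ref{lem:boundbydelta}, exactly as in the $\kappa=0$ case of Proposition \ref{prop:extvl}. It also needs (b) the justification that $\sigma_{BS}(T,0)\sqrt T\to0$, so that the Black--Scholes Taylor expansion can be inverted. The skew step depends on the level only through this boundedness, so it is routine once the level is established.
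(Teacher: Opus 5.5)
Your proposal is correct and follows essentially the paper's route. You obtain the ATM price $C(T,1,1)=\sigma L\sqrt{T}+o(\sqrt{T})$ from the maturity derivative \eqref{eq:CpartialT2} together with the limit \eqref{eq:logtaylor}; integrating $\partial_T c$ is the same thing as the paper's expansion in $s=\sigma\sqrt{T}$. The skew then comes from the limit of the digital price $1-\Phi(b_-;\alpha)$, exactly as in the paper, and the one difference is that you do by hand the ATM Black--Scholes inversion and the implicit-differentiation skew formula that the paper imports from Roper--Rutkowski (Theorem 5.1) and Gerhold et al.\ (Lemma 2.2).
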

    \begin{proof}
By e.g. \citealt{roper2009relationship}, Theorem 5.1, it holds that 
\begin{align}\label{eq:rr}
 \sigma_{BS}(T,0) \sim \sqrt{\frac{2\pi}{T}}C(T, 1, 1). 
 \end{align}
 Letting $s=  \sigma \sqrt{T}$ and denoting $\tilde C(s, 1, 1)$ the call value in hte variable $s$, using $\partial s/\partial T=\sigma/(2 \sqrt{T})$ from \eqref{eq:CpartialT2} with $r=0$  and using the limit \eqref{eq:logtaylor} we have  that
\begin{equation}
\tilde C_s(s, 1,1)=
    \phi\left(\delta \sqrt{\frac{2}{\pi}};\alpha\right)
        -
    \delta\sqrt{\frac{2}{\pi}}\Delta\left(\delta \sqrt{\frac{2}{\pi}},\alpha,0\right).
   \end{equation}
Since unlike $C$, $\tilde C$ no longer has a singularity in $s=0$, we can develop  $\tilde C(s,1,1) $ in its Taylor series around that point, and then substitute $s=\sigma \sqrt{T}$, arriving to 
\begin{equation}\label{eq:CallATMasymp}
    C(T,1,1)
    =\sigma \left[
    \phi\left(\delta \sqrt{\frac{2}{\pi}};\alpha\right)
        -
    \delta\sqrt{\frac{2}{\pi}}\Delta\left(\delta \sqrt{\frac{2}{\pi}},\alpha,0\right)
   \right]  \sqrt{T}
    +O( T), \qquad T \rightarrow 0.
\end{equation}
Using \eqref{eq:CallATMasymp} in  \eqref{eq:rr} proves \eqref{eq:ATMlvl}.
    
To show the limiting ATM skew, we recall that by \citealt{gerhold2016small}, Lemma 2.2,  the following financial expression for the ATM limiting skew holds
\begin{equation}\label{eq:gerhold}
			\mathcal S (T,0)=\sqrt{\frac{2 \pi}{T}}(1/2-DC(T,0))-\frac{\sigma_{BS}(T,0)}{2} + O(\sigma_{BS}(T,0)^2\sqrt{T} )
		\end{equation}
where 
$DC(T,\kappa):=\E[1_{\{\log S_T>\kappa\}}]=1-\Phi(b_-;\alpha)$ is      the price of the digital call in log-moneyness. 
Let further 
$DC^*(T,\kappa):=\E[S_T1_{\{\log S_T>\kappa\}}]= 1-\Phi(b_+;\alpha, \tau)$
be the value of an asset-or-nothing call. Since $C(T,1,1)=DC^*(T,0)-DC(T,0)$, substituting  \eqref{eq:rr} in \eqref{eq:gerhold} leads to 
\begin{equation}\label{eq:gerhold2}
			\mathcal S (T,0)=\sqrt{\frac{2 \pi}{T}}\left(1/2- \frac{DC^*(T,0) +DC(T,0)}{2} \right) + O(1 ).
		\end{equation}
Making use of 
\eqref{eq:logtaylor} we have that 
\begin{equation}
    \lim_{T \rightarrow 0}DC(T,0)=
    \lim_{T \rightarrow 0}DC^*(T,0)=1-\Phi\left(\delta \sqrt{\frac{2}{\pi}}; \alpha \right).
\end{equation}
 Using the above in \eqref{eq:gerhold2} shows the second limit.
\end{proof}

We gave only the leading order for $\mathcal S$, but the higher orders can be easily computed since the whole Taylor series of $DC$ and $DC^*$ can be provided. Note that from \eqref{prop:skew} $\mathcal S$ works as expected in relation to the sign of $\delta$. If $\delta <0$ the probability level of the CDF is negative and the law is negatively skewed, effects that compound into the inequality $\Phi\left(\delta \sqrt{\frac{2}{\pi}}, \alpha \right)<1/2$, so that $\mathcal S(T,0) <0$, as typically observed in equity markets. Conversely, if $\delta>0$ the SN limit law is positively skewed and the CDF taken at a positive argument, so that certainly $\mathcal S(T,0) >0$. If $\delta=0$ we correctly recover the Black-Scholes flat skew, after observing that the subleading order in \eqref{eq:gerhold} also vanishes as $T \rightarrow 0$.

Also of interest are the large strike asymptotics of the implied volatility surface. In the SNLV case those provide an interesting, non trivial example, of sublinear growth in the surface wings. Notice that since the SNLV moment generating function \eqref{eq:snmgf} exists at all $\theta$,  the SNLV model misses a maximum exponential moment, which is required for the celebrated \cite{lee2004moment} moment formula to provide meaningful wing asymptotics/limsups.
The SNLV model is thus  a concrete case in which one needs the full scope of the asymptotic analysis \cite{benaim2009regular}, based on the regular variation properties of the returns PDF, to obtain large strike limits. 

\begin{prop}
In the {\upshape SNLV} model,  as $|\kappa| \rightarrow \infty$ it holds that \begin{equation}\label{eq:lstirkes}
\sigma_{BS}(T,\kappa)
\rightarrow
\begin{cases}
\displaystyle{\sigma \sqrt{1-\delta^2}},& \mbox{if }\kappa\delta<0,\\
\sigma,& \mbox{if }\kappa\delta>0,
\end{cases}
\end{equation}
for all $T > 0$.
\end{prop}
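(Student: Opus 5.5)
The plan is to fix $T>0$ and apply the tail-wing formula of \cite{benaim2009regular} directly to the explicit density $p(T,\cdot)$ of $X_T=\log(S_T/S_0)$ in \eqref{eq:Xtpdf}. Since the moment generating function \eqref{eq:snmgf} is finite for all $\theta$, the moment formula of \cite{lee2004moment} gives nothing here, and regular variation of the log-density tails is the right tool. Set $s=\sigma\sqrt{T}$ and $\alpha=\delta/\sqrt{1-\delta^2}$, so that $1+\alpha^2=(1-\delta^2)^{-1}$.

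\textbf{Step 1: log-density tails.} I will compute the tails of $-\log p(T,\kappa)$ as $\kappa\to\pm\infty$, using
\begin{equation*}
p(T,\kappa)=\frac{2}{s}\,\phi\!\left(\frac{\kappa+\mu^*}{s}\right)\Phi\!\left(\alpha\,\frac{\kappa+\mu^*}{s}\right).
\end{equation*}
When $\kappa\delta>0$, the factor $\Phi(\alpha(\kappa+\mu^*)/s)\to 1$. Hence $-\log p(T,\kappa)=\kappa^2/(2s^2)+O(|\kappa|)$.

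When $\kappa\delta<0$, I will use the Mills-ratio asymptotics $\Phi(-y)\sim\phi(y)/y$. This gives
\begin{equation*}
-\log p(T,\kappa)=\frac{(1+\alpha^2)\kappa^2}{2s^2}+O(\log|\kappa|)=\frac{\kappa^2}{2s^2(1-\delta^2)}+O(\log|\kappa|).
\end{equation*}
In both cases $-\log p$ is regularly varying of index $2$. The location shift $\mu^*$ only affects lower-order terms.

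\textbf{Step 2: the Benaim--Friz criterion.} I will invoke the right-wing result of \cite{benaim2009regular}, and its left-wing counterpart for puts, equivalently applied to the density of $-X_T$. It states that if $\psi(\kappa)=-\log p(T,\kappa)$ is regularly varying and $\psi(\kappa)/|\kappa|\to\infty$, then
\begin{equation*}
\frac{\sigma_{BS}^2(T,\kappa)\,T}{|\kappa|}\sim \Psi\!\left(\frac{\psi(\kappa)}{|\kappa|}\right),\qquad \Psi(x)=2-4\left(\sqrt{x^2+x}-x\right).
\end{equation*}
Since $\Psi(x)\sim 1/(2x)$ as $x\to\infty$, this reduces to $\sigma^2_{BS}(T,\kappa)\,T\sim \kappa^2/(2\psi(\kappa))$. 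As a sanity check, a Gaussian density with variance $v$ returns exactly $v$.

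\textbf{Step 3: conclusion.} Plugging in Step 1 gives $\sigma_{BS}^2T\to s^2=\sigma^2T$ when $\kappa\delta>0$, and $\sigma_{BS}^2T\to s^2(1-\delta^2)$ when $\kappa\delta<0$. This is \eqref{eq:lstirkes}.

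The main obstacle I expect is verifying the precise hypotheses of \cite{benaim2009regular} rather than the asymptotics themselves. These include the integrability conditions, which hold trivially because all exponential moments are finite. More delicately, regular variation is required of $-\log p$ itself and not merely asymptotic equivalence. On the slow side I have to control the $O(\log|\kappa|)$ correction coming from the Mills ratio. I would handle this by writing $-\log p=\ell(\kappa)\kappa^2$ with $\ell$ converging to a positive constant, which is trivially slowly varying. If their theorem is stated for the density of $\log S_T$ under $\mathbb Q$ only on the right wing, I would treat the left wing through put prices and the density of $-X_T$, which is again of the same skew-normal type with $\alpha$ replaced by $-\alpha$.
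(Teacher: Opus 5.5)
Your proposal is correct and follows the paper's route: Mills-ratio tail asymptotics of $-\log p(T,\cdot)$, regular variation of index $2$ on both wings, and the Benaim--Friz wing formula, which together with $\Psi(x)\sim 1/(2x)$ gives $\sigma_{BS}(T,\kappa)\sim|\kappa|/\sqrt{-2T\log p(T,\kappa)}$, the same formula the paper uses. One harmless slip: for $\kappa\delta<0$ the remainder is $O(|\kappa|)$ rather than $O(\log|\kappa|)$, because of the cross term $(1+\alpha^2)\kappa\mu^*/s^2$ from the location shift; this is still $o(\kappa^2)$, so the limit is unaffected.
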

\begin{proof}
Recall that $g:\mathbb R \rightarrow \mathbb R$ is regularly varying at $\pm \infty$ if there exists $\gamma \in \mathbb R$ such that $\lim_{x \rightarrow \pm \infty} g(\lambda x)/g(x)=\lambda^\gamma$, for all $\lambda>0$. The number $\gamma$ is the index of the variation.  Now $p(t,x)$ in \eqref{eq:Xtpdf} is such that for all $\lambda>0$ it holds that
\begin{equation}\label{eq:plambda}
\log p(T,\lambda x)=\log\left(\sqrt{\frac{2}{\ \pi}}\Phi \left(\frac{\delta}{\sqrt{1-\delta^2}}  \left( \frac{\lambda x +\mu^*(T) }{\sigma \sqrt{T}} \right)\right) \right)-\frac{(\lambda x+\mu^*(T))^2}{2 \sigma^2 T}-\log(\sigma\sqrt{T})
.
\end{equation}

As $|x| \rightarrow \infty$, recalling the expansion of the normal CDF, we obtain
\begin{equation}\label{eq:plambdalim2}
\log\left(\sqrt{\frac{2}{\ \pi}}\Phi \left(\frac{\delta}{\sqrt{1-\delta^2}}  \left( \frac{\lambda x +\mu^*(T) }{\sigma \sqrt{T}} \right)\right) \right)  \sim \begin{cases}- \frac{\delta^2}{1-\delta^2}\frac{(\lambda x+\mu^*(T))^2}{2 \sigma^2 T} , \qquad x \delta <0 \\
\log(\sqrt{2/\pi}), \qquad x \delta >0
\end{cases}
\end{equation}
Performing the asymptotic substitution  \eqref{eq:plambdalim2} in \eqref{eq:plambda} it follows that for large $|x|$ we have
\begin{equation}\label{eq:plambdalim3}
\log p(T, \lambda x )  \sim \begin{cases} \displaystyle{- \frac{1}{1-\delta^2}\frac{(\lambda x+\mu^*(T))^2}{2 \sigma^2 T}} , \qquad x \delta <0 \\
-\displaystyle{\frac{(\lambda x+\mu^*(T))^2}{2 \sigma^2 T}}, \qquad x \delta >0.
\end{cases}
\end{equation}

In either case $\log p(T, \lambda x)/\log p(T,x) \rightarrow \lambda^2$ so $   -\log p(  T, \cdot)$ has regular variation with index $\gamma=2$ at both $\pm \infty$. We can then apply theorems 1.1 and 1.2 in \cite{benaim2009regular}, deducing a sublinear behavior of $\sigma_{BS}$ at large $\kappa$, and more precisely
\begin{equation}
   \sigma_{BS}(T, \kappa) \sim \frac{| \kappa|}{\sqrt{- 2 T \log p(T, \kappa)}}, \qquad \kappa \rightarrow \pm \infty. 
\end{equation}
Calculating the right hand side above using the leading orders in \eqref{eq:plambdalim3} with $\lambda=1$, leads to \eqref{eq:lstirkes}.
\end{proof}

What the presence of two distinct OTM/ITM asymptotic regimes underlies is the monotonicity of the implied volatility maturity sections, which in turns entails that in the SNLV model smile convexity is very reduced. This means that, conceivably, financial applications of the model should be confined to equity, an asset class where typically smile is a higher order effect compared to skew.
 
That the local and implied volatility surfaces retain essentially the same geometry, as it is typically the case, is revealed by the complementary asymptotics \eqref{eq:lstirkes} and \eqref{eq:lim_T0}. Although not treated, it holds also that the small maturity limits of the local volatility surface coincide with its large strike ones, whereas for the implied volatility the limiting strike behavior at positive maturity is the  same  as \eqref{eq:lim_T0}. 

In Figure \ref{fig:impliedvol} we visualize the SNLV implied volatility and its smiles at different maturities: the monotonically decreasing maturity sections can be clearly noticed. Figure \ref{fig:atmskewlevel} shows convergence of the numerical values, as $T \rightarrow 0$, of the ATM implied volatility and skew (normalized to $\sqrt{T/2 \pi}$) to the limits taken from Proposition \ref{prop:skew}.

Using recent results of \cite{azzone2025implied}, the availability of closed expressions for option prices also makes it possible to extend the study of the implied volatility asymptotics to moneynesses different from zero. This is however less standard, and we shall leave it for further research.

\begin{figure}[t]\
\centering
\begin{subfigure}{0.48\linewidth}
    \centering
    \includegraphics[width=\linewidth]{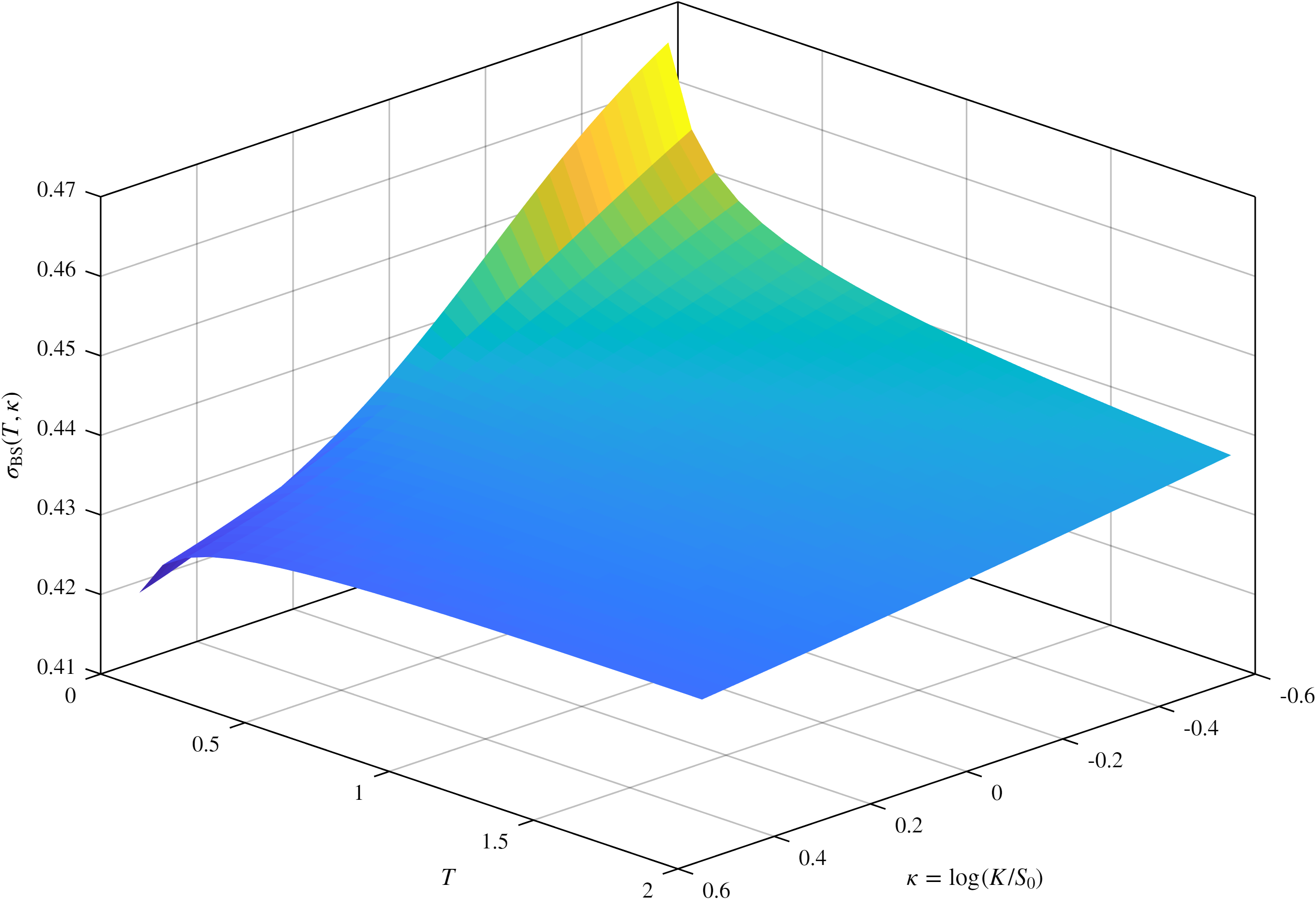}
    \label{fig:localvol-surface}
\end{subfigure}
\hfill
\begin{subfigure}{0.48\linewidth}
    \centering
    \includegraphics[width=\linewidth]{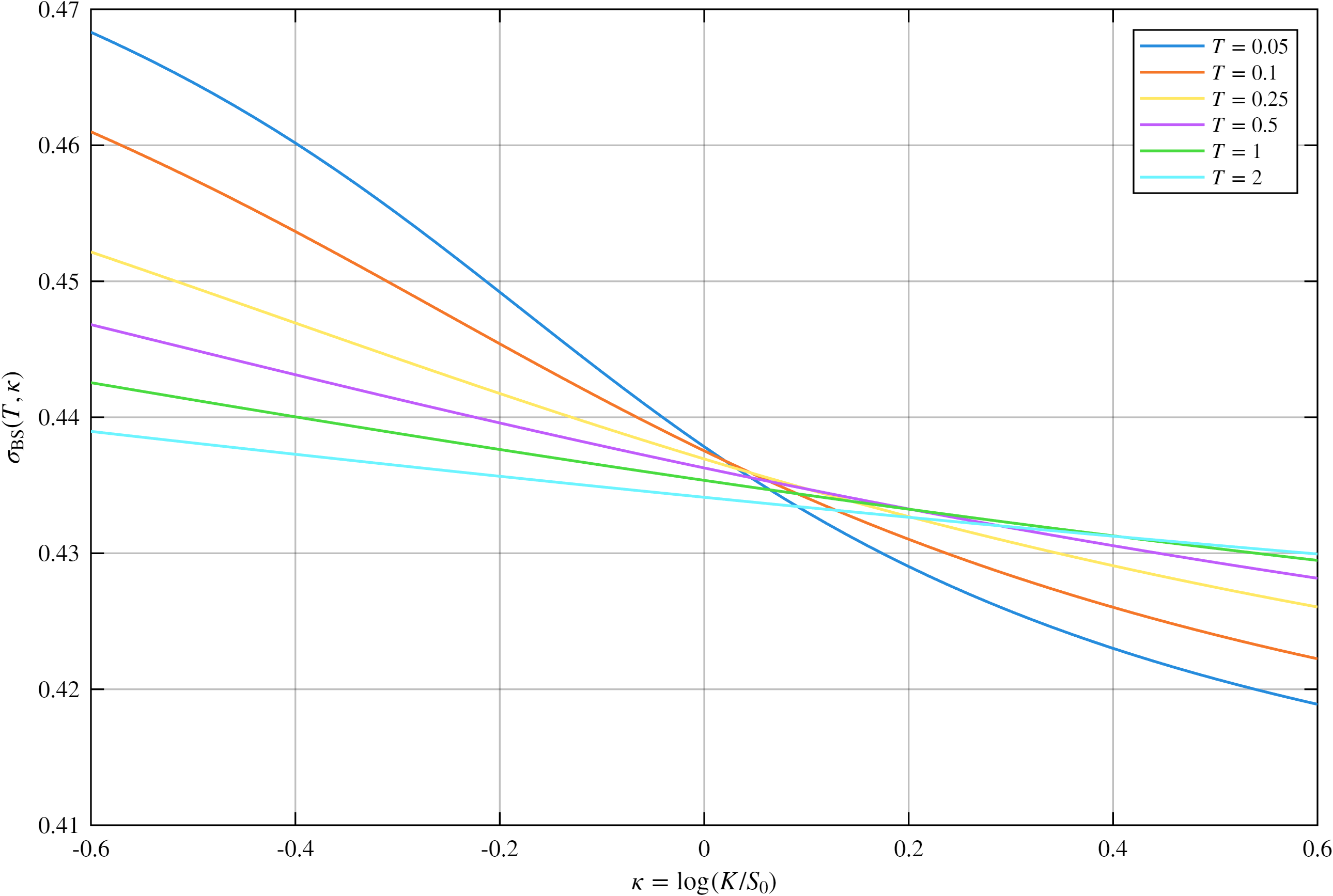}
    \label{fig:localvol-slices}
\end{subfigure}
\caption{\footnotesize
SNLV model implied volatility,
$S_0=100$, $\sigma=0.5$, $\delta=-0.6$, $r=0$. Left panel: surface. Right panel: skews.
}
\label{fig:impliedvol}
\end{figure}

\begin{figure}[htbp]
    \centering
    \includegraphics[width=0.75\textwidth]{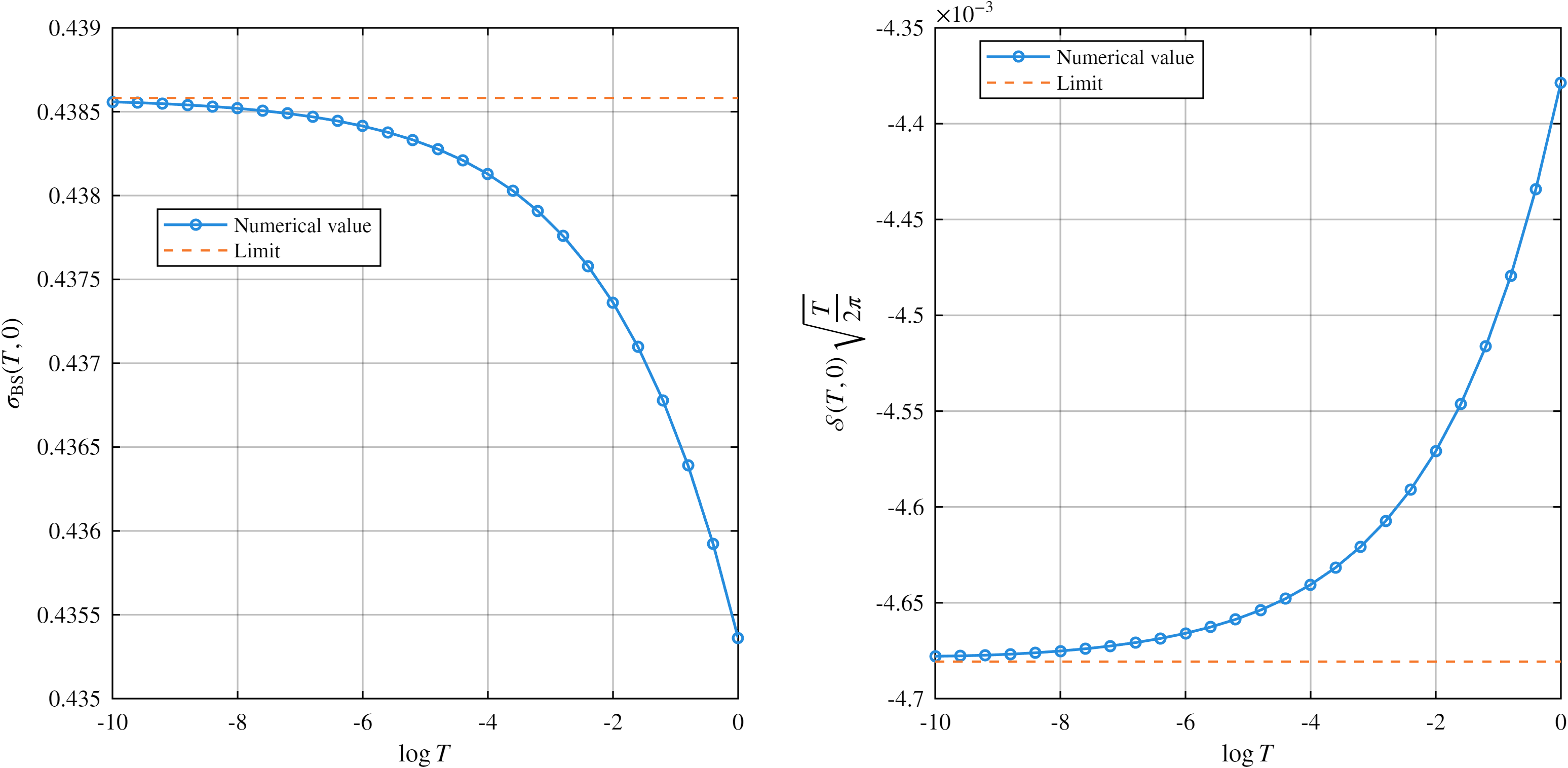}
    \caption{\footnotesize{Numerical convergence of ATM implied volatility (left panel) and skew (right panel). $S_0=100$, $\sigma=0.5$, $\delta=-0.6$, $r=0$. }}
    \label{fig:atmskewlevel}
\end{figure}

\section{Conclusion}\label{sec:conclusions}

In this work, we have reviewed and added detail to the many reasons for which a skew Brownian motion really should not be used in the returns specification of an option pricing model. We started from the clarification that the model used in most papers is not based on the process It\^o and McKean famously introduced. We then reinforced the message that skew-Brownian motion models admit arbitrages in the classic theory,  
further observing that these are among the worse possible, i.e. the profit opportunity  concealed is not only riskless, but also increasing. We  exhibited one such strategy and corrected other mistakes along the way.

As a positive contribution, we uncovered that the skew-normal marginals of the ASBM can be made to fit a no-arbitrage continuous Markovian martingale, i.e. a local volatility model, and consequently proved its strong well-posedness. The corresponding  local volatility function  is bounded and uniformly elliptic, and has a jump singularity in zero. We verified that this produces an explosion in the short term ATM implied volatility skew, consistently with what found in continuous models with discontinuous diffusion coefficients. Also, the large strike implied volatilities do not log-linearize as commonly happens, but tend to constants.

In terms of impact on the current literature, the errors  that we exposed, particularly those in \cite{zhu2018new}, jeopardize a large amount of research inspired by GSBMs. In the first place, path-dependent or simulation pricing methods  using  GSBMs  most likely need to be discarded, as a direct effect of such processes not being arbitrage free. Secondly, even if in theory the SN marginals \emph{could} form a basis to produce correct and significant European-style pricing, it is quite likely that the miscalculation error of the option price formula of \cite{zhu2018new} discussed in Appendix B did leak into such a line of work, and thus  relevant papers need to be carefully checked --and possibly amended-- case by case.

\bibliographystyle{apalike}
\bibliography{Bibliography}

\appendix

\titleformat{\section}{\normalfont\Large\bfseries}{}{0pt}{}

\section{Appendix A: some results on the skew-normal distribution}

In this appendix we gather some facts about the skew-normal law used throughout the paper. We were not able to find references for much of the content that follows, and they can arguably be considered an original contribution. The normal/skew-normal dominance Lemma \ref{lem:dom}, and the corresponding integral representation, is in our view of a certain note.

The following conjugation relation for the Gaussian PDF is clear, and is used recurringly  throughout this work
\begin{equation}\label{eq:algebralpha}
    \phi(x)\phi\left(\tau \sqrt{1+\alpha^2} + x \alpha \right) = 
    \phi(\tau)\phi\left(x \sqrt{1+\alpha^2} + \tau \alpha \right), \qquad x, \alpha,\tau \in \mathbb R. 
\end{equation}
We show next certain properties of the (extended) skew normal class relevant to us, among which some are apparently novel.

 We recall that for some $\theta \in \mathbb R$ and $X$ an absolutely continuous random  variable with PDF $f$ admitting a moment generating function at $\theta$, the exponential tilt of  $X$ by $\theta$ is the random variable $X^\theta$ with PDF $f_\theta$ given by
\begin{equation}\label{eq:tilt}
    f_\theta(x)=e^{\theta x}\frac{f(x)}{\E[e^{\theta x}]}.
\end{equation}
Tilting is strictly connected to changes of measure, and identifies the laws of dynamic process transformations such as the Girsanov and Esscher transforms, pinnacles of the mathematical theory of finance.

For the extended skew-normal family we have the following property.
\begin{prop}\label{prop:esscher}
  Let $\theta \in \mathbb \R$ and $X \sim$ {\upshape SN}$(\alpha, \tau, \mu, \sigma)$. Then $X^\theta \sim$ {\upshape SN}$\left(\alpha, \tau+\frac{\alpha \theta \sigma}{\sqrt{1+\alpha^2}},  \mu+\sigma^2 \theta, \sigma \right)$.  
\end{prop}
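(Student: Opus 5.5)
The plan is a direct density computation: write down the density of $X^\theta$ from \eqref{eq:tilt}, complete the square in the Gaussian factor, and read off the new parameters by matching against \eqref{eq:azzskew2}. I first reduce to the standardized case. If $X=\sigma Y+\mu$ with $Y\sim$ SN$(\alpha,\tau)$, then $e^{\theta X}\propto e^{\theta\sigma Y}$, so $X^\theta \stackrel{d}{=}\sigma Y^{\theta\sigma}+\mu$. It therefore suffices to show that for $Y\sim$ SN$(\alpha,\tau)$ and any $\eta\in\mathbb R$,
\begin{equation*}
Y^\eta \sim \mbox{SN}\left(\alpha,\ \tau+\frac{\alpha\eta}{\sqrt{1+\alpha^2}},\ \eta,\ 1\right).
\end{equation*}
Substituting $\eta=\theta\sigma$ and applying the affine map $y\mapsto\sigma y+\mu$ then gives the location $\mu+\sigma\eta=\mu+\sigma^2\theta$, the scale $\sigma$, and the stated shift of $\tau$.

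For the standardized claim, I take the density $e^{\eta y}\phi(y)\Phi(\tau\sqrt{1+\alpha^2}+\alpha y)/\Phi(\tau)$ and use $e^{\eta y}\phi(y)=e^{\eta^2/2}\phi(y-\eta)$. Writing $z=y-\eta$, the argument of $\Phi$ becomes
\begin{equation*}
\tau\sqrt{1+\alpha^2}+\alpha\eta+\alpha z=\tau'\sqrt{1+\alpha^2}+\alpha z, \qquad \tau':=\tau+\frac{\alpha\eta}{\sqrt{1+\alpha^2}}.
\end{equation*}
As a function of $z$, the tilted density is thus proportional to $\phi(z)\Phi(\tau'\sqrt{1+\alpha^2}+\alpha z)$. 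This is the kernel of SN$(\alpha,\tau')$ from \eqref{eq:azzskew2}. Since both functions are probability densities, the proportionality constant is forced, so $Y^\eta-\eta\sim$ SN$(\alpha,\tau')$, which is the claim.

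The same normalization yields the moment generating function $\E[e^{\eta Y}]=e^{\eta^2/2}\Phi(\tau')/\Phi(\tau)$, presumably the formula \eqref{eq:snmgf} cited earlier. Consistency is already built in, since $\int\phi(z)\Phi(\tau'\sqrt{1+\alpha^2}+\alpha z)dz=\Phi(\tau')$ is exactly the normalizing identity of \eqref{eq:azzskew2}. For $\tau=0$ and $\theta=1$ this recovers $\E[e^{X_t}]=1$ with the choice of $\mu^*$ in \eqref{eq:SNvars}.

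There is no real obstacle in this argument. The only points needing care are bookkeeping: tracking the scale factor $\sigma$ correctly through the tilt, and confirming that the MGF exists for every $\theta$. The latter holds because $\Phi\le 1$ makes the tail dominated by a Gaussian.
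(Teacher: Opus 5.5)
Your proof is correct and follows essentially the paper's route: complete the square via $e^{\theta x}\phi(\cdot)=e^{\sigma^2\theta^2/2}\phi(\cdot-\theta\sigma^2)$, then rewrite the argument of $\Phi$ in terms of the shifted $\tau$ to recognize the extended skew-normal kernel. The differences are only cosmetic: you standardize the scale as well as the location, and you get the normalizing constant (and hence \eqref{eq:snmgf}) from the density's own normalization identity, whereas the paper cites the moment generating function from Azzalini.
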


\begin{proof}

 The moment generating function of an SN$(\alpha, \tau)$ extended generalized skew-normal random variable $X$ exists for  all $\theta \in \mathbb R$ (see \citealt{azzalini2013skew}, Eq (2.40)) and equals to
\begin{equation}\label{eq:snmgf}
\E[e^{\theta X}]=e^{\sigma^2 \theta^2/2}\frac{\Phi\left(\tau +\frac{\alpha \sigma \theta }{\sqrt{1+\alpha^2}}\right)}{\Phi(\tau)}.
\end{equation}
Since evidently tilting commutes with location shifting it is not reductive to show the result for the SN$(\alpha, \tau, 0, \sigma)$ subclass.
Using some Gaussian PDF algebra it follows that, with $f$ an SN$(\alpha, \tau)$, in our case \eqref{eq:tilt} reads
\begin{align}
    f_\theta(x)=e^{\theta x}\frac{f\left(\frac{x}{\sigma}\right)}{\sigma \E[e^{\theta x}]}&=  e^{\theta x}\frac{\phi\left(\frac{x}{\sigma}\right)\Phi\left(\tau\sqrt{1+\alpha^2} + \alpha \frac{x}{\sigma}\right)}{{\sigma}\Phi(\tau)}
= e^{\sigma^2 \theta^2/2}\frac{\phi\left(\frac{x-\theta \sigma^2}{\sigma} \right)\Phi\left(\tau\sqrt{1+\alpha^2} + \alpha \frac{x}{\sigma}\right)}{{\sigma}\Phi(\tau)} \nonumber \\ &= e^{\sigma^2 \theta^2/2}\frac{\phi\left(\frac{x-\theta \sigma^2}{\sigma} \right)\Phi\left(\sqrt{1+\alpha^2}\left(\tau +\alpha \sigma \theta(1+\alpha^2)^{-1/2} \right) + \alpha \frac{x-\theta \sigma^2}{\sigma}\right)}{{\sigma}\Phi(\tau)}. \end{align}
Upon dividing the above by \eqref{eq:snmgf} we recognize the stated extended generalized skew normal PDF.
\end{proof}
Taking $\tau=0$ in Proposition \ref{prop:esscher}, one sees that a possible genesis of the SN$(\alpha,\tau)$ class is that of a all possible tilts of the members in SN$(\alpha)$, and hence the latter as a class is not stable with respect to tilting. 



\bigskip

We have  the following stochastic dominance lemma. The symbol $\prec_1$ stands for first-order stochastic dominance.

\begin{lem}[Normal/skew normal dominance lemma]\label{lem:dom} Let $\alpha, x,\tau \in \R.$
Then
\begin{equation}\label{eq:Delta}
    \Delta(x,\alpha,\tau):=\Phi(x \sqrt{1+\alpha^2}+ \alpha \tau)-\Phi(x;\alpha,\tau)=
\frac{\alpha}{\Phi(\tau)}
\int_{-\tau}^{+\infty}
\phi\left(\sqrt{1+\alpha^2}\,x-\alpha  u\right)\Phi(-u)\,du.
    \end{equation}
    In particular, $\mbox{\upshape{sgn}}\, \Delta(x, \alpha, \tau)= \mbox{\upshape{sgn}} (\alpha)$. Therefore if  $X \sim $ {\upshape SN}$(\alpha, \tau)$, and $Y \sim N\left(-\frac{\alpha}{\sqrt{1+\alpha^2}} \tau,\frac{1}{1+\alpha^2}\right)$ we have
$X \prec_1 Y$, if $\alpha >0$, while $Y \prec_1 X$, if $\alpha<0$.
\end{lem}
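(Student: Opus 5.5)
The plan is to compare the two sides of \eqref{eq:Delta} as functions of $x$. Both sides tend to $0$ as $x\to-\infty$. So it suffices to show that their $x$-derivatives coincide, and then integrate from $-\infty$.

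\emph{Left-hand side.} The derivative is
\[
\partial_x \Delta = \sqrt{1+\alpha^2}\,\phi\bigl(x\sqrt{1+\alpha^2}+\alpha\tau\bigr) - \frac{\phi(x)\Phi\bigl(\tau\sqrt{1+\alpha^2}+\alpha x\bigr)}{\Phi(\tau)}.
\]
The conjugation relation \eqref{eq:algebralpha} rewrites the first Gaussian factor as
\[
\phi\bigl(x\sqrt{1+\alpha^2}+\alpha\tau\bigr)=\frac{\phi(x)\,\phi\bigl(\tau\sqrt{1+\alpha^2}+\alpha x\bigr)}{\phi(\tau)}.
\]

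\emph{Right-hand side.} The derivative is
\[
\frac{\alpha}{\Phi(\tau)}\int_{-\tau}^\infty -\sqrt{1+\alpha^2}\,\bigl(\sqrt{1+\alpha^2}x-\alpha u\bigr)\,\phi\bigl(\sqrt{1+\alpha^2}x-\alpha u\bigr)\Phi(-u)\,du .
\]
I will handle this integral by parts in $u$. Note that $\partial_u \phi(\sqrt{1+\alpha^2}x-\alpha u)=\alpha(\sqrt{1+\alpha^2}x-\alpha u)\phi(\cdot)$, so the integrand is, up to the constant $-\sqrt{1+\alpha^2}/\alpha$, the product $\partial_u\phi(\cdot)\,\Phi(-u)$.

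Integrating by parts produces two pieces:
\begin{itemize}
\item a boundary term at $u=-\tau$, namely $\phi\bigl(\sqrt{1+\alpha^2}x+\alpha\tau\bigr)\Phi(\tau)$;
\item an integral $\int_{-\tau}^\infty \phi\bigl(\sqrt{1+\alpha^2}x-\alpha u\bigr)\phi(u)\,du$.
\end{itemize}
For the second piece, complete the square using \eqref{eq:algebralpha} in the form $\phi(u)\phi(\sqrt{1+\alpha^2}x-\alpha u)=\phi(x)\phi(\sqrt{1+\alpha^2}u-\alpha x)$. The integral then becomes $\phi(x)\,\Phi\bigl(\sqrt{1+\alpha^2}\tau+\alpha x\bigr)/\sqrt{1+\alpha^2}$.

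Collecting the constants, the two pieces should reproduce exactly $\partial_x$ of the left-hand side. I expect the bookkeeping of signs and the factor $\sqrt{1+\alpha^2}$ to be the main obstacle. A sanity check is available: at $\alpha=0$ both sides vanish identically.

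An alternative route is to differentiate both sides in $\tau$ instead. I keep this as a backup if the $x$-route produces a mismatch.

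\emph{Remaining claims.} The sign claim is then immediate, because the integrand $\phi(\cdot)\Phi(-u)$ is strictly positive. Hence $\operatorname{sgn}\Delta=\operatorname{sgn}\alpha$.

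For the dominance claim, observe that $\Phi\bigl(x\sqrt{1+\alpha^2}+\alpha\tau\bigr)=\mathbb P(Y\le x)$ for $Y\sim N\bigl(-\alpha\tau/\sqrt{1+\alpha^2},\,1/(1+\alpha^2)\bigr)$, since
\[
\sqrt{1+\alpha^2}\Bigl(x+\frac{\alpha\tau}{\sqrt{1+\alpha^2}}\Bigr)=x\sqrt{1+\alpha^2}+\alpha\tau .
\]
Thus $\Delta=F_Y-F_X$. If $\alpha>0$, then $F_Y\ge F_X$ everywhere, which is the ordering $X\prec_1 Y$ in the paper's convention; the case $\alpha<0$ reverses it.
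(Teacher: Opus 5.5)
Your proof is correct, and the bookkeeping you were worried about closes. After integrating by parts, the boundary term at $u=-\tau$ contributes exactly $\sqrt{1+\alpha^2}\,\phi(\sqrt{1+\alpha^2}x+\alpha\tau)$, so the conjugation rewrite of that term on the left-hand side is unnecessary. The completed-square integral contributes $-\phi(x;\alpha,\tau)$. Dividing by $\alpha$ is harmless, since the case $\alpha=0$ is trivial.

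The regularity steps you leave implicit all follow from one bound. The $x$-derivative of the integrand is dominated by $C\,\Phi(-u)$, which is integrable on $[-\tau,\infty)$; this justifies differentiating under the integral. Dominated convergence with $\phi(0)\Phi(-u)$ gives that the right-hand side vanishes as $x\to-\infty$. The $\tau$-backup is not needed.

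Your route differs from the paper's. The paper uses Azzalini's conditioning representation $X\stackrel{d}{=}\{(\alpha Z+Y)/\sqrt{1+\alpha^2}\mid Z>-\tau\}$. This writes $\Phi(x;\alpha,\tau)$ as a truncated-normal average of $\Phi(\sqrt{1+\alpha^2}x-\alpha u)$. The difference with $\Phi(\sqrt{1+\alpha^2}x+\alpha\tau)$ is then expressed via the fundamental theorem of calculus as a double integral, and Fubini produces the $\Phi(-u)$ factor.

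The two routes buy different things:
\begin{itemize}
\item The paper's argument is constructive: it derives the formula rather than verifying a given target. It also yields the CDF representation \eqref{eq:Fintrepresentation} as a by-product, which is reused in Lemma \ref{lem:boundbydelta}.
\item Your argument is purely analytic and self-contained, needing only the density and \eqref{eq:algebralpha}. However, it presupposes the identity to be checked.
\end{itemize}

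Your dominance step is the same as the paper's. Since $\Delta=F_Y-F_X$, for $\alpha>0$ you get $F_Y\ge F_X$, so $X$ is stochastically larger; this is what the paper denotes $X\prec_1 Y$.
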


\begin{proof}
 For $\alpha=0$ there is nothing to prove as $X$ and $Y$ have N$(0,1)$ distribution. Thus, assume $\alpha \neq 0$.
and let $X \sim $ SN$(\alpha,\tau)$.
Based on \cite{azzalini2013skew}, Section 2.2.2, the following holds: for $Z,Y$ are independent standard normal random variables, then
\begin{equation}
X \stackrel{d}{=} \left\{\frac{1}{\sqrt{1+\alpha^2}}\, \left(\alpha\,Z+ Y \right) \,\biggl|\, Z>-\tau\right\}.
\end{equation}
Therefore conditioning under independence yields the alternative CDF representation
\begin{equation}\label{eq:Fintrepresentation}
\Phi(x;\alpha,\tau)
=
\mathbb E\left[
\Phi\left(\sqrt{1+\alpha^2}\,x-\alpha Z\right)
\,\middle|\, Z>-\tau
\right]=
\frac{1}{\Phi(\tau)}
\int_{-\tau}^{+\infty}
\Phi\left(\sqrt{1+\alpha^2}\,x-\alpha u\right)\phi(u)\,du.
\end{equation}
 Since by standard normal symmetry
$
\int_{-\tau}^{\infty}\phi(v)\,dv=\Phi(\tau)
$
then
\begin{align}
\Phi\left(\sqrt{1+\alpha^2}\,x+\alpha  \tau\right)-&\Phi(x;\alpha,\tau)
 \nonumber \\=&
\frac{1}{\Phi(\tau)}
\int_{-\tau}^{\infty}
\left[
\Phi\left(\sqrt{1+\alpha^2}\,x+\alpha  \tau\right)
-
\Phi\left(\sqrt{1+\alpha^2}\,x-\alpha  v\right)
\right]\phi(v)\,dv \nonumber \\ =&\frac{\alpha}{\Phi(\tau)}
\int_{-\tau}^{\infty}
\left(
\int_{-\tau}^{v}
\phi\left(\sqrt{1+\alpha^2}\,x-\alpha  u\right)\,du
\right)\phi(v)\,dv,
 \end{align}
having applied the Fundamental Theorem of Calculus in the second line. 
Since $\Phi(\tau)>0$ and  the  last integrand is positive, using Fubini's Theorem to interchange integration order leads to
\begin{equation}
\Phi\left(\sqrt{1+\alpha^2}\,x+\alpha  \tau\right)-\Phi(x;\alpha,\tau)
=
\frac{\alpha}{\Phi(\tau)}
\int_{-\tau}^{\infty}
\phi\left(\sqrt{1+\alpha^2}\,x-\alpha u\right)
\left(
\int_u^{\infty}\phi(v)\,dv
\right)\,du,
\end{equation}
and again by standard normal symmetry this leads to the desired expression \eqref{eq:Delta} for $\Delta$. The last claim is obvious by noticing that $\Delta$ is exactly the difference of the CDFs of the involved distributions.

\end{proof}
Given any arbitrary skew-normal law, it is then always possible to identify a normal distribution  stochastically dominating/being dominated by it. The difference between normal and skew-normal CDFs arises in combination with the complementary Mills ratio when differentiating the extended SN CDF with respect to the variable $\tau$.

\begin{lem}\label{lem:skewazzdertau}  For all $x, \alpha,\tau \in \mathbb R,$
we have
    \begin{align}\label{eq:derivsjewazztaqu}
    \frac{\partial}{\partial \tau} \Phi \left(x;\alpha, \tau \right)&= \Delta(x, \alpha, \tau)
    \frac{\phi(\tau)}{\Phi(\tau)}
\end{align}
with $\Delta(x, \alpha, \tau)$ as in \eqref{eq:Delta}.
\end{lem}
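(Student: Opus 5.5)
We differentiate the integral representation \eqref{eq:Fintrepresentation} of the extended skew-normal CDF with respect to $\tau$:
\begin{equation*}
\Phi(x;\alpha,\tau)=\frac{N(x,\alpha,\tau)}{\Phi(\tau)}, \qquad N(x,\alpha,\tau):=\int_{-\tau}^{\infty}\Phi\left(\sqrt{1+\alpha^2}\,x-\alpha u\right)\phi(u)\,du .
\end{equation*}
The parameter $\tau$ enters in two places: the lower limit of integration in the numerator and the normalising denominator $\Phi(\tau)$. The integrand does not depend on $\tau$.

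First, we differentiate the numerator. By the Fundamental Theorem of Calculus, with the lower limit at $-\tau$,
\begin{equation*}
\partial_\tau N = \Phi\left(\sqrt{1+\alpha^2}\,x+\alpha\tau\right)\phi(-\tau)=\Phi\left(\sqrt{1+\alpha^2}\,x+\alpha\tau\right)\phi(\tau).
\end{equation*}
This step is legitimate because the integrand is continuous and bounded by $\phi(u)$, which is integrable.

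Second, we apply the quotient rule, using $\partial_\tau\Phi(\tau)=\phi(\tau)$:
\begin{equation*}
\partial_\tau \Phi(x;\alpha,\tau)=\frac{\phi(\tau)}{\Phi(\tau)}\Phi\left(\sqrt{1+\alpha^2}\,x+\alpha\tau\right)-\frac{N\,\phi(\tau)}{\Phi(\tau)^2}
=\frac{\phi(\tau)}{\Phi(\tau)}\Big[\Phi\left(\sqrt{1+\alpha^2}\,x+\alpha\tau\right)-\Phi(x;\alpha,\tau)\Big].
\end{equation*}
Third, the bracket is exactly $\Delta(x,\alpha,\tau)$ as defined in the first equality of \eqref{eq:Delta} in Lemma \ref{lem:dom}, which gives \eqref{eq:derivsjewazztaqu}.

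There is no real obstacle here. The only care needed is the sign when differentiating at the lower limit $-\tau$, which is resolved by the symmetry $\phi(-\tau)=\phi(\tau)$. One should also check that the conditional representation \eqref{eq:Fintrepresentation} holds for every real $\tau$, so that the derivative is valid on the whole domain $\tau\in\mathbb R$. This is immediate, since $\Phi(\tau)>0$ for all $\tau$.
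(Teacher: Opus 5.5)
Your proof is correct, but it differentiates a different integral representation from the one the paper uses.

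The paper works from the defining density \eqref{eq:azzskew2} and writes the numerator as $\int_{-\infty}^x\phi(y)\Phi(\tau\sqrt{1+\alpha^2}+\alpha y)\,dy$, so $\tau$ sits inside the integrand. It therefore differentiates under the integral sign, which it asserts without justification, though the justification is easy. It then invokes the conjugation identity \eqref{eq:algebralpha} to rewrite $\phi(y)\phi(\tau\sqrt{1+\alpha^2}+\alpha y)$ as $\phi(\tau)\phi(y\sqrt{1+\alpha^2}+\alpha\tau)$. After this the $\tau$-derivative of the numerator integrates to $\phi(\tau)\Phi(x\sqrt{1+\alpha^2}+\alpha\tau)$, and the quotient rule finishes exactly as in your argument.

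You instead start from \eqref{eq:Fintrepresentation}, where $\tau$ appears only as the lower limit. The same boundary term then comes straight from the fundamental theorem of calculus, with no interchange of derivative and integral and no Gaussian algebra.

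The two numerators are the same quantity, $\bp\bigl((\alpha Z+Y)/\sqrt{1+\alpha^2}\le x,\ Z>-\tau\bigr)$, with $Z,Y$ as in Lemma \ref{lem:dom}. The paper computes it by conditioning on $(\alpha Z+Y)/\sqrt{1+\alpha^2}$; you compute it by conditioning on $Z$. Identity \eqref{eq:algebralpha} is exactly the equality of the two factorizations of the joint density of this pair, evaluated on the boundary $Z=-\tau$. That is why the paper needs it and you do not.

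Your route is shorter, but it depends on the conditioning representation behind \eqref{eq:Fintrepresentation}, which the paper establishes in the proof of Lemma \ref{lem:dom} from Azzalini's stochastic representation. The paper's route uses only the definition of the density.
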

\begin{proof} 
Taking the derivative in $\tau$ under the integral sign is possible, so by the product rule and 
 applying  \eqref{eq:algebralpha} leads to
\begin{align}\label{eq:derivsjewazztaqu0}
    \frac{\partial}{\partial \tau} \Phi \left(x;\alpha, \tau \right)&= \Phi(\tau)^{-2}\left[\Phi(\tau) \frac{\partial}{\partial \tau} \int_{-\infty}^x\phi(y)\Phi(\tau \sqrt{1+\alpha^2}+ \alpha y)dy -\right. \nonumber \\ & \left. \phantom{xxxxxxxxxxxxxxxxx}\phi(\tau)\int_{-\infty}^x\phi(y)\Phi(\tau \sqrt{1+\alpha^2}+ \alpha y)dy \right] \nonumber \\ &=\frac{\phi(\tau)}{\Phi(\tau)}\left[  \sqrt{1+\alpha^2} \int_{-\infty}^x \phi\left( y \sqrt{1+\alpha^2}+ \alpha \tau\right) dy- \Phi(x;\alpha,\tau)  \right] \nonumber \\ &=
    \frac{\phi(\tau)}{\Phi(\tau)}\left[ 
    \Phi\left( x     \sqrt{1+\alpha^2}+ \alpha \tau\right) - \Phi(x;\alpha,\tau)  \right]
\end{align}
which is the claim.
\end{proof}

 The following Lemma reveals that the ratio of $\Delta$ to the SN PDF can be related to the expectation of the Mills ratio under a truncated normal distribution.

\begin{lem}\label{lem:boundbydelta} Let  $\Delta(x, \alpha, \tau)$ be as in \eqref{eq:Delta}, and denote  by $m(y):=\Phi(-y)/\phi(y)$, $y \in \mathbb R$ the Mills ratio. We have that
\begin{equation}\label{eq:keybound}
  \frac{\Delta(x; \alpha, \tau)} {\phi(x; \alpha, \tau)} =\frac{\alpha}{\sqrt{1+\alpha^2}}\E\left[{m(U_{x,\tau})} \right]
\end{equation}
where $U_{x,\tau}$ is a lower truncated normal random variable N$_{tr}\left(\frac{\alpha x}{\sqrt{1+\alpha^2}},\frac{1} {1+\alpha^2}  , -\tau\right)$, and moreover
\begin{equation}\label{eq:keyineq}
  \biggl |\frac{\Delta(x; \alpha, \tau)} {\phi(x; \alpha, \tau)} \biggr| \leq \frac{|\alpha|}{\sqrt{1+\alpha^2}}m(-\tau) .
\end{equation}

\end{lem}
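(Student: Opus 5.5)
Starting from the integral representation of Lemma \ref{lem:dom}, write
\[
\Delta(x,\alpha,\tau)=\frac{\alpha}{\Phi(\tau)}\int_{-\tau}^{\infty}\phi\bigl(\sqrt{1+\alpha^2}\,x-\alpha u\bigr)\Phi(-u)\,du .
\]
The idea is to rewrite the integrand so that a Gaussian density in $u$ appears, multiplied by $\phi(u)$. Substituting $\Phi(-u)=m(u)\phi(u)$ gives the integrand $\phi(\sqrt{1+\alpha^2}\,x-\alpha u)\phi(u)\,m(u)$.

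For the product of the two Gaussian densities we use the conjugation identity \eqref{eq:algebralpha}, or equivalently complete the square:
\[
\tfrac12\bigl[(\sqrt{1+\alpha^2}\,x-\alpha u)^2+u^2\bigr]=\tfrac12 x^2+\tfrac12\bigl(\sqrt{1+\alpha^2}\,u-\alpha x\bigr)^2 .
\]
This follows by expanding both sides: the left side is $(1+\alpha^2)x^2-2\alpha\sqrt{1+\alpha^2}\,xu+(1+\alpha^2)u^2$ over two, and so is the right side. Hence
\[
\phi(\sqrt{1+\alpha^2}\,x-\alpha u)\phi(u)=\phi(x)\,\phi\bigl(\sqrt{1+\alpha^2}\,u-\alpha x\bigr),
\]
and
\[
\Delta=\frac{\alpha\,\phi(x)}{\Phi(\tau)}\int_{-\tau}^{\infty}\phi\bigl(\sqrt{1+\alpha^2}\,u-\alpha x\bigr)m(u)\,du .
\]

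The factor $\phi(\sqrt{1+\alpha^2}\,u-\alpha x)$ is, up to normalisation, the density of $N\bigl(\alpha x/\sqrt{1+\alpha^2},1/(1+\alpha^2)\bigr)$. Restricted to $[-\tau,\infty)$, its mass is
\[
\int_{-\tau}^\infty\phi(\sqrt{1+\alpha^2}\,u-\alpha x)\,du=\frac{\Phi(\sqrt{1+\alpha^2}\,\tau+\alpha x)}{\sqrt{1+\alpha^2}} .
\]
So with $U_{x,\tau}$ the stated lower-truncated normal,
\[
\Delta=\frac{\alpha\,\phi(x)\,\Phi(\sqrt{1+\alpha^2}\,\tau+\alpha x)}{\sqrt{1+\alpha^2}\,\Phi(\tau)}\,\mathbb E[m(U_{x,\tau})].
\]
The prefactor $\phi(x)\Phi(\tau\sqrt{1+\alpha^2}+\alpha x)/\Phi(\tau)$ is exactly $\phi(x;\alpha,\tau)$ by \eqref{eq:azzskew2}. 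Dividing through gives \eqref{eq:keybound}.

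For \eqref{eq:keyineq}, the Mills ratio $m$ is positive and decreasing on $\mathbb R$. This is standard: $m'(y)=y\,m(y)-1<0$ follows from the Gordon inequality $\Phi(-y)<\phi(y)/y$ for $y>0$, and is trivial for $y\le 0$. Since $U_{x,\tau}\ge -\tau$ almost surely, we get $0<m(U_{x,\tau})\le m(-\tau)$, so $0<\mathbb E[m(U_{x,\tau})]\le m(-\tau)$. Taking absolute values in \eqref{eq:keybound} yields \eqref{eq:keyineq}.

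The only delicate points are two finiteness checks. First, the expectation must be finite, which the bound $m(-\tau)<\infty$ already ensures. Second, the use of Fubini's theorem inherited from Lemma \ref{lem:dom} requires integrability, and the integrand is nonnegative there. Otherwise the main effort is simply getting the completion-of-square algebra and the normalising constant right.
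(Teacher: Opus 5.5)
Your proof is correct and follows essentially the same route as the paper. Both arguments start from the integral form of $\Delta$ in Lemma \ref{lem:dom}, substitute $\Phi(-u)=m(u)\phi(u)$, apply the conjugation identity \eqref{eq:algebralpha}, recognise a truncated-normal expectation, and use the monotonicity of $m$ for the bound. The differences are minor: the paper obtains the denominator $\phi(x;\alpha,\tau)$ by differentiating \eqref{eq:Fintrepresentation} under the integral sign, so the normalising constant cancels in a ratio of integrals, whereas you compute the truncated mass $\Phi(\sqrt{1+\alpha^2}\,\tau+\alpha x)/\sqrt{1+\alpha^2}$ explicitly and match it to the closed form \eqref{eq:azzskew2}, which avoids the derivative–integral interchange; you also justify $m'(y)=y\,m(y)-1<0$, which the paper only asserts.
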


\begin{proof}
    Using the right hand side of \eqref{eq:Fintrepresentation} and differentiating under integral sign leads to
\begin{equation}\label{Fkappa}
\phi(x;\alpha, \tau)=\partial_x \Phi(x;\alpha,\tau)
=
\frac{1}{\Phi(\tau)}
\int_{-\tau}^{\infty}
\sqrt{1+\alpha^2}\,\phi\left(\sqrt{1+\alpha^2}\,x-\alpha u\right)\phi(u)\,du.
\end{equation}
The differentiation is  justified because the integrand on the right hand side is bounded by the integrable function $\phi(u)$ times a constant. 
Substituting \eqref{Fkappa} in \eqref{eq:Delta} one obtains
\begin{equation}\label{eq:keybound2}
 \frac{\Delta(x; \alpha, \tau) }{ \phi(x; \alpha, \tau)   } = \frac{\alpha}{\sqrt{1+\alpha^2}} \,
\frac{
\int_{-\tau}^{\infty}
\phi\left(\sqrt{1+\alpha^2}\,x-\alpha  u\right)\phi(u) m(u)\,du
}{
\int_{-\tau}^{\infty}
\phi\left(\sqrt{1+\alpha^2}\,x-\alpha  u\right)\phi(u)\,du
}.
\end{equation}
But now applying \eqref{eq:algebralpha} it is 

\begin{equation}
    \phi\left(\sqrt{1+\alpha^2}\,x-\alpha u\right)\phi(u)=\phi(x) \phi\left(u \sqrt{1+\alpha^2}- \alpha x \right)=\phi(x) \phi\left(\sqrt{1+\alpha^2}\left( u -\frac{\alpha x}{\sqrt{1+\alpha^2}} \right) \right)
\end{equation}
and simplification of $\phi(x)$ appearing at both numerator and denominator of  \eqref{eq:keybound2} leads to the first equality.
Lastly, observe that
  $m$ is positive and decreasing, so that $m(u) < m(-\tau)$ for all $u \geq - \tau$ and the inequality  \eqref{eq:keyineq} follows. 

\end{proof}

\section{Appendix B: the errors in \cite{zhu2018new}}\label{sec:NoZhuHe}

As discussed in the main text, \cite{zhu2018new}
 maintain to have fixed the issues in the formula of \cite{corns2007skew}, and that an EMM for a ``drift adjusted'' version of $\SA$ in \eqref{eq:GSBM}--\eqref{eq:IMSBM} does exist. We illustrate in this appendix where the errors in such conclusion lie.  For ease of comparison with the original paper,  we strictly follow the notation in \cite{zhu2018new}.

Going through \cite{zhu2018new} they define the price process as
\begin{equation}\label{eq:GSBMcond}
S_T=S_t\exp\left( \mu (T-t) + \sigma (X_T-X_t) \right)
\end{equation}
(\cite{zhu2018new}, Equation (9)) where $X_t$ is a marginal of \eqref{eq:IMSBM}, and they attribute this formula to \cite{corns2007skew}, without specifying what the quantities $S_t$ and $X_t$ are meant to be.
If $t$ and $S_t$ are taken as fixed constants, so that  the process $S$ is implicitly defined in the interval $[t,T]$ for $T,t \geq 0$, \eqref{eq:GSBMcond} is indeed equivalent to \cite{corns2007skew}, who  write 
\begin{equation}\label{eq:GSBMcond2}
S_t=S_0\exp\left( \mu t + \sigma  X_t \right),\
\end{equation}
clarifying later that $S_0>0$.
 The heart of the problem is that  the Zhu and He paper soon changes the meaning of $t$ in \eqref{eq:GSBMcond}. Based on the initial claim of equivalence with \cite{corns2007skew}, it could only mean the initial time of the process. But in the paper $t$ quickly \emph{becomes the running time of the conditional expectation }, and starts varying in an interval $[t_0, T]$ with unspecified lower bound $t_0$ (say, $t_0=0$) and upper  bound $T$ given by the terminal asset value.

 To see when this occurs, denote with $(\mathcal F_t)_{t \geq 0}$ the filtration generated by $S$. Indeed after some formal manipulation the authors maintain that, letting $R_t=\delta W_{2,t}$, it holds
\begin{equation}\label{eq:nomart}
 \E[e^{-r (T-t) }S_T|\mathcal F_t]=S_t e^{(\mu +\sigma^2/2)(T-t)} \exp(l(t,T; R_t))
\end{equation}
with \begin{equation}
l(t,T;z)=\log\biggl[ \Phi\biggl( \frac{z+(T-t)\sigma^2\delta^2}{\sigma\delta \sqrt{T-t}}\biggr)+e^{-2z}\Phi\biggl( \frac{-z+(T-t)\sigma^2\delta^2}{\sigma\delta \sqrt{T-t}}\biggr)\biggr]
\end{equation}
(equation following (15)). The fact that a conditional expectation is present\footnote{We did remove the typo in the discount factor in the original paper, here and everywhere else.} means that time $t$ \emph{now spans an interval}, which, to begin with,  already severs the claimed equivalence between \eqref{eq:GSBMcond} and \eqref{eq:GSBMcond2}. 
If $t$  runs between 0  (or  some other $t_0 \leq T$), looked from time zero, $R_t$ and $S_t$ in \eqref{eq:nomart} are \emph{random variables}. In order for expressions such as the right hand side of \eqref{eq:nomart}, with multiple time occurrences of the same set of marginals  at various time instants, to be genuinely generated as a conditional expectation, one must \emph{show}, and not give for granted, that \eqref{eq:GSBMcond}, which in this new  understanding is just a \emph{relation} between random variables, also uniquely defines a Markovian evolution.

What the equivocation  of the meaning of $t$ -- from the beginning of the evolution window to running evaluation time--  formally, but as we shall show incorrectly, achieves, is the martingale property. By adding the nonlinear function $l(t,T; R_t)$ to the dynamics \eqref{eq:GSBMcond} (which anyway conceals further departure from the Cornell and Satchell model) after a standard Girsanov change of measure, the authors obtain 
\begin{equation}\label{eq:GSBMcondQ}
 S_T=  S_t\exp\left(  -\sigma^2 (T-t)/2 + \sigma (X_T-X_t)-l(t,T; R_t) \right)
\end{equation}
(\cite{zhu2018new}, Equation (18)). Assume that a process satisfying \eqref{eq:GSBMcondQ} for all $0 \leq t \leq T$ exists. In view of \eqref{eq:nomart} it is clear that \begin{equation}\label{eq:mart}
    \E[e^{-r(T-t)} S_T| \mathcal F_t]= S_t
\end{equation} and the martingale property is allegedly shown. However, as we will show shortly, no stochastic process can satisfy \eqref{eq:GSBMcondQ} \emph{unless $t$ is the initial time}, so that the argument just exposed cannot even be applied to begin with. We thus revert again to the usual understanding, which is equivalent to the one also laid out by \cite{corns2007skew}, of $S_t, R_t \in \mathbb R$, $l(t,T,0)$  a fixed location shift, with the process evolving on $[t,T]$.
With such understanding, since we do know the marginals of $S$, once $t$ is fixed we can easily calculate that for $t < s  < T$ we have  $\E[e^{-r(s-t)} S_s| \mathcal F_t] \neq S_t$ in law, and the martingale property  fails.

To wrap up the argument let us then show that there is no stochastic process $S=(S_t)_{t \geq 0}$ such that the expression \eqref{eq:GSBMcondQ} can hold for all times $s,t$ such that $0 \leq t < s \leq T$.  An easy way is the following. Without loss of generality we set $r=0$,
and suppose by contradiction that $S$ exists.  If that was the case we should have 
\begin{equation}\label{eq:GSBMcondtT}
 S_T= S_s\exp\left( -\sigma^2 (T-s)/2  + \sigma (X_T-X_s)-l(s,T; R_s) \right)
\end{equation}
and 
\begin{equation}\label{eq:GSBMcond0T}
 S_T= S_t\exp\left( -\sigma^2 (T-t)/2  + \sigma (X_T-X_t)-l(t,T; R_t) \right),
\end{equation}
but also
\begin{equation}\label{eq:GSBMcond0t}
S_s= S_t\exp\left( -\sigma^2 (s-t)/2  + \sigma (X_s-X_t)- l(t,s; R_t) \right).
\end{equation}
Substituting \eqref{eq:GSBMcond0t} in \eqref{eq:GSBMcondtT} we obtain
\begin{equation}\label{eq:GSBMcond0T2}
S_T= S_t\exp\left( -\sigma^2 (T-t)/2  + \sigma (X_T-X_t)-l(t,s; R_t)-l(s,T; R_s) \right).
\end{equation}
Comparing \eqref{eq:GSBMcond0T} and \eqref{eq:GSBMcond0T2} it  must hold $l(t,s; R_t)+l(s,T; R_s) = l(t,T; R_t)$, i.e. $l$ must be additive, which is clearly false. Therefore no process $S$ can satisfy \eqref{eq:GSBMcondQ}; or, better put, \eqref{eq:GSBMcondQ} can only be possibly satisfied if $t=0$, falling back to the circumstance of $S$ not being a martingale.

In addition to this main issue, there is also a subtle but profound mistake in the option pricing formula  \eqref{eq:GSBMcondQ}, that makes its applicability only partial. 
Following is the original
 \cite{zhu2018new} call option pricing formula, using the same notation as their original version presented in their Proposition 3.1, Equation (19):
\begin{equation}\label{eq:SNcallprice}
    C(S,t)=S_t M_1(b_1)-Ke^{-r(T-t)}M_2(b_2),
\end{equation}
where
$$
M_1(b_1)=\int_{b_1}^{+\infty} e^{-g(m)}L_1(u,m)\,du, \qquad M_2(b_2)=\int_{b_2}^{+\infty} L_2(u,m)\,du,
$$
$$
L_1(u,m)=\phi(u)\Phi(F_1)+e^{-2m\delta} \phi\biggl( \frac{\sigma \sqrt{T-t} \, u +2m\delta}{\sigma\sqrt{T-t}}\biggr)\Phi(F_2),
$$
$$
L_2(u,m)=\phi(u)\Phi(G_1)+ \phi\biggl( \frac{\sigma \sqrt{T-t}\,u+2m\delta}{\sigma\sqrt{T-t}}\biggr)\Phi(G_2),
$$
$$
F_1=\mbox{sgn}(\delta)\frac{\delta [\sigma \sqrt{T-t}\, u+(T-t)\sigma^2]+m }{\sigma \sqrt{(T-t)(1-\delta^2)}},$$ 
$$F_2=\text{sgn}(\delta)\frac{\delta [\sigma \sqrt{T-t}\,u+(T-t)\sigma^2+2m\delta]-m }{\sigma \sqrt{(T-t)(1-\delta^2)}},
$$
$$
G_1=\text{sign}(\delta)\frac{\delta \sigma \sqrt{T-t}\,u+m }{\sigma \sqrt{(T-t)(1-\delta^2)}}, \quad 
G_2=\text{sgn}(\delta)\frac{\delta [\sigma \sqrt{T-t}\, u+2m\delta]-m }{\sigma \sqrt{(T-t)(1-\delta^2)}},
$$
$$
b_{1,2}=\frac{\log\bigl(\frac{K}{S_t}\bigr)-\bigl(r\pm \frac{\sigma^2}{2} \bigr)(T-t)+g(m) }{\sigma\sqrt{T-t}}, \qquad m=|W_{2,t}|,
$$
\begin{equation}\label{eq:zhuhecall}    
g(m)=\log \Phi(C_1)+e^{-2m\delta}\Phi(C_2), \qquad C_{1,2}=\text{sgn}(\delta)\frac{\pm m+(T-t)\sigma^2\delta}{\sigma\sqrt{T-t}}.
\end{equation}
Firstly, we point out a clear typo: $ g(m)=\log \Phi(C_1)+e^{-2m\delta}\Phi(C_2)$ in the above should be instead $g(m)=\log \left(\Phi(C_1)+e^{-2m\delta}\Phi(C_2)\right)$, as $g$ is essentially a renaming of $l$.

As explained, equation \eqref{eq:zhuhecall} has been derived under the wrong presumption that the laws of an AGSBM started at time $t$, with initial condition $|W_{2,0}|=m>0$, serve as the time $t$ transition probabilities of some martingale process $S=(S_t)_{t \geq 0}$, for which   \begin{equation}\label{eq:CSBM}C(S_t,t)=\E[e^{-r(T-t)}(S_T-K)^+| \mathcal F_t]\end{equation}
for all $T,K>0$, $t <T$, with $C$ the function in \eqref{eq:SNcallprice}.


Even ignoring all the discussion so far on the violation of arbitrage,  there is a very simple point making apparent why $C$ in \eqref{eq:SNcallprice}--\eqref{eq:zhuhecall} cannot satisfy \eqref{eq:CSBM}.  The skew Brownian motion is not a stationary process, an argument the authors themselves correctly use to revise the original \cite{corns2007skew} call option expression. However, Zhu and He apparently do not realize stationarity \emph{is} present in \eqref{eq:SNcallprice}--\eqref{eq:zhuhecall} as such expression depends on the initial and final time is only through $T-t$. These two facts cannot come together: the conditional expectation of a measurable functional of a non-stationary process (the SBM) cannot exhibit time-stationarity. 

Now, in view of the foregoing discussion, upon assuming $t=0$,  (or $t$ any other initial fixed value), be the initial valuation time, and starting $X$ at zero when $t=0$ so that $S_0$ is the initial value, and also forcing  $m=0$, in \eqref{eq:SNcallprice}--\eqref{eq:zhuhecall} one would hope that such a formula (with the mentioned typo corrected) coincides with our pricing equation \eqref{eq:call_m0}.
Unfortunately this is not the case either. Formula \eqref{eq:SNcallprice}--\eqref{eq:zhuhecall} is the correct one only for $\delta \in [0,1]$ but it is wrong for negative $\delta$. A  way to quickly check this is to observe that the distribution classes SN$(x, \delta/\sqrt{1-\delta^2} )$ and SN$(x, -\delta/\sqrt{1-\delta^2} )$ are obviously different, whereas $C(S_t, t)$ provided in \eqref{eq:zhuhecall} \emph{only depends on} sgn$(\delta) \delta =|\delta|$. Apparently, the unfortunate  oversight $\delta^2/\delta=|\delta|$ has been committed when moving from  (23)--(24) to  (19) in Proposition 3.1. of \cite{zhu2018new}, thus making \eqref{eq:SNcallprice} strictly incorrect. Fixing the algebra with $\delta^2/\delta=\delta$ finally recovers our formula. 

The slip of adding an absolute value where there should not be one, does not of course affect the formula when $\delta$ is positive, but it is nonetheless particularly damaging, as a negative $\delta$ is what one generally looks for. For example, according to Proposition \ref{prop:skew}, in order to obtain a negatively skewed implied volatility for equity prices, in a skew-normal model we must have $\delta <0$, and the sign error in the  formula \eqref{eq:zhuhecall} would make instead the surface look identical to that corresponding to $-\delta >0$. 

In Figure \ref{fig:pcterr} we report the percentage error between Zhu and He formula and the correct one, at fixed values of $\delta<0$ and $\sigma$, across different log-moneynesses. The values are  positive and monotone increasing, so that a higher mispricing occurs for deep OTM calls. In Figure \ref{fig:pcterr2} we instead provide the ATM prices and pricing error as a function of $\delta$. Observe the incorrectly symmetric shape of the \cite{zhu2018new} in the left panel. 

\begin{figure}[htbp]
    \centering
    \includegraphics[width=0.5\textwidth]{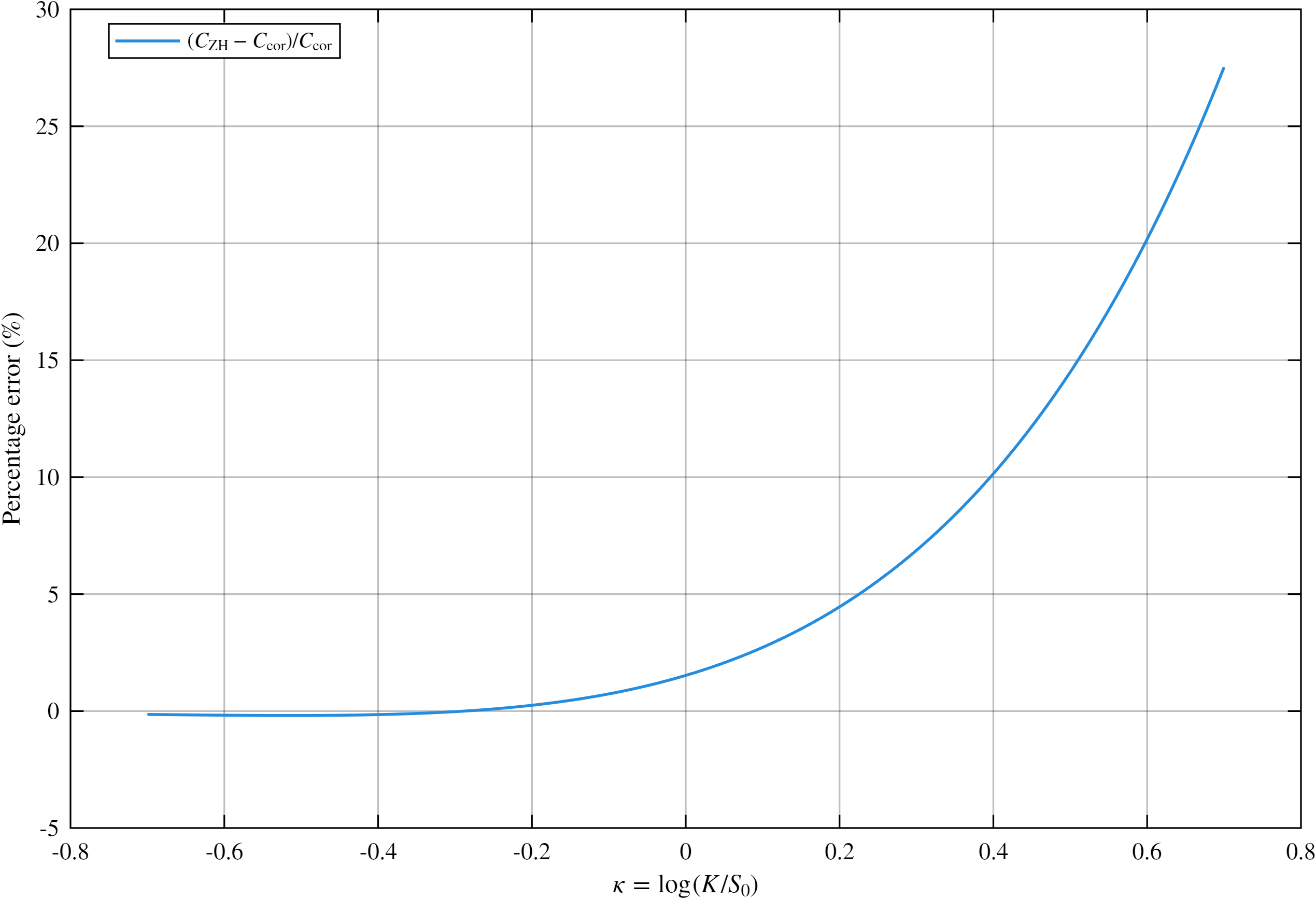}
    \caption{\footnotesize{Percentage error of option prices between Zhu and He incorrect formula and the correct  one; $\delta=-0.6$, $S_0=100$, $\sigma=0.5$, $T=1$, $r=0$.}}
    \label{fig:pcterr}
\end{figure}

\begin{figure}[htbp]
    \centering
    \includegraphics[width=0.75\textwidth]{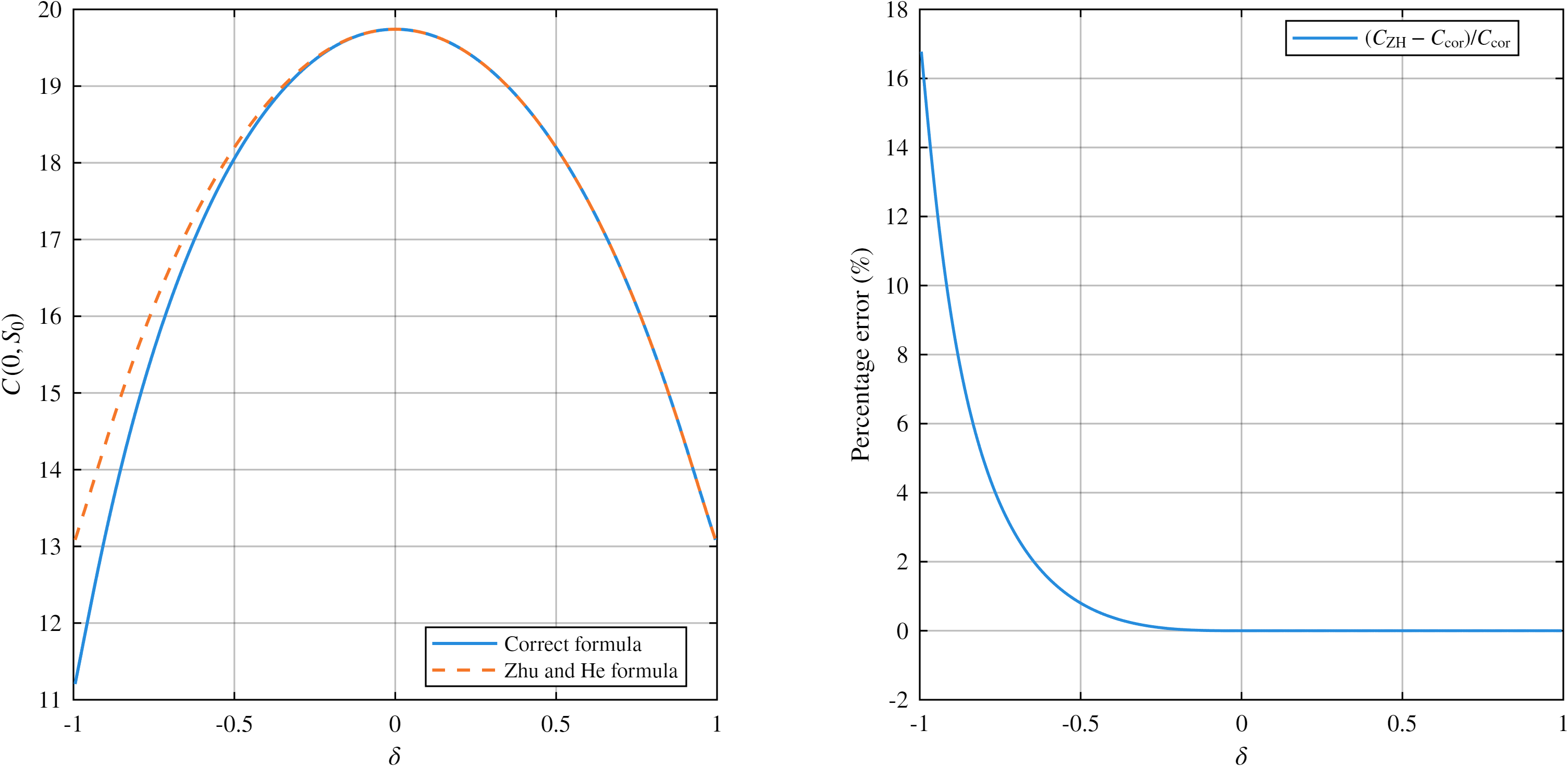
}
    \caption{\footnotesize{Comparison of ATM call option prices  as a function of $\delta$ of Zhu and He  formula and the correct one (left panel) and the corresponding percentage error (right panel). $S_0=100$, $\sigma=0.5$, $T=1$, $r=0$. }}
    \label{fig:pcterr2}
\end{figure}

\end{document}